\newtheorem{condition}{Condition}
\theoremstyle{definition}
\newcommand{\reals}{\mathbb{R}}
\newcommand{\E}{\mathop{\mathbb{E}}}
\newcommand{\ones}{\mathbf{1}}
\newcommand{\D}{\mathcal{D}}
\newcommand{\R}{\mathcal{R}}
\newcommand{\CS}{\mathcal{S}}
\newcommand{\OA}{\textrm{OA}}
\newcommand{\DG}{\textrm{DG}}
\newcommand{\DMI}{\textrm{DMI}}
\newcommand{\RBTS}{\textrm{RBTS}}
\newcommand{\twobin}{\{L, H\}}
\DeclareMathOperator{\PC}{P_{\text{cont}}}
\definecolor{gray}{gray}{0.5}
\definecolor{lightred}{rgb}{1,0.6,0.6}
\definecolor{darkgreen}{rgb}{0,0.5,0}
\definecolor{myorange}{rgb}{0.8,0.7,0.5}
\definecolor{darkblue}{rgb}{0.0,0.0,0.5}
\tikzset{notestyleraw/.append style={inner sep = 2pt}}
\newtheorem{lemma}{Lemma}
\newtheorem*{lemma*}{Lemma}
\newtheorem{corollary}{Corollary}
\newtheorem{theorem}{Theorem}
\newtheorem*{theorem*}{Theorem}
\newtheorem{proposition}{Proposition}
\newtheorem{observation}{Observation}
\theoremstyle{definition}\newtheorem{definition}{Definition}
\theoremstyle{definition}
\theoremstyle{definition}
\begin{document}

 \title{Peer Prediction with More Signals than Reports}
 \author{%
Rafael Frongillo \\ University of Colorado Boulder \\ \texttt{raf@colorado.edu} \and
Ian Kash \\ University of Illinois at Chicago \\ \texttt{iankash@uic.edu} \and
Mary Monroe \\ University of Colorado Boulder \\ \texttt{mary.monroe@colorado.edu}}
 \date{}
 
 \maketitle

\begin{abstract}
  Peer prediction mechanisms are typically proposed and analyzed under the assumption that the report and signal spaces are identical.
  In practice, however, agents often observe richer information which they then map to a coarser report space.
  Motivated by this discrepancy between theory and practice, we initiate the study of peer prediction mechanisms with signal spaces that are richer than the report space.
  We begin by formalizing a model with real-valued signals and binary reports.
  In this setting, it is natural to study symmetric threshold strategies, where agents map their signals to binary reports according to a single real-valued threshold.
  For several well-known binary-report peer prediction mechanisms, we show that most equilibria under the original assumption of binary signals are no longer equilibria in our model.
  Furthermore, dynamic analysis proves that some of the remaining thresholds are unstable.
  These results extend beyond real-valued signals and binary reports to settings where the signal space is finer-grained than the report space.

  While the results above suggest important limitations for the deployment of existing peer prediction mechanisms in practice, we also use them to develop a new, more robust mechanism. 
  This mechanism generates a larger number of stable threshold equilibria under our model, thus allowing the designer more flexibility in choosing how agents map their signals to reports.
\end{abstract}

\section{Introduction}

Eliciting high-quality information from agents without verification is a well-studied problem, motivated by settings such as peer grading or data labeling where the ground truth is unknown, subjective, or hard to acquire.
\emph{Peer prediction} mechanisms aim to incentivize truthful behavior of agents by paying them based on how their reports correlate with other submissions, without the need for ground truth.
Agents are typically modeled as receiving a signal, for example $H$ (``high'') or $L$ (``low'') for the quality of an essay, from a fixed joint probability distribution $P$ unknown to the mechanism.
Upon seeing their signal, agents update their beliefs and, under the right conditions, are incentivized to truthfully report the signal they receive.

In practice, however, the signal space of information that agents receive is much richer than the small number of categories typically studied in the literature.
An essay contains far more information than a simple $H$ or $L$ impression.\footnote{One implication of a more nuanced signal space which we do not focus on is the well-known issue of spurious correlation, such as adopting a strategy of conditioning one's report on the first word of the essay; see \S~\ref{sec:discussion}.}
Existing results often rely on the assumption of a one-to-one mapping from signals to reports, making it unclear how one would expect these mechanisms to perform in practice.

To address this question, we introduce and study a model of peer prediction with richer signal spaces.
We primarily focus on the binary-report setting, where a designer wishes to elicit binary information (e.g. is an essay's equality $H$ or $L$?) from an agent about a task.
Here we assume that agents receive a real number as their signal and then based on it select a report of $H$ or $L$.
This real-valued model captures settings where agents form a range of posterior beliefs about a task.
We focus on a natural class of threshold strategies, where an agent reports $H$ if and only if their signal exceeds a particular threshold.
For example, an agent may report $H$ if they deem the quality of an essay to be above 0.7, and $L$ otherwise.

We generally find that mechanisms which are truthful under the original binary signal model fail to yield correct incentives under the more nuanced model with real-valued signals.
Intuitively, agents with signals near the threshold may have an incentive to misreport, as the separation that enforces truthfulness breaks down.
We give necessary and sufficient conditions for a threshold to be an equilibrium, i.e., not suffer from this issue, in a variety of peer prediction mechanisms.
Furthermore we study \emph{dynamics} arising if a small fraction of agents with the greatest incentive to do so change their report, causing the threshold to slowly shift. 
We then distinguish \emph{stable} equilibria under our dynamics as reasonable to arise naturally, in contrast to unstable equilibria.

\subsection{Motivating example: Output Agreement (OA)}

Consider Output Agreement (OA), which provides a reward of 1 if an agent's binary report of $L$ or $H$ agrees with a peer and 0 otherwise~\citep{von2004labeling,von2008designing}.
Traditional analysis assumes the signal space is the same as the report space, i.e. $\{L,H\}$.
In contrast, our real-valued signal model assumes each agent $i$ recieves a real-valued signal $X_i$ from some joint distribution. 
As a running example, let $X_i = Z + Z_i$ with $Z$ and $Z_i$ both unit normals, where $Z$ is shared by all agents and $Z_i$ is signal noise unique to agent $i$.
In a peer grading setting the signals $X_i$ might capture a fine-grained assessment of the quality of an assignment.

The mechanism designer has some intended mapping between the real-valued signals and binary reports.
Given the interpretation of $H$ and $L$ as ``high'' and ``low'',  it is natural for this mapping to have a real-valued threshold $\tau$, with values above the threshold mapping to $H$, and below mapping to $L$.
In peer grading this assumption naturally corresponds to an intution that a ``good-enough'' assignment might be graded as satisfactory while a poor one would be deemed unsatisfactory.

Let us suppose first that the mechanism designer announces a desired threshold of $\tau=0$.
In the binary signal model with this threshold and distribution (i.e. agents only receive a signal of $H$ or $L$ as determined by $\tau$), OA is truthful: after receiving $H$, an agent believes it is more likely that their peer received $H$ than $L$ and vice versa ($\Pr[H \mid H],\Pr[L \mid L] > 0.5$).

In the real-valued case, we can no longer speak of OA as truthful because it is no longer a direct revelation mechanism, but we can ask whether reporting as intended by $\tau$ is a Bayes-Nash equilibrium.
It turns out $\tau=0$ is such an equilibrium.
For example, if $x_i = 0.11$, we have $\Pr[X_j > 0 \mid x_i] \approx 0.52 > 0.5$, so reporting $H$ as intended is indeed the best response.

Now suppose the desired threshold were $\tau=0.1$.  In the binary signal model, $\Pr[H \mid H] = \Pr[X_j > 0.1 \mid X_i > 0.1] \approx 0.64$ and $\Pr[L \mid L] \approx 0.69$ are both greater than $0.5$, so OA remains truthful.
Now in the real-valued model consider an agent receiving a signal of $x_i = 0.11$.  As $x_i$ is just above the threshold, the intended report is $H$.
But  $\Pr[X_j > 0.1 \mid x_i] \approx 0.48 < 0.5$.  Thus the best response is $L$, so reporting according to $\tau$ is {\em not} an equilibrium.
The above analysis will give the same qualitative result for any finite nonzero $\tau$, intuitively because for $\tau > 0$ an agent whose signal lies just above the threshold still assigns higher probability to a peer's signal falling below $\tau$ than above it; analagous reasoning applies for negative thresholds.
Thus, the \emph{only} finite initial threshold $\tau$ for which reporting according to $\tau$ is an equilibrium is $\tau=0$.

Given that $\tau=0.1$ is not an equilibrium, what should we expect to happen if the mechanism designer announces it?
We saw that agents with signals $x_i$ slightly above $0.1$ have an incentive to report $L$ instead of $H$.  This incentive weakens as $x_i$ grows, so it is natural to expect only a small fraction of agents to deviate from the intended strategy, specifically those with signals closest to but above $0.1$.  But effectively, this raises our threshold from $\tau_0 = 0.1$ to some higher value, say $\tau_1 = 0.13$.  Since the same argument applies to our new threshold, in future tasks further deviations could lead to $\tau_2 = 0.16$.  Thus, we should expect that, over time, the system will converge toward the ``trival'' equilibrium at $\tau=\infty$ where agents report $L$ regardless of their signal.
The same logic shows that announcing $\tau < 0$ will lead toward the trivial ``always $H$'' equilibrium at $\tau = -\infty$.

From a dynamical systems perspective, the above shows that the $\tau = 0$ equilibrium is {\em unstable} and can only be realized by starting exactly at it.  As there are no other finite equilibria, any variation or small error in the model will send the threshold toward infinity.
Given the inherent noisiness of many peer prediction applications, we argue similar to \citet{shnayder2016measuring} that {\em the only reasonable equilibra in practice are stable ones}.
Stability is thus a crucial equilibrium refinement for peer prediction.

To summarize, the standard analysis of OA in the binary signal model concludes that the mechanism has desirable incentive properties as long as agents believe others are more likely to share their signal than have the opposite.  In contrast, by introducing our more nuanced real-valued signal model, we have seen that the only reasonable outcome to expect from OA is a ``trivial'' equilibrium where no information is gained.  Thus, at least in this example, {\em this shift in perspective takes OA from useful to entirely useless}.

\subsection{Results}

We explore the generality of our observation that moving to a real-valued signal model substantially changes the incentives and behavior of peer prediction mechanisms.
We begin in \S~\ref{sec:OA-Model} by formally analyzing the Output Agreement (OA) mechanism mentioned above.
As discussed, not only do uninformative equilibria exist where all agents submit the same report, but they also remain stable under dynamics.
That is, at least for some natural distributions, we would expect any initial threshold to drift toward $\pm\infty$ over time, eventually leading to blind agreement. 

In the binary signal setting, a multi-task mechanism proposed by~\citet{dasgupta2013crowdsourced} (DG) fixes the uninformative equilibria problem in OA.
The DG mechanism adds a penalty corresponding to the empirical frequency of agents' reports across other tasks, effectively discouraging blind agreement and leading to strictly worse payments in uninformative equilibria.
Under our richer signal model, we provide necessary and sufficient conditions for a finite threshold to be an equilibrium.
Under some conditions like monotonicity, these equilibria are stable, while the uniformative equilibria are unstable.
However, we find that in many cases there exists only one such equilibrium determined by the underlying information structure, meaning the designer has no control over how agents decide to map their signals to $H$ versus $L$. 
We conclude that DG is potentially more useful than OA, but only allows limited control over the equilibrium threshold. 

We perform similar analyses for two other more complex mechanisms, the Robust Bayesian Truth Serum (RBTS)~\citep{witkowski2012robust} and Determininant Mutual Information (DMI) \citep{kong2020dominantly,kong2024dominantly}.
In both cases, similar characterizations of equilibria and stability emerge as DG.
For all four mechanisms we augment our general analysis with a worked example of the Gaussian case and numerical illustrations that exhibit similar results under skewness and multimodality.
Finally, we demonstrate that these limitations persist in settings of finer-grained signals beyond the real-valued case, as well as in non-binary report settings using the Correlated Agreement (CA) mechanism~\citep{shnayder2016informed}.
Collectively, then, we find that current peer prediction mechanisms are inflexible under a richer signal space in many informational settings.

A natural next question, then, is whether we can come up with a new framework that grants the designer more choice in how agents map information to reports. 
In Section~\ref{sec:positive-results}, we propose the following report-mapping framework: ask agents for a raw, non-binary report using the Correlated Agreement mechanism, e.g., agents submit $\hat L$, $\hat M$ (medium), and $\hat H$. 
Then the designer can impose a mapping back to a final binary report of $L$ or $H$, with potentially more flexibility.
Using our results for non-binary reports, we show that CA parameters can be tuned within this mapping framework to robustly generate a larger number of stable thresholds than binary-report mechanisms can achieve under the Gaussian model.

Our paper serves as a launching point for uncovering how behavior changes under peer prediction in the real world.
We show that under a more realistic signal model, the designer has much less flexibility to establish a mapping between information and reports under traditional binary-report mechanisms.
We also show that some of this flexibility can be recovered by leveraging the Correlated Agreement mechanism and mapping larger report spaces back to a binary decision.
Our results clarify the implications of choosing various mechanisms in practice, and provide tools to better control the semantics of their reports. 

\subsection{Related Work}
For a general overview of peer prediction, we direct the reader to \citet{faltings2023game,lehmann2024mechanisms,frongillo2024recent}.
Most relevant to our work is the distinction between peer prediction mechanisms which are {\em minimal}, in that agents only report their signal (as in OA), and those where additional information about agent beliefs is reported (e.g.~the agent's posterior belief about the report of another agent as in the Bayesian Truth Serum~\citep{prelec2004bayesian} and its variants).
Another important dimension is whether the mechanism is for a single task or a larger \emph{multi-task} collection where reports from unrelated tasks can be compared to improve incentives.
We examine two multi-task mechanisms (DG and DMI) and a non-minimal one (RBTS).
While more robust than OA, our results show that neither of these structures avoids the issues our work raises.

To our knowledge, our approach of assuming agents have a fundamentally richer signal than they are asked to reveal is novel.
The idea of using dynamical stability for equilibrium refinement in peer prediction was introduced by \citet{shnayder2016measuring}.
That work looks at a population of agents playing strategies from a discrete set, such as $\{$ truthful, always $H$, always $L$ $\}$.
Using replicator dynamics, they measure the stability of equilibria by their basin of attraction, the volume of initial conditions leading to that equilbrium.
At a very high level, their conclusions bear similarity to ours: useful equilibria are less robust for OA, and multi-task mechanisms like DG are more robust.

Looking closer, two key differences between our work and \citet{shnayder2016measuring} are the signal model and strategic model.
For the signal model, they consider a discrete signal space matching the report space, as usual.
For the strategic model, they consider agents learning whether to switch to another strategy in a discrete set, whereas we allow agents to choose any threshold strategy.
As a result, our results differ substantially in some cases: for OA, the truthful equilibrium is unstable even in a technical sense (measure zero basin of attraction), and for DG, ``truthfulness'' as given by the mechanism designer's desired threshold need not be an equilibrium.

The model and tools in our paper also bear some resemblance to work in the epistemic democracy literature. 
Specifically,~\citet{duggan2001bayesian} and~\citet{meirowitz2002informative} study a common-value voting setting, where there is a ground truth ``correct'' alternative that all agents prefer, and they each receive signals about which alternative that is.
As in our paper, the alternative space is binary while the signal space is continuous.
The authors similarly identify threshold equilibria, where agents map their signals to a vote for either alternative according to a cutoff point, and prove that under continuity and monotonicity conditions over the prior densities, there is a unique symmetric Bayes-Nash equilibrium characterized by a threshold. 
We note that though the tools are similar, the setting and goals of the papers are quite different:
the authors aim to show that the ``right'' alternative is chosen with high probability as the number of agents grows large.

\section{Output Agreement}
We begin by analyzing the popular peer prediction mechanism Output Agreement (OA) when signals are real-valued and reports are binary.
Output Agreement is a minimal, single task mechanism: each agent $i$ submits a report for a task and receives a positive payment if their report matches that of another agent $j$ on the same task.
In the binary signal model, truthful reporting forms a Bayes-Nash equilibrium under some assumptions about the information structure.
Specifically, truthfulness holds under \emph{strong diagonalization}, where the conditional probability of another signal is maximized by the signal being conditioned on ($\Pr[H \mid H] > \Pr[L \mid H]$ and $\Pr[L \mid L] > \Pr[H \mid L]$).
However, \emph{uninformative} equilibria where agents misreport their information also exist.
Specifically, agents can coordinate to all submit the same report and each receive a maximum payment.
Nonetheless, Output Agreement remains popular because of its simplicity.
We characterize OA under our richer signal space model to help practitioners understand how behavior occurs in the real world.

\subsection{Model}
\label{sec:OA-Model}

We begin by introducing the model that, with minor variations to accommodate the form of different peer prediction mechanisms, will be used througout the paper.
There are $n$ agents; each agent $i$ receives a signal $X_i \in \reals$ representing information gained about a shared task.
The signals are drawn from a joint distribution $\D$ that is symmetric in the sense that each agent $i$ has (1) the same marginal with CDF $F(x) = \Pr[X_i \leq x] $, and (2) the same posterior distribution $\beta(x) = \Pr[X' = \cdot \mid X_i = x]$.
Identical marginals are realistic when the signals are exchangeable, e.g. when they have the same conditional distribution over some latent variable $\theta$.

Let the report space be $\R = \{L, H\}$.
Each agent $i$ calculates a report $r_i \in \mathcal{R}$ for the task according to a deterministic strategy $\sigma_i: \reals \to \mathcal{R}$ mapping from signal to report space.
The Output Agreement mechanism pays each agent $i$ an amount $M_{\OA}(r_i, r_j)$ as a function of reports $r_i, r_j$, where
\[ M_{\OA}(r_i, r_j) = \ones [r_i = r_j]. \]

Thus the (interim) expected utility for playing strategy $\sigma_i$ when $j$ plays according to $\sigma_j$ is
\begin{equation} \label{eq:expected-util-oa}
	U_i(\sigma_i, \sigma_j, x) = \E_{x' \sim \beta(x)} \ones[\sigma_i(x) = \sigma_j(x')].
\end{equation}

We consider a natural class of strategies, \emph{threshold strategies}, where agent $i$ reports $H$ if and only if their signal $x$ satisfies $x \geq \tau$ for some fixed threshold $\tau \in \reals \cup \{ \pm \infty \}$.
That is, for some $\tau$,
\[
	\sigma^{\tau}(x)=
	\begin{cases}
	L, & x \leq \tau \\
	H, & x > \tau.
	\end{cases}
\]
We will often denote a strategy directly by its threshold $\tau$ when it is clear. 
Thresholding is a natural way to assign semantic meaning to the discrete labels $H$ and $L$ according to the continuous signal space. 
Moreover, in many settings the mechanism designer themselves may announce a threshold to establish norms that they would like agents to follow.
For example, in peer grading a teacher may establish a higher threshold to encourage a high bar for good marks. 

\subsection{Equilibrium Characterization}
We are interested in characterizing threshold strategies which are \emph{symmetric  Bayes-Nash equilibria}.
That is, equilibria where each agent $i$ commits to the same threshold strategy $\sigma^{\tau}$.
For brevity, we refer to these as threshold equilibria.
Symmetric strategies are natural in our model since agents share the same ex-ante expected utilities.
\begin{definition}
	A threshold strategy $\sigma^{\tau}: \reals \to \R$ is a {\em threshold equilibrium} under OA if
\begin{equation} \label{cond:symmetric-strategy-equil-oa}
	\forall x \in \reals,\E_{x' \sim \beta(x)} \ones[\sigma^{\tau}(x) = \sigma^{\tau}(x')] \geq \E_{x' \sim \beta(x)} \ones[\overline{\sigma^{\tau}(x)} = \sigma^{\tau}(x')],
\end{equation}
where $\overline{H} = L, \overline{L} = H.$ 
\end{definition}

In the case where the signal $x$ the agent receives satisfies $x \leq \tau$, Condition~\eqref{cond:symmetric-strategy-equil-oa} simplifies to
\begin{equation*}
	\E_{x' \sim \beta(x)} \ones[x' \leq \tau] \geq \E_{x' \sim \beta(x)} \ones[x' > \tau], \mbox{ or equivalently }
	\Pr[X' \leq \tau \mid X = x] \geq 1/2.
\end{equation*}
Similarly if $x > \tau$, \eqref{cond:symmetric-strategy-equil-oa} simplifies to $\Pr[X' \leq \tau \mid X = x] \leq 1/2$.
Let $P(\tau; x) = \Pr[X' \leq \tau \mid X = x]$ be the probability another agent reports $L$ conditioned on seeing a signal $x$, for a fixed threshold $\tau$.
As a summary, then, Condition~\eqref{cond:symmetric-strategy-equil-oa} is equivalent to
\begin{align}
	\forall x \leq \tau, P(\tau; x) &\geq 1/2 \label{eq:equil-1-oa}, \\
	\forall x > \tau, P(\tau; x) &\leq 1/2 \label{eq:equil-2-oa}.
\end{align}

\paragraph{Results.}
Under these conditions, we first show that the thresholds $\tau^* = \pm \infty$ are always equilibria under Output Agreement.

\begin{proposition} \label{prop:oa-infinite-equil}
	$\tau^* = \pm \infty$ are both always threshold equilibria under OA. 
\end{proposition}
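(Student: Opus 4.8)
The plan is to observe that both extreme thresholds collapse the threshold strategy to a constant strategy, so that the agreement indicator in \eqref{cond:symmetric-strategy-equil-oa} is identically $1$ regardless of the realized signals, and then to check that the equilibrium inequality holds trivially. Concretely, I would first record that $\sigma^{+\infty}(x) = L$ for every $x \in \reals$ (since $x \le +\infty$ always) and $\sigma^{-\infty}(x) = H$ for every $x \in \reals$ (since $x > -\infty$ always). Thus under $\tau^* = +\infty$ every agent reports $L$, and under $\tau^* = -\infty$ every agent reports $H$.

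Next I would verify Condition~\eqref{cond:symmetric-strategy-equil-oa} via the reduction to \eqref{eq:equil-1-oa}--\eqref{eq:equil-2-oa}. For $\tau^* = +\infty$: we have $P(+\infty; x) = \Pr[X' \le +\infty \mid X = x] = 1 \ge 1/2$ for all $x$, so \eqref{eq:equil-1-oa} holds for every $x \le +\infty$ (i.e.\ every $x \in \reals$), while \eqref{eq:equil-2-oa} is vacuous because no $x$ satisfies $x > +\infty$. Symmetrically, for $\tau^* = -\infty$: $P(-\infty; x) = \Pr[X' \le -\infty \mid X = x] = 0 \le 1/2$, so \eqref{eq:equil-2-oa} holds for all $x > -\infty$, and \eqref{eq:equil-1-oa} is vacuous. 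Alternatively, one can argue directly from \eqref{eq:expected-util-oa}: the left-hand side of \eqref{cond:symmetric-strategy-equil-oa} equals $\E_{x'\sim\beta(x)}\ones[\text{report}=\text{report}] = 1$ and the right-hand side equals $\E_{x'\sim\beta(x)}\ones[\overline{\text{report}}=\text{report}] = 0$.

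There is essentially no substantive obstacle here; the proposition is a basic sanity check confirming that the well-known uninformative (``blind agreement'') equilibria of OA in the binary-signal model persist in the real-valued model. The only point requiring a little care is making the conventions around the infinite thresholds explicit — that $\sigma^{\pm\infty}$ is a genuine (constant) threshold strategy and that the simplified conditions \eqref{eq:equil-1-oa}--\eqref{eq:equil-2-oa} are interpreted with the appropriate quantifier ranges becoming vacuous — so I would state those conventions before invoking them. I expect the write-up to be only a few lines.
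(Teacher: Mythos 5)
Your proposal is correct and follows essentially the same argument as the paper: check that $P(\pm\infty;x)$ is identically $1$ (resp.\ $0$) so condition \eqref{eq:equil-1-oa} (resp.\ \eqref{eq:equil-2-oa}) holds trivially, while the other condition is vacuous. The extra observation that the thresholds collapse to constant strategies with agreement probability one is a fine framing but does not change the substance.
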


\begin{proof}
	Let $\tau^* = \infty$. 
	Then for all signals $x$ such that $x \leq \tau^*$, $P(\tau^*; x) = 1 \geq 1/2$ (since any $X' \in \reals$ satisfies $X' \leq \tau^*$ with probability one).
	Meanwhile, Statement~\ref{eq:equil-2-oa} is vacuous since no signal $x \in \reals$ satisfies $x > \infty$. 
	A similar argument follows for $\tau^* = -\infty$: for all signals $x > \tau^*$, $P(\tau^*; x) = 0 \leq 1/2$, while Statement~\ref{eq:equil-1-oa} is vacuous since no signal $x \in \reals$ satisfies $x < -\infty$.
\end{proof}

These thresholds exactly correspond to uninformative equilibria in the binary setting: if $\tau^* = \infty$, agents always report $L$, and if $\tau^* = -\infty$, agents always report $H$.
Under some smoothness conditions on the predictive posterior function, we can furthermore characterize all finite threshold equilibria.
To do so, we define the function $G(x) = P(x; x) = \Pr[X' \leq x \mid X = x]$ as the probability of another signal lying below $x$, conditioned on $x$. 
In Output Agreement, we find that threshold equilibria are characterized by thresholds $\tau$ such that $G(\tau)$ crosses 1/2.  More precisely, we give the following necessary and sufficient conditions.

\begin{theorem} \label{thm:oa-equil}
	Let finite threshold $\tau$ be given and $P(\tau; x)$ be continuous in $x$.
	If $\tau$ is a threshold equilibrium under the OA mechanism then $G(\tau) = 1/2$.
	Conversely, if $G(\tau) = 1/2$ and either (a) $P(\tau; x)$ is monotone decreasing in $x$
	or (b) $P(\tau; x) - 1/2$ has a single crossing of $0$ from positive to negative 
	then $\tau$ is a threshold equilibrium under the OA mechanism.
\end{theorem}

\begin{proof}
	For necessity, assume that $\tau$ is a threshold equilibrium, so
	Equations \eqref{eq:equil-1-oa} and \eqref{eq:equil-2-oa} hold.
	Since $P(\tau; x)$ is continuous over $x$, we must have $\lim_{x \to \tau^+} P(\tau; x) = \lim_{x \to \tau^-} P(\tau; x) = 1/2.$
	But $\lim_{x \to \tau^+} P(\tau; x) \leq 1/2$ and $\lim_{x \to \tau^+} P(\tau; x) \geq 1/2$, so we must have $P(\tau; \tau) = G(\tau) = 1/2$.

	For sufficiency, assume $G(\tau) = 1/2$.  Equivalently $P(\tau; \tau) - 1/2 = 0$, so (a) implies (b) and the single crossing is at $\tau$.  Thus assume (b) holds and
	take some $x > \tau$. 
  	By single crossing, $P(\tau; x) \leq P(\tau; \tau) = G(\tau) = 1/2$.
  	Similarly, if $x \leq \tau$, $P(\tau; x) \geq P(\tau; \tau) = 1/2$.
  	This establishes Equations \eqref{eq:equil-1-oa} and \eqref{eq:equil-2-oa}, so $\tau$ is a threshold equilibrum.
\end{proof}

The stronger monotonicity condition holds when a higher signal consistently increases the probability that a peer will receive a signal above the threshold.
Meanwhile, the single-crossing condition allows for local belief fluctuations: a slightly higher signal might decrease confidence that a peer will report `H', but there is no ambiguity regarding the optimal action. 
For example, in peer grading an agent may read an essay which is well-written but uses some amount of em-dashes: this may slightly \emph{decrease} their confidence that someone else would give it a high mark because there is some chance it was generated by AI. 

To better understand the conditions, we also observe that the derivative of expected utility with respect to agent $i$'s choice of threshold $\tau$ in Equation~\eqref{eq:expected-util-oa} is $f(x)(2G(x)-1)$, where $f$ is the PDF of $F$ and therefore non-negative.  
Thus the necessary condition corresponds to the first order condition of optimizing the best response threshold and the sufficient conditions ensure this optimization is concave and quasiconcave respectively.  
However, the proof provided establishes optimality in the space of all possible strategies (which is required to be a threshold equilibrium), not merely that the chosen threshold is the optimal threshold.

More broadly, this result shows that ``truthfulness'' in the sense of following the strategy prescribed by the mechanism designer is fragile in this setting.  Only a sharply limited set of thresholds can be implemented in equilibrium---in some cases a single one.  This means the mechanism designer's ability to choose the ``meaning'' of $H$ and $L$ can be quite limited in practice.

\subsection{Dynamics}
We have characterizations of both the uninformative infinite equilibria and more interesting finite equilibria.
Which are more likely to occur?
To answer this question, we turn to modeling the dynamics of the system.
We are interested in studying how agents will behave under our model over time, as a tool for equilibrium refinement.
That is, we want to know what threshold agents will land on as they continue to grade new batches of papers or submit new labels. 
Since it is likely that the mechanism shares empirical results before all agents equilibriate, we consider a dynamic setup where a small fraction of agents are able to best respond at each time step. 

Formally, consider the following discrete best response dynamic: the mechanism publishes an initial threshold $\tau(0)$, the ``norms" with which agents should interpret their signals as $H$ and $L$.
At each time step, a small, random fraction $\Delta t$ of agents are able to commit to a new best response threshold strategy $\hat{\tau}(t)$ against the current threshold $\tau(t)$.
Meanwhile, the rest of the agents stick to the threshold they used previously, whether that be a previous best response or the initial $\tau(0)$.
Note that since $\hat{\tau}(t)$ is a best response, its calculation depends on (1) the peer prediction mechanism's payment scheme and (2) the current threshold $\tau(t)$.
Finally, the empirical frequency of reports is updated to reflect this new mixture of thresholds, and the process repeats.
This dynamic corresponds to the following equation:
\begin{align}
	\tau(\Delta t) &= (1 - \Delta t)\tau(0) + \Delta t \hat{\tau}(0) \nonumber \\ 
	\tau(2 \Delta t) &= (1 - \Delta t) \tau(\Delta t) + \Delta t \hat{\tau}(\Delta t) \nonumber \\
	&\ldots \nonumber \\
	\tau(t + \Delta t) &= (\Delta t) \hat{\tau}(t) + (1 - \Delta t) \tau(t). \label{eq:discrete-dynamic}
\end{align}

We note that $\hat{\tau}(k \Delta t)$ is technically a best response to the \emph{mixture} of thresholds at the previous time step $k-1$, rather than the probability distribution over thresholds.  Ignoring this for the moment and taking the limit as $\Delta t \to 0$ of the discrete system in Equation \ref{eq:discrete-dynamic}, we end up with the following continuous best response dynamic:
\begin{equation} \label{eq:continuous-dynamic}
	\dot \tau = \hat{\tau} - \tau.
\end{equation}
While not, strictly speaking, the correct update from best response dynamics, this update rule has the appealing form of the threshold slowly drifting toward the best response.
Alternatively, since agents near the threshold have the strongest incentive to change their strategy, this update rule could be intepreted as a form of quantal response.
In any case, our focus in this work is on what sort of equilibria we should expect to end up at rather than the exact process the system takes to get there, so this choice of dynamics seems simple, natural, and amenable to analysis.

We begin our study of OA under the dynamics described by Equation \eqref{eq:continuous-dynamic} with the following observation.
A nice property of Output Agreement (which relates to our sufficient conditions from Theorem~\ref{thm:oa-equil}) is that for a fixed threshold $\tau$, under decreasing monotonicity of $P(\tau;x)$ the best response over \emph{all} strategies is always a threshold even out of equilibrium.

\begin{proposition}
	Assume for a fixed threshold $\tau$ that $P(\tau; x)$ is strictly decreasing and continuous over $x$.
	If all agents are playing according threshold strategy $\tau$, the unique best response of an agent across all strategies $\sigma: \reals \to \R$ is to play according to threshold strategy $\hat \tau$ satisfying  $P(\tau; \hat \tau) = 1/2$.
\end{proposition}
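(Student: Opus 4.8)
The plan is to reduce ``best response over the whole strategy space'' to a pointwise binary choice at each signal value, and then invoke strict monotonicity to locate the unique threshold at which that choice flips. First I would fix a signal $x$ and suppose every other agent plays $\sigma^{\tau}$. If agent $i$ reports $L$, it agrees with a peer exactly when the peer's signal $x'$ satisfies $x' \le \tau$, which has probability $P(\tau;x)$ under $\beta(x)$; if it reports $H$, it agrees exactly when $x' > \tau$, probability $1 - P(\tau;x)$. Hence by \eqref{eq:expected-util-oa} the expected utility $U_i(\sigma,\sigma^{\tau},x)$ depends on $\sigma$ only through the value $\sigma(x)$, equalling $P(\tau;x)$ if $\sigma(x)=L$ and $1-P(\tau;x)$ if $\sigma(x)=H$. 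So a strategy $\sigma$ is a best response if and only if for every $x$ it selects the report attaining $\max\{P(\tau;x),\,1-P(\tau;x)\}$: that is, $\sigma(x)=H$ when $P(\tau;x) < 1/2$, $\sigma(x)=L$ when $P(\tau;x) > 1/2$, and either report when $P(\tau;x)=1/2$.

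Next I would bring in the hypotheses on $P(\tau;\cdot)$. Because $x \mapsto P(\tau;x)$ is continuous and strictly decreasing, $P(\tau;x)=1/2$ has at most one solution; the statement presumes one exists (equivalently, if the range of $P(\tau;\cdot)$ straddles $1/2$, the intermediate value theorem supplies a unique $\hat\tau$ with $P(\tau;\hat\tau)=1/2$). Strict monotonicity then gives $P(\tau;x) > 1/2 \iff x < \hat\tau$ and $P(\tau;x) < 1/2 \iff x > \hat\tau$. Feeding this into the pointwise characterization, every best response reports $L$ on $(-\infty,\hat\tau)$ and $H$ on $(\hat\tau,\infty)$, i.e.\ it coincides with $\sigma^{\hat\tau}$; the only ambiguity is the report at the single signal value $x=\hat\tau$, which is immaterial. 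Thus the unique best response is exactly the threshold strategy $\hat\tau$ satisfying $P(\tau;\hat\tau)=1/2$.

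I expect the argument to be short. The only delicate points are (i) the observation that a best response is determined pointwise, which is immediate once one notes that the payoff at $x$ depends on $\sigma$ only through $\sigma(x)$, and (ii) the existence of the crossing point $\hat\tau$ together with the harmless non-uniqueness of the report exactly at $\hat\tau$. Point (ii) --- essentially a bookkeeping matter about the boundary behavior of $P(\tau;\cdot)$ --- is the main thing to phrase carefully; everything else follows directly from strict monotonicity and continuity.
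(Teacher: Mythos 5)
Your proposal is correct and follows essentially the same route as the paper's proof: reduce the best response to a pointwise choice between $L$ (payoff $P(\tau;x)$) and $H$ (payoff $1-P(\tau;x)$), then use strict monotonicity and continuity of $P(\tau;\cdot)$ to identify the unique crossing $\hat\tau$ with $P(\tau;\hat\tau)=1/2$. Your explicit remarks about the existence of the crossing point and the immaterial tie at $x=\hat\tau$ are minor refinements of details the paper leaves implicit, not a different argument.
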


\begin{proof}
	If $\tau$ is the current threshold strategy that all other agents are following, an agent who receives signal $x$ will best respond with $H$ if $\Pr[X \leq \tau \mid X = x] \leq 1/2$, and $L$ otherwise. 
	Since $P(\tau; x)$ is strictly decreasing and continuous over $x$, there will be a unique point $\hat \tau$ such that $P(\tau; \hat \tau) = 1/2$, with $P(\tau; x) \geq 1/2$ for $x \leq \hat \tau$ and $P(\tau; x) \leq 1/2$ for $x > \hat \tau$.
	This threshold $\hat \tau$ thus corresponds to the best response strategy.
\end{proof}

We can now use concepts from dynamical systems and characterize equilibria as {\em stable} or {\em unstable} according to properties of the function $G$.  This distinction captures the behavior of the dynamics near the equilibrium: do they drive the system toward the equilibrium (stable) or away from it (unstable)?  More precisely, an equilibrium is (locally) stable if a sufficiently small perturbation yields dynamics that always return to it while it is unstable if all perturbations diverge from it.

We argue that only stable equilibria are reasonable to find in practice.  Otherwise any small error in the choice of $\tau^*$ will fail to yield the desired equilibrium as the dynamics push away from it.  Since the equilibria depend on the  joint distribution $\D$, which will typically not be perfectly known to the mechanism designer, such small errors are to be expected.  As a result, we view stability as an important equilibrium refinement in the context of peer prediction.

\begin{theorem} \label{thm:oa-dynamics}
	Assume for all $\tau \in \reals$ that $P(\tau; x)$ is strictly decreasing and continuous over $x$.
	Then if $G(\tau)$ is strictly increasing at equilibrium point $\tau^*$, $\tau^*$ is unstable,
	while if it is instead strictly decreasing $\tau^*$ is stable.
\end{theorem}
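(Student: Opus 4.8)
The plan is to reduce the dynamics to a one–dimensional ODE and read off stability from the sign of the vector field on either side of $\tau^*$. By the preceding Proposition, whenever $P(\tau;\cdot)$ is strictly decreasing and continuous, the unique best response to threshold $\tau$ is itself a threshold $\hat\tau(\tau)$, characterized by $P(\tau;\hat\tau(\tau)) = 1/2$; moreover, for $\tau$ near an equilibrium $\tau^*$ this $\hat\tau(\tau)$ is finite, since $P(\tau^*;\tau^*) = G(\tau^*) = 1/2$ by Theorem~\ref{thm:oa-equil} and $P$ varies continuously, so $P(\tau;\cdot)$ still crosses $1/2$ nearby. Hence~\eqref{eq:continuous-dynamic} becomes the scalar system $\dot\tau = \phi(\tau)$ with $\phi(\tau) = \hat\tau(\tau) - \tau$, and $\tau^*$ is a fixed point precisely because $\hat\tau(\tau^*) = \tau^*$.

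For a scalar ODE, $\tau^*$ is (locally) stable when $\phi(\tau) > 0$ for $\tau$ slightly below $\tau^*$ and $\phi(\tau) < 0$ for $\tau$ slightly above (the field points inward), and unstable when both signs are reversed. So the crux is to determine the sign of $\hat\tau(\tau) - \tau$ from the behavior of $G$ near $\tau^*$. Fix $\tau$ close to $\tau^*$ and compare two evaluations of $P(\tau;\cdot)$: by definition of the best response $P(\tau;\hat\tau(\tau)) = 1/2$, while $P(\tau;\tau) = G(\tau)$. Since $P(\tau;\cdot)$ is strictly decreasing, $G(\tau) > 1/2$ forces $\tau < \hat\tau(\tau)$, i.e.\ $\phi(\tau) > 0$, and $G(\tau) < 1/2$ forces $\tau > \hat\tau(\tau)$, i.e.\ $\phi(\tau) < 0$.

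It remains to combine this with the monotonicity of $G$ at $\tau^*$, using $G(\tau^*) = 1/2$. If $G$ is strictly increasing at $\tau^*$, then $G(\tau) < 1/2$ for $\tau$ slightly below $\tau^*$ (so $\phi(\tau) < 0$ there) and $G(\tau) > 1/2$ for $\tau$ slightly above (so $\phi(\tau) > 0$): the field points outward on both sides and $\tau^*$ is unstable. If instead $G$ is strictly decreasing at $\tau^*$, the two inequalities flip, $\phi$ changes sign from positive to negative at $\tau^*$, and $\tau^*$ is stable. Equivalently, one may linearize: differentiating $P(\tau;\hat\tau(\tau)) = 1/2$ implicitly gives $\hat\tau'(\tau^*) = -\,\partial_1 P(\tau^*;\tau^*)/\partial_2 P(\tau^*;\tau^*)$ (with $\partial_1 P,\partial_2 P$ the partials in the first and second arguments), and since $G'(t) = \partial_1 P(t;t) + \partial_2 P(t;t)$ one gets $\phi'(\tau^*) = \hat\tau'(\tau^*) - 1 = -\,G'(\tau^*)/\partial_2 P(\tau^*;\tau^*)$, which has the same sign as $G'(\tau^*)$ because $\partial_2 P < 0$.

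The main obstacle is bookkeeping around the hypotheses rather than any single computation: one must (i) interpret ``$G$ strictly increasing/decreasing at $\tau^*$'' together with $G(\tau^*) = 1/2$ so that the one-sided comparisons above hold throughout a punctured neighborhood of $\tau^*$; (ii) confirm $\hat\tau(\tau)$ is well-defined and finite near $\tau^*$, which tacitly uses continuity of $P$ in its first argument as well as the second; and (iii) fix the precise meaning of (in)stability — the argument gives local repulsion/attraction, the standard notion for a scalar field, and if a stronger statement of divergence is wanted in the unstable case one can invoke the monotone structure of $\phi$ to extend it beyond a neighborhood.
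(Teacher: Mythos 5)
Your proposal is correct and follows essentially the same argument as the paper: you determine the sign of $\dot\tau = \hat\tau(\tau)-\tau$ by comparing $P(\tau;\tau)=G(\tau)$ with $P(\tau;\hat\tau(\tau))=1/2$ via the strict monotonicity of $P(\tau;\cdot)$, and then use the monotonicity of $G$ at $\tau^*$ (where $G(\tau^*)=1/2$) to read off the inward/outward direction of the flow on each side. The added linearization remark and the care about finiteness of $\hat\tau(\tau)$ near $\tau^*$ are fine refinements but do not change the route.
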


\begin{proof}
	Consider the dynamics at time step $t$, with the current threshold $\tau(t)$.
	Since we are considering a fixed time $t$, we refer to $\tau(t)$ as $\tau$ throughout the proof.

	We consider what happens when an agent receives a signal exactly at $\tau$.
	There are three cases to consider.
	In the first, $G(\tau) = 1/2$; then the expected utility of reporting $L$ is $\Pr[X' \leq \tau \mid X = \tau] = 1/2$, while the expected utility of reporting $H$ is $\Pr[X' > \tau \mid X = \tau] = 1/2$; thus the agent is indifferent between reporting $L$ or $H$.
	Moreover, the best response $\hat \tau$ is exactly $\tau$. 
	It follows the system is at an equilibrium of the dynamics since $\dot \tau = \hat\tau - \tau = 0$.

	In the second case, $G(\tau) > 1/2$. 
	Then reporting $L$ is strictly preferred to reporting $H$.
	Since $P(\tau; \hat \tau) = 1/2$, by decreasing monotonicity we have $\hat{\tau} > \tau$. 
	Thus $\dot \tau = \hat\tau - \tau > 0$.
	In the third case, $G(\tau) < 1/2$. 
	Then reporting $H$ is strictly preferred to reporting $L$.
	Since $P(\tau; \hat \tau) = 1/2$, by decreasing monotonicity we have $\tau > \hat{\tau}$. 
	Thus $\dot \tau = \hat\tau - \tau < 0$.

	If $G(\tau)$ is strictly increasing at $\tau^*$, then at any perturbed point $\tau$ to the left of $\tau^*$ we have $G(\tau) < 1/2$ and $\dot \tau < 0$; and at any perturbed point $\tau$ to the right of $\tau^*$ we have $G(\tau) > 1/2$ and $\dot \tau > 0$.
	It follows that $\tau^*$ is unstable.  The same logic implies stability in the strictly decreasing case.
\end{proof}

In most reasonable settings, one would expect $G(x)$ to limit toward 1 as $x$ approaches $\infty$, and toward 0 as $x$ approaches $-\infty$. 
Specifically, as a signal $x$ becomes increasingly small, the probability another agent receives a signal smaller than $x$ should approach 0.
Meanwhile, as a signal $x$ grows large, the probability another agent receives a signal smaller than $x$ should approach 1. 
As the quality of essay increases in a normally distributed class, for example, a grader would believe that the mass of essays below that quality should increase proportionally. 

We show formally in \S~\ref{appendix:oa} that under such limit behavior of $G$, uninformative equilibria at $\tau^* = \pm \infty$ are \emph{stable} in OA, while at least one finite equilibrium exists which is unstable. 
In fact, one would expect in smooth, unimodal settings, as in \S~\ref{subsec:gaussian-intro}, that $G$ is monotone increasing. 
In this case, there are three equilibria, and by Theorem~\ref{thm:oa-dynamics}, the internal equilibrium is unstable while the uninformative equilibria are stable. 

So what does this instability of internal equilibria mean for OA?
When signals correspond monotonically to quality of an essay or task, agents will naturally move toward uninformative equilibria. 
These dynamics only further the fragility of truthfulness in OA: unless the threshold $\tau$ exactly balances out the conditional probabilities that other agents will report $H$ or $L$, agents will inevitably and increasingly misreport until they reach an uninformative consensus. 
There are multimodal Bayesian examples where several internal equilibria exist and some are thus topologically required to be locally stable (see \S~\ref{sec:experiments}).
However, as long as $G$ remains monotone increasing at extreme signals, we expect uninformative equilibria to be stable.

\begin{figure}[t]
	\centering
	\begin{minipage}[t]{0.4\textwidth}
	  \centering
	  \includegraphics[width=1\columnwidth]{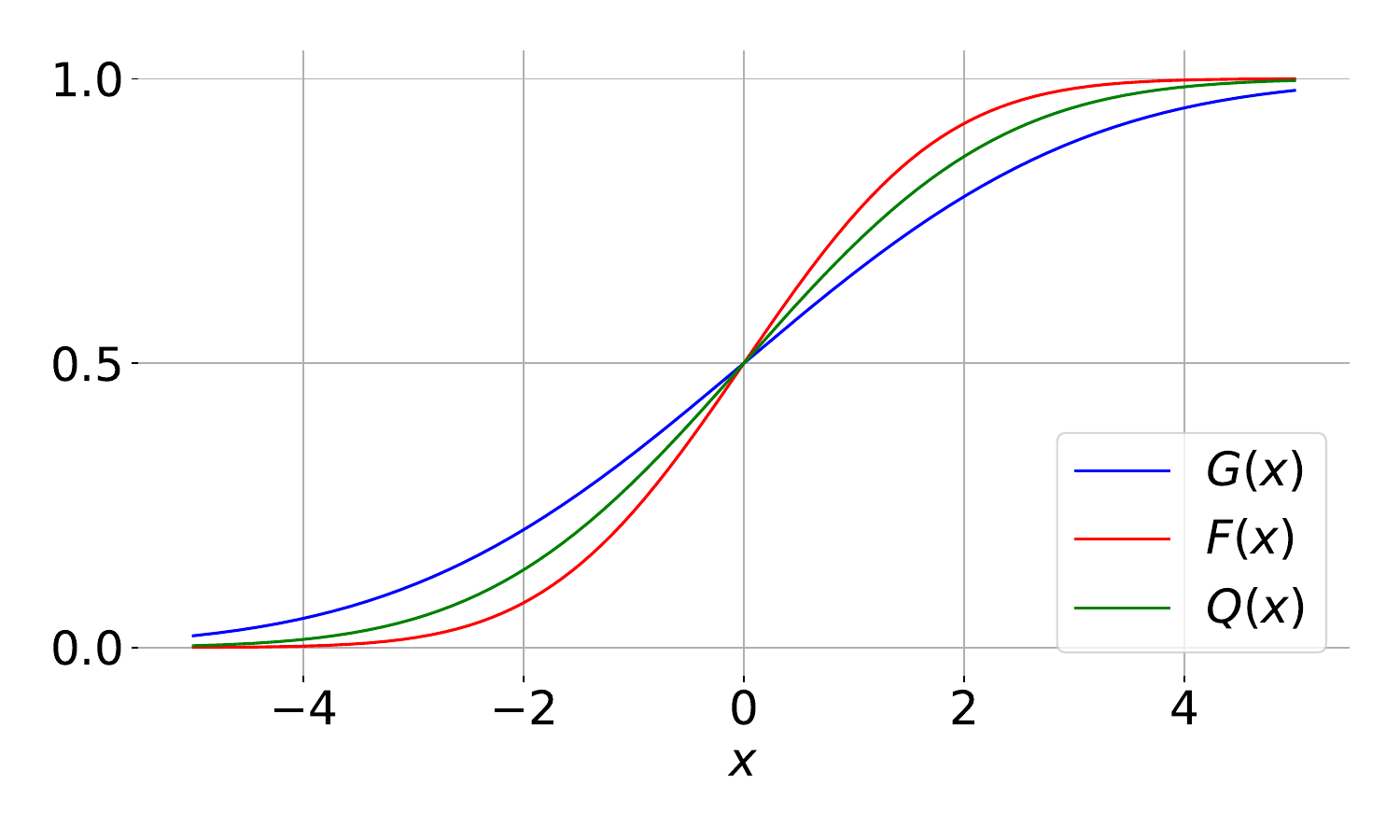}
		\caption{Plots of $F$, $G$, and $Q$ in the Gaussian model (\S~\ref{subsec:gaussian-intro}) for the parameters $a = b = 1$.}
		\label{fig:gaussian-F-G}
	\end{minipage}%
	\hspace{3em}
	\begin{minipage}[t]{0.4\textwidth}
		\centering
		\includegraphics[width=1\columnwidth]{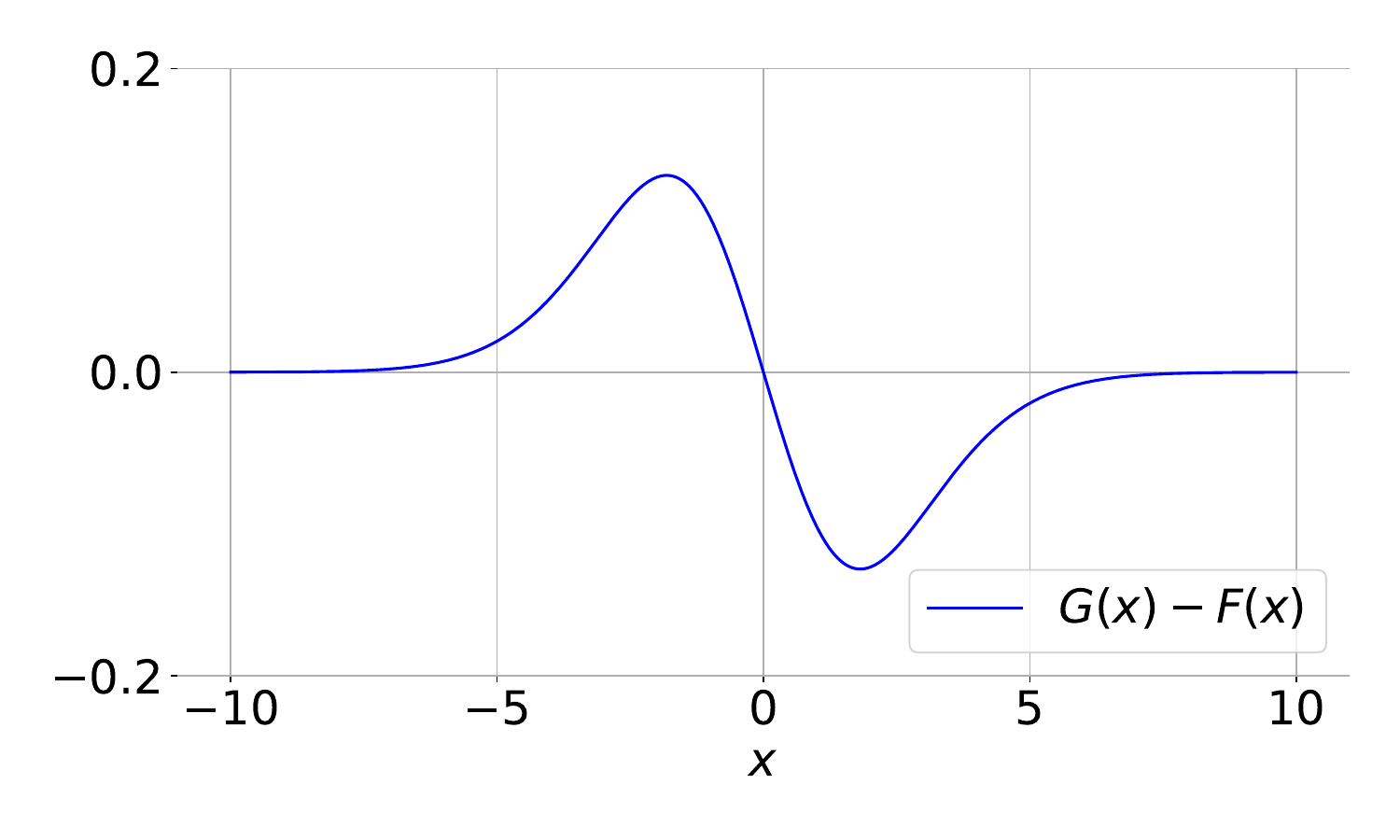}
		\caption{Plot of $G-F$ in the Gaussian model (\S~\ref{subsec:gaussian-intro}), with $a = b = 1$.
		  $G(x)-F(x)$ is decreasing at $0$, meaning it a stable equilibrium.}
		\label{fig:gaussian-F-G-difference}
	\end{minipage}
	\vspace{-\baselineskip}
\end{figure}

\subsection{A Gaussian Model} \label{subsec:gaussian-intro}
Suppose each agent $i$ receives a noisy signal $X_i$ from a Gaussian distribution, with the noise also normally distributed. 
That is, $X_i = a Z + b Z_i$, where $Z \sim N(0, 1)$, $Z_i \sim N(0, 1)$, and $a, b \in \reals_{> 0}$.
This model e.g. realistically captures peer grading settings where there is a more fine-grained evaluation of essays anchored in a standard bell-curve distribution. 
It follows that the marginal distribution for each agent is $F(x) = N(x \mid 0, a^2 + b^2)$.
This model allows for a continuum of agents with the same $F$, so the setting is consistent with our dynamics. 

The correlation coefficient for two signals $X, X'$ is $\rho = \frac{a^2}{a^2 + b^2}$.
Fixing any agent, the predictive posterior distribution upon seing $x$ is $\Pr[X' = x' \mid X = x] = N(x' \mid \hat \mu(x), \hat\sigma^2)$, where $\hat\mu(x) = \rho x$ and $\hat\sigma^2 = b^2 (1 + \rho).$
Denoting the CDF of a standard Normal distribution by $\Phi$, it follows that for a fixed $\tau$,
\begin{equation} \label{eq:p}
	P(\tau; x) = \Phi\left( \frac {\tau - \rho x} {b \sqrt{1+\rho}} \right).
\end{equation}
We can now also write
\begin{equation}
	\label{eq:gaussian-FG}
	F(x)
	= \Phi\left( \frac x {\sqrt{a^2 + b^2}} \right)
	= \Phi\left( \frac {\sqrt{\rho}} {a} \; x \right)
	\mbox{ and }
%	\\
%	\label{eq:gaussian-G}
	G(x)
	 = \Phi\left( \frac {x - \rho x} {b \sqrt{1+\rho}} \right)
	= \Phi\left( \frac {(1-\rho)} {b \sqrt{1+\rho}} \; x \right)~.
\end{equation}

We then immediately observe existence of equilibria according to our general results:
\begin{corollary}
	In the Gaussian model under OA, we have three equilibria at $\tau = 0$ and $\pm \infty$.
\end{corollary}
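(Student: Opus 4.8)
The plan is to dispatch the two infinite equilibria immediately via Proposition~\ref{prop:oa-infinite-equil}, which asserts that $\tau^* = \pm\infty$ are always threshold equilibria under OA regardless of the distribution, hence in particular in the Gaussian model. It then remains only to show that $\tau = 0$ is the unique finite threshold equilibrium, and for this I would invoke Theorem~\ref{thm:oa-equil} together with the closed forms already derived in~\eqref{eq:p} and~\eqref{eq:gaussian-FG}.

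First, for uniqueness of the finite equilibrium, I would apply the necessity direction of Theorem~\ref{thm:oa-equil}: any finite threshold equilibrium $\tau$ must satisfy $G(\tau) = 1/2$. The continuity hypothesis on $P(\tau;x)$ needed to invoke the theorem holds because, by~\eqref{eq:p}, $P(\tau;x) = \Phi\big((\tau - \rho x)/(b\sqrt{1+\rho})\big)$ is continuous in $x$. From~\eqref{eq:gaussian-FG}, $G(x) = \Phi\big((1-\rho)x/(b\sqrt{1+\rho})\big)$; since $a,b > 0$ we have $\rho = a^2/(a^2+b^2) \in (0,1)$, so the coefficient $(1-\rho)/(b\sqrt{1+\rho})$ is strictly positive and $G$ is strictly increasing. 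Hence $G(\tau) = 1/2 = \Phi(0)$ forces $\tau = 0$, so $\tau = 0$ is the only candidate finite equilibrium.

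Second, for sufficiency, I would verify hypothesis (a) of Theorem~\ref{thm:oa-equil} at $\tau = 0$: we have $G(0) = \Phi(0) = 1/2$, and from~\eqref{eq:p}, $P(0;x) = \Phi\big(-\rho x/(b\sqrt{1+\rho})\big)$ is continuous and, since $\rho > 0$, strictly decreasing in $x$. Theorem~\ref{thm:oa-equil}(a) then yields that $\tau = 0$ is a threshold equilibrium. Combining this with Proposition~\ref{prop:oa-infinite-equil} and the uniqueness argument gives exactly the three claimed equilibria $\tau \in \{0, +\infty, -\infty\}$.

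I do not anticipate a real obstacle: the argument is a direct instantiation of already-proved results. The only points requiring mild care are (i) checking the continuity hypothesis of Theorem~\ref{thm:oa-equil} so that its necessity direction is applicable, and (ii) using $a,b > 0$ to place $\rho$ \emph{strictly} in $(0,1)$, which is precisely what makes $G$ strictly monotone and $P(0;\cdot)$ strictly decreasing, thereby pinning $\tau = 0$ as an isolated internal equilibrium rather than a degenerate interval of equilibria.
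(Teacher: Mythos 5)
Your proposal is correct and follows essentially the same route as the paper's own proof: infinite equilibria via Proposition~\ref{prop:oa-infinite-equil}, sufficiency of $\tau=0$ via Theorem~\ref{thm:oa-equil} with $P(\tau;x)$ strictly decreasing and continuous, and uniqueness from strict monotonicity of $G$ so that it crosses $1/2$ only once. Your write-up is just slightly more explicit about verifying the hypotheses (e.g.\ $\rho\in(0,1)$), which is fine.
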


\begin{proof}
	Existence of equilibria at $\tau = \pm \infty$ follows from Proposition~\ref{prop:oa-infinite-equil}.
	Note that $P(\tau; x)$ is strictly monotone decreasing and continuous in $x$ so that Theorem \ref{thm:oa-equil} applies. 
	As $G(0) = \Phi(0) = \tfrac 1 2$, $\tau=0$ is an equilibrium for OA.
	Note also that $\Phi$ is strictly monotone, so $G(x)$ only crosses 1/2 once and $\tau = 0$ is therefore the \emph{only} finite equilibrium (see a visualization in Figure~\ref{fig:gaussian-F-G}).
\end{proof}

Now, consider the stability of these equilibria. 

\begin{corollary}
	In the Gaussian model under OA, the threshold equilibrium $\tau = 0$ is unstable, while the uninformative equilibria $\pm \infty$ are stable.
\end{corollary}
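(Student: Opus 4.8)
The plan is to read off everything from the closed forms in \eqref{eq:p} and \eqref{eq:gaussian-FG}, invoking Theorem~\ref{thm:oa-dynamics} for the finite equilibrium $\tau=0$ and arguing directly from the dynamics for $\tau=\pm\infty$. First I would check the standing hypothesis of Theorem~\ref{thm:oa-dynamics}: by \eqref{eq:p}, $P(\tau;x)=\Phi\!\left(\tfrac{\tau-\rho x}{b\sqrt{1+\rho}}\right)$ is continuous in $x$ and, since $\rho=\tfrac{a^2}{a^2+b^2}>0$ and $\Phi$ is strictly increasing, strictly decreasing in $x$ for every fixed $\tau\in\reals$. Thus Theorem~\ref{thm:oa-dynamics} applies at the finite equilibrium.

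Next, for $\tau^*=0$: by \eqref{eq:gaussian-FG}, $G(x)=\Phi\!\left(\tfrac{1-\rho}{b\sqrt{1+\rho}}\,x\right)$, and because $\rho\in(0,1)$ with $a,b>0$ the coefficient $\tfrac{1-\rho}{b\sqrt{1+\rho}}$ is strictly positive, so $G$ is strictly increasing on all of $\reals$, in particular at $0$. By the instability clause of Theorem~\ref{thm:oa-dynamics}, $\tau^*=0$ is unstable. For $\tau^*=\pm\infty$, which fall outside the scope of that theorem, I would argue as in \S~\ref{appendix:oa}: from the same closed form, $\lim_{x\to\infty}G(x)=1$ and $\lim_{x\to-\infty}G(x)=0$, and in fact $G(x)>1/2\iff x>0$. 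Perturbing to any large finite threshold $\tau>0$ gives $G(\tau)>1/2$; the argument embedded in the proof of Theorem~\ref{thm:oa-dynamics} (strict monotonicity of $P(\tau;\cdot)$ forces the best response $\hat\tau$ with $P(\tau;\hat\tau)=1/2<P(\tau;\tau)$ to satisfy $\hat\tau>\tau$) then yields $\dot\tau=\hat\tau-\tau>0$, so every threshold in $(0,\infty)$ drifts toward $+\infty$; symmetrically every threshold in $(-\infty,0)$ has $G(\tau)<1/2$ and $\dot\tau<0$, drifting toward $-\infty$. Hence $\pm\infty$ are stable.

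The only delicate point, and the one I would spell out carefully, is what ``stable'' means at the boundary points $\pm\infty$ and why the one-sided dynamics argument is legitimate there: it suffices that $G-1/2$ keeps a constant sign on an entire one-sided neighborhood of each infinite point, which here is the strong statement that the sign is constant on all of $(0,\infty)$ (resp. $(-\infty,0)$) — exactly the monotonicity of $G$ together with $G(0)=1/2$. Everything else is direct substitution into results already established, so I expect no computational obstacle beyond this bookkeeping.
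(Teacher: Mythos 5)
Your argument is correct. For the finite equilibrium your route is the same as the paper's: verify that $P(\tau;x)$ from \eqref{eq:p} is continuous and strictly decreasing in $x$, note that $G$ in \eqref{eq:gaussian-FG} has a strictly positive coefficient (the paper phrases this as $G'(0)=c_G(\rho)\phi(0)>0$), and invoke Theorem~\ref{thm:oa-dynamics} to get instability of $\tau=0$. Where you diverge is at $\pm\infty$: the paper simply says that, since $\tau=0$ is the unique finite equilibrium and it is unstable, the infinite equilibria are stable ``by topological necessity,'' whereas you unpack this into an explicit flow-sign argument --- $G(\tau)>1/2$ for all $\tau\in(0,\infty)$ forces $\hat\tau>\tau$ and $\dot\tau>0$ on that whole half-line (and symmetrically for $\tau<0$), so every finite perturbation drifts toward the corresponding infinite point. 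This is essentially the reasoning the paper defers to \S~\ref{appendix:oa} and to the case analysis inside the proof of Theorem~\ref{thm:oa-dynamics}; your version buys a self-contained, global statement (the sign of $G-1/2$ is constant on each half-line, not merely near the boundary) and makes precise what ``stable at $\pm\infty$'' means, at the cost of a slightly longer write-up than the paper's one-line appeal to the one-dimensional phase portrait. One small point you could make fully explicit is that the best response $\hat\tau$ exists and is finite for every finite $\tau$ (here $P(\tau;\hat\tau)=1/2$ gives $\hat\tau=\tau/\rho$), so the dynamic $\dot\tau=\hat\tau-\tau$ is well defined along the whole trajectory; in the Gaussian model this is immediate, so it is bookkeeping rather than a gap.
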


\begin{proof}
	Let $c_G(\rho) = \frac {(1-\rho)} {b \sqrt{1+\rho}}$ be the coefficient of $x$ in $G(x)$. 
	Then $G'(x) = c_G(\rho) \phi(c_G x)$ for $\phi$ the PDF of the standard Normal, and so $G'(0) = c_G(\rho) \phi(0) > 0$. 
	By Theorem~\ref{thm:oa-dynamics}, then, $\tau = 0$ is unstable. 
	Since $0$ is the only finite equilibrium, the equilibria $\tau = \pm \infty$ are stable by topological necessity.
\end{proof}

For any finite starting point $\tau(0) \neq 0$, then, we have $\lim_{t \to \infty} \tau(t) = \pm \infty$.
Therefore, we find in settings where graders receive a noisy, normally distributed version of information about an essay, Output Agreement incentivizes agents to stabilize at an equilibrium where they all submit the same report.
This uninformative consensus is in contrast to the truthful equilibrium guarantees of the binary signal model.\footnote{In the Gaussian model, strong diagonalization holds within a reasonable range of thresholds around 0, depending on $\rho$.}

\section{Other Binary-Report Mechanisms}

In this section, we characterize equilibria and their dynamics under the same real-valued signal model for three other popular binary-report peer prediction mechanisms.

\subsection{Dasgupta-Ghosh}
We consider the mechanism proposed by~\citet{dasgupta2013crowdsourced}, which we will refer to as the Dasgupta-Ghosh (DG) mechanism.
DG is a multi-task mechanism: agents participate in a set of peer prediction tasks, and their payment is a function of their reports across these tasks.
For a given task, agent $i$ receives a payment of one if their report $r_i$ matches that of another agent $j$ on that task, as in Output Agreement.
However, they are also \emph{penalized} if their report matches that of the other agent on some unrelated task agent $i$ did not complete, with the intution that agents should be paid only for their performance over and above what would be expected by chance.\footnote{There have been a variety of ways considered to calculate this penalty.  This version corresponds to their $d=1$ option.}
In the binary signal setting, this penalty ensures that the uninformative equilibria in OA where all agents agree to report H or L now have zero value, so that strategic agents are not incentivized toward them.

\paragraph{Model.}
Formally, if $r_i$ agent $i$'s report on the task, $r_j$ is the peer report on the task, and $r_k$ is the peer report on a distinct task, 
\[ M_{\DG}(r_i, r_j, r_k) = \ones [r_i = r_j] - \ones [r_i = r_k]. \]
Assume the task $r_k$ comes from is i.i.d.~to the task being scored and, as with our analysis of OA, all agents are using symmetric strategies.  Then agent $i$'s expected penalty is $\E \ones [r_i = \sigma(X)]$.  That is, the expected penalty is exactly the prior probability of a signal that leads to a report of $r_i$.  We denote this as 
\[\pi_L = \Pr[\sigma(X) = L] = \Pr[X \leq \tau] = F(\tau),\]
with $\pi_H = 1 - \pi_L$.  
Therefore, we instead work with the following formulation of DG which has the same expected payment:
\begin{equation} \label{eq:dg-payment}
	M_{\DG}(r_i, r_j, \pi_L) = \ones [r_i = r_j] - \pi_{r_i}.
\end{equation}
Despite the nature of DG as a multi-task mechanism, this means that (under our assumption of symmetry and if other agents use the same strategy across tasks) we can essentially analyze each task in isolation.  More formally, this means we can adopt the same model from \S~\ref{sec:OA-Model} we used to analyze OA, changing only the payment rule and updating the quantities that derive from it accordingly.

\paragraph{Equilibrium characterization and dynamics.}
We leave formal derivations of equilibrium characterization and dynamics to \S~\ref{appendix:dg}, because they follow in a similar way to our OA analysis.
Importantly, while the uninformative equilibria still exist, the penalty term induces a different characterization of finite equilibria: under the same continuous and single-crossing assumptions over the predictive posterior, $\tau^*$ is an equilibrium if and only if $G(\tau^*) = F(\tau^*)$ (where as a reminder, $F(\tau^*) = \Pr[X \leq \tau^*])$.
Moreover, $\tau^*$ is stable if and only if $G(\tau) - F(\tau)$ is strictly increasing at $\tau^*$.

The difference between $G$ in the OA case and $G-F$ here is important.
We saw that $G$ was strictly increasing in natural settings like the Gaussian model.  In contrast, in similar settings $G - F$ is strictly decreasing at a unique finite equilibrium point.
We find under the Gaussian model, in fact, that the same equilibria exist at $(0, \pm \infty)$; but $0$ is stable, while the uninformative equilibria are unstable (see Figures~\ref{fig:gaussian-F-G} and~\ref{fig:gaussian-F-G-difference}).
Since $\tau=0$ is the only stable equilibrium, it is in fact globally attracting: for any finite starting threshold $\tau(0)$, we have $\lim_{t \to \infty} \tau(t) = 0$.  

In contrast to OA, this means that the behavior of DG in the Gaussian case is quite robust: unless the initial threshold starts exactly at an uninformative equilibrium it inevitably converges toward an informative one where $H$ and $L$ each get reported half the time.  
In this sense, DG behaves in a way one might hope based on its traditional analysis: it discourages uninformative equilibria.  
While not all settings will behave as nicely as our Gaussian example, under reasonable behavior in the tails we expect existence of a stable nontrivial equilibria, while the uninformative equilibria remain unstable. 
We discuss such generalizations in \S~\ref{appendix:dg}.

While robust, the stable equilibrium in DG under the Gaussian model is also inflexible: the mechanism designer cannot specify a desired threshold, and must settle for $\tau=0$.
Thus the semantics of $H$ and $L$ relative to the underlying signal are predetermined, as half of the tasks must be classified each way.
In peer grading, for example, half the assignments would be labeled unsatisfactory, which might not be the desired outcome.\footnote{See \S~\ref{sec:discussion} for futher discussion.}
We expect this general phenomenon, that there is a unique equilibrium threshold for DG which is determined by the underlying signal distribution, extends beyond the Gaussian case; see \S~\ref{sec:experiments}.

\subsection{Robust Bayesian Truth Serum}
The Robust Bayesian Truth Serum (RBTS) mechanism~\citep{witkowski2012robust}, inspired by the Bayesian Truth Serum (BTS)~\citep{prelec2004bayesian}, is a non-minimal mechanism: each agent $i$ submits both an information report $r_i \in \{L, H\}$ and also a \emph{prediction} report $p_i \in [0, 1]$.
The prediction report represents agent $i$'s belief about the frequency of $H$ reports from all agents.
The mechanism incentivizes prediction reports using the \emph{Brier score} $S(p, r) = 2 p \ones[r = H] - p^2$ (also known as the quadratic score), which is an example of a strictly proper scoring rule~\citep{brier1950verification,gneiting2007strictly}.
RBTS randomly picks a reference agent $j$ and peer agent $k$, and pays agent $i$
\begin{equation} \label{eq:rbts-payment}
	M_{\RBTS}((r_i, p_i), (r_j, p_j), (r_k, p_k)) = S(p_j + \delta t(r_i), r_k) + S(p_i, r_k),
\end{equation}
where $t(L) = -1$ and $t(H) = 1$, and $\delta>0$ is chosen by the mechanism.

We leave our formal analysis of RBTS to \S~\ref{app:rbts}, where we (perhaps surprisingly) are able to characterize equilibria under similar monotonicity and continuity conditions as OA and DG.
We find equilibria coincide with all thresholds that satisfy $Q(\tau) = G(\tau)$, where $Q(x) = \E_{x' \sim \beta(x)} P(x; x')$. That is, $Q$ represents the expected posterior of the agent's peer at signal $x$, conditional on $x$.
We then show that the location and dynamics of equilibria in RBTS under real-valued signals match DG for the Gaussian model (though convergence to stable equilibria is faster under RBTS), meaning the same inflexibility is inherited. 

\subsection{Determinant-based Mutual Information (DMI) Mechanism}

The Determinant-based Mutual Information (DMI) mechanism was introduced by \citet{kong2024dominantly,kong2020dominantly}.
It is a minimal multi-task mechanism with several strong guarantees, such as being dominantly truthful under consistent strategies.
In the case of binary reports, the simplest version of the DMI mechanism collects 4 reports from a pair of agents, and computes a score based on a measure of their mutual information.
To construct an unbiased estimator for this mutual information measure, the pairs of reports are split into two groups, and the determinants of the count matrices associated to each group are multiplied together.

Formally, the mechanism is given as follows.
Define $\ones_H = (1,0)$ and $\ones_L = (0,1)$ to be indicator vectors in $\reals^2$.
Then $M_\DMI : \{L,H\}^4 \times \{L,H\}^4 \to \reals$ is the function
\begin{align*}
  M_\DMI(r_1,\ldots,r_4,r_1',\ldots,r_4') = \det M_{12} \det M_{34}~,
\end{align*}
where $M_{ij} = \ones_{r_i} \ones_{r_i'}^T + \ones_{r_j} \ones_{r_j'}^T$ is the count matrix for tasks $i$ and $j$.

In \S~\ref{app:dmi}, under some constraints on the form of strategies over each of the 4 task signals, we show that DMI has the same necessary condition for equilibrium as DG.
So, modulo a slightly different sufficient condition, its equilibria can be found at the same thresholds: those where $G(\tau) = F(\tau)$. 
In particular, our result about the dynamics (Theorem~\ref{thm:dg-dynamics}) immediately applies with the appropriate sufficient condition, and the Gaussian case has the same equilibra with the same stability.

\section{Distributions beyond the Gaussian} \label{sec:experiments}
We now aim to numerically explore existence and dynamics of equilibria with real-valued signals across mechanisms in settings other than the noisy Gaussian model.
We consider two natural departures from the ``nice" properties of Gaussians, skewness and multimodality.

\subsection{Skewness}
We extend our observations to asymmetric signal distributions by adapting the same noisy signal model as our running Gaussian example, but with a base distribution which is skewed.
Formally, each agent $i$ receives a signal $X_i$ such that $X_i = Z + Z_i$, where $Z$ is the skewed Normal distribution with mean 0 and variance 1, and $Z_i \sim N(0, 1)$. 
$Z$ has another parameter, $\alpha$, which controls the skewness of the distribution.
The magnitude of $\alpha$ increases the magnitude of skewness; moreover, $\alpha < 0$ leads to a left-skewed distribution, and $\alpha > 0$ a right-skewed distribution.
We study how threshold equilibria move as a function of the parameter $\alpha$. 

As depicted in Figure~\ref{fig:bifurcation-skewed}, we find a unique finite equilibrium under OA, DG (and DMI, by our theoretical results), and RBTS which shifts toward the longer tail, i.e. the thresholds move from negative to positive as the skewness moves from left to right.
We recover the equilibrium at $\tau = 0$ in the original Gaussian model when $\alpha = 0$.
Thus we expect when there is asymmetry in the distribution of signals, agents in DG, DMI, and RBTS will settle at a threshold that balances out the asymmetry in signal mass.
While we observe the same high-level dynamics for OA, these equilibria are unstable and so we still expect agents will move toward an uninformative consensus. 

\subsection{Multimodality}
Next we explore the effects that a distribution with multiple peaks has on the equilibria that occur under our signal model.
Consider a classic mixed Gaussian setting where each agent's signal $X_i$ is received i.i.d. from one of $K$ Gaussian components, each with mean $\mu_k$ and variance $\sigma_k^2$.
There is an underlying parameter $z \sim \text{Categorical}(\pi)$ for $\pi \in \Delta^K$ that determines which Gaussian component all the signals come from. 
For the purposes of our experiments, we fix $\pi$ to be the uniform distribution over the $K$ components. 
We numerically calculate the functions $F$, $G$, and $Q$, and check for the sufficient and necessary conditions of our theorems, to find equilibria and their stability.
We set the precision of each component to be the same, and then vary this precision parameter to generate bifurcation graphs for $K = 2$ and $3$ (see \S~\ref{appendix:experiments} for $K = 4$).
\begin{figure}[t]
    \centering
    \begin{minipage}[t]{0.35\textwidth}
        \centering
		\includegraphics[width=\textwidth]{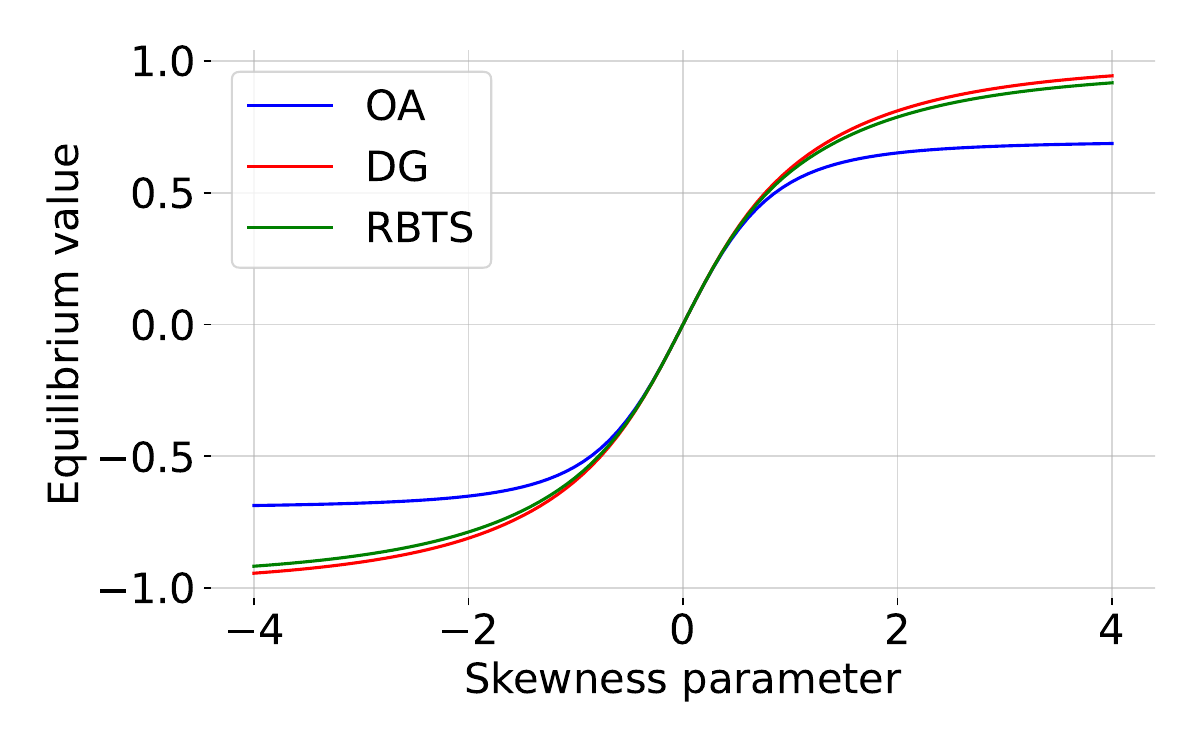}
		\caption{Equilibria in OA and DG when signals are noisy versions of a skewed Normal, across different values of the skewness parameter $\alpha$. }
		\label{fig:bifurcation-skewed}
    \end{minipage}
    \hspace{3em}
    \begin{minipage}[t]{0.35\textwidth}
        \centering
		\includegraphics[width=0.9\textwidth]{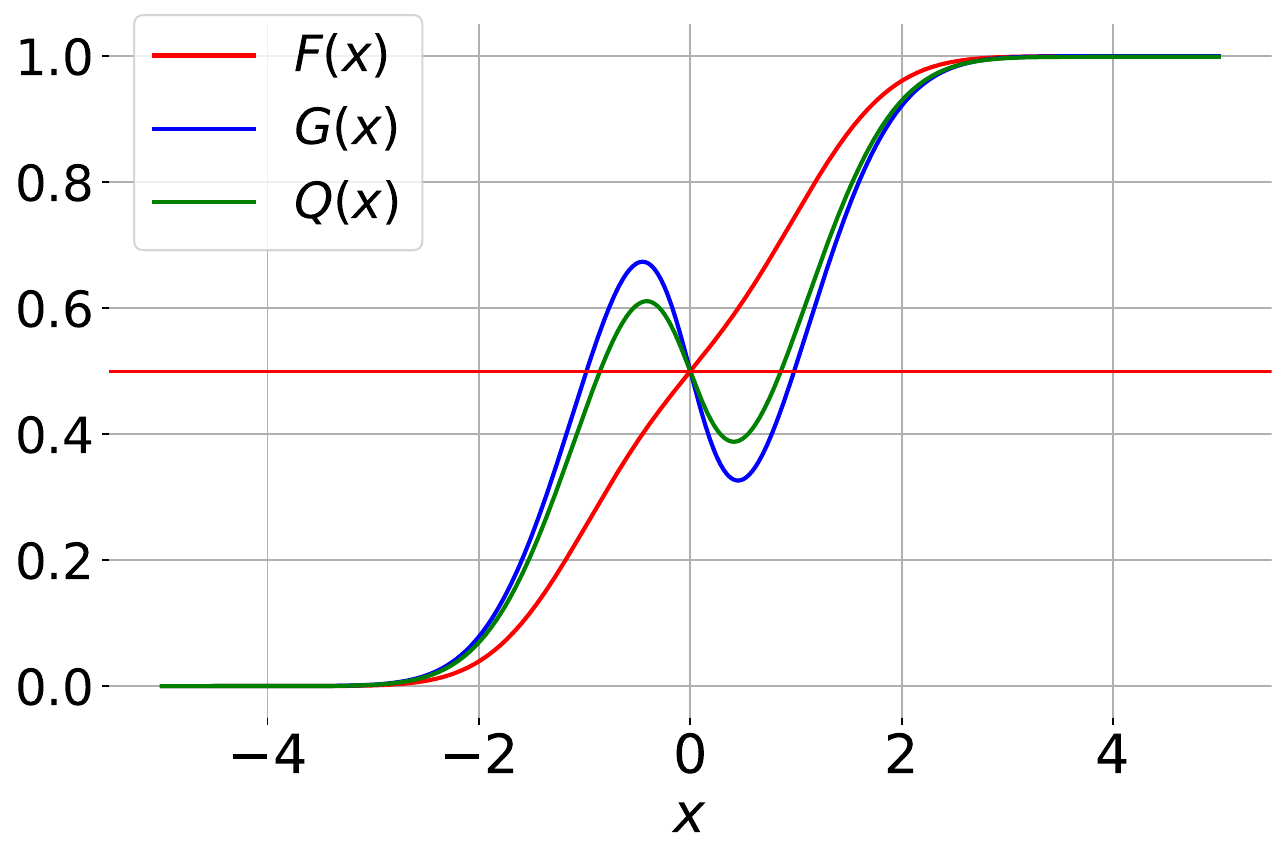}
		\caption{$F$, $G$, and $Q$ over signals $x$ in the two-component Gaussian mixture setting with means $-1, 1$ and precision 2.
		}
		\label{fig:snapshots}
    \end{minipage}
    \begin{subfigure}[t]{0.35\textwidth}
        \includegraphics[width=\textwidth]{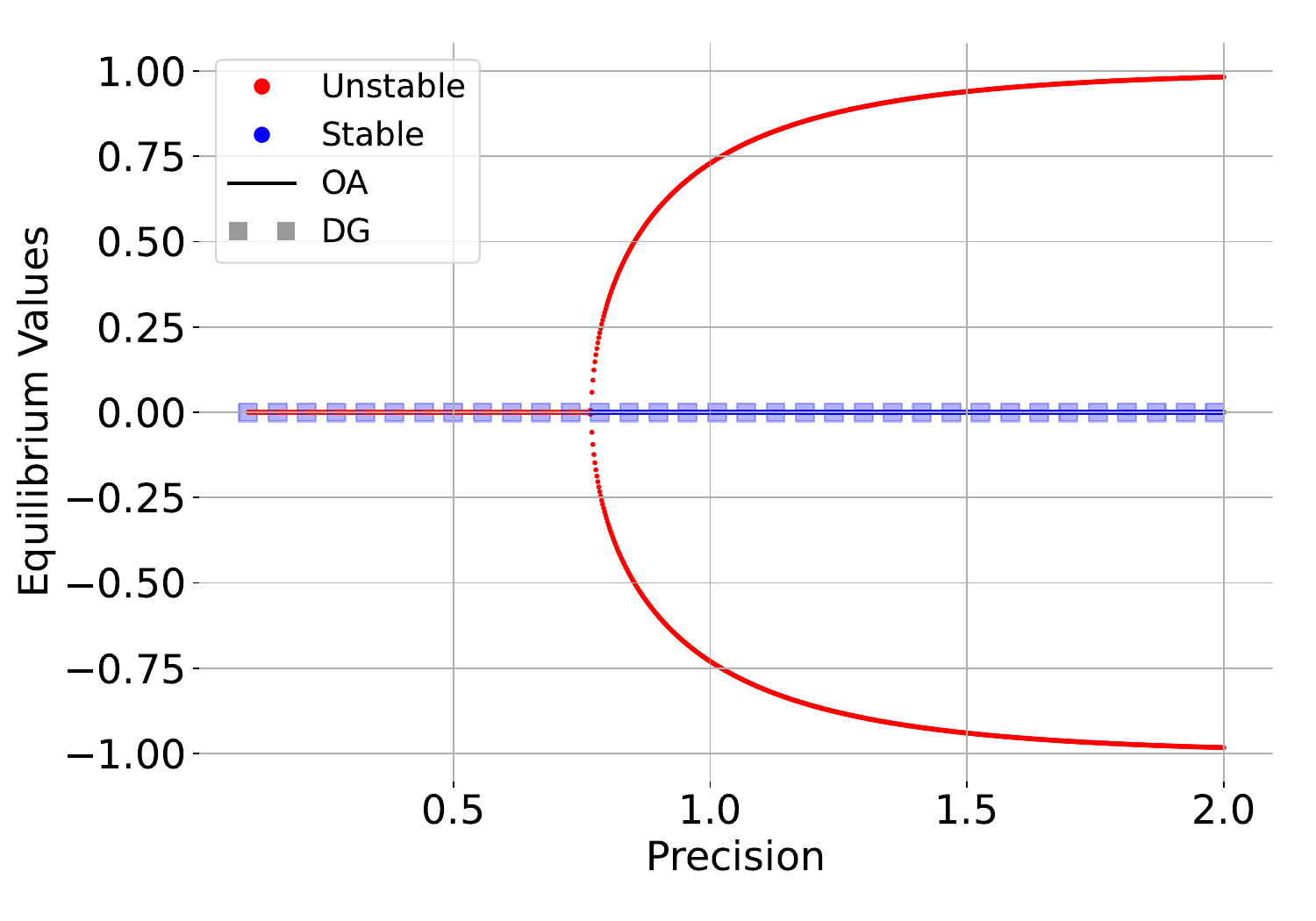}
    \end{subfigure}
	\hspace{3em}
    \begin{subfigure}[t]{0.35\textwidth}
        \includegraphics[width=\textwidth]{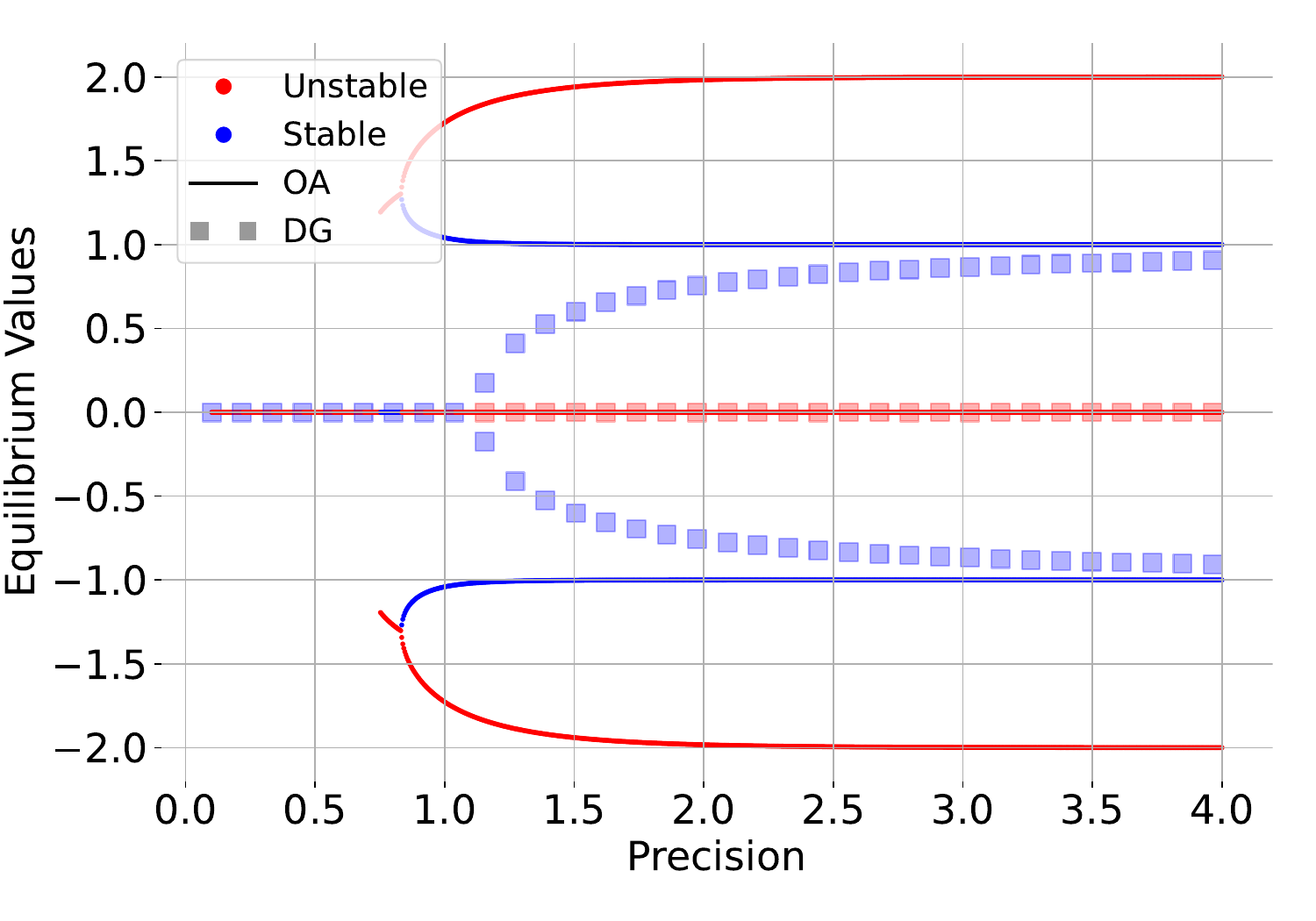}
    \end{subfigure}
    \caption{Bifurcation diagrams for two- and three-component Gaussian mixtures. 
    (Left) Two-component case, with means $-1,1$: OA (solid) exhibits three equilibria at high precision, while DG (dashed) maintains a single stable equilibrium at $\tau=0$. 
    (Right) Three-component case with means $-2,0,2$: at high precision, OA (solid) has five equilibria, with two stable at $\tau=\pm1$; DG (dashed) has three equilibria, with stability reversed from the two-component OA case.}
    \label{fig:bifurcation-summary}
\end{figure}

We first observe that the number of equilibria increases by two each time we add a Gaussian component (Figure~\ref{fig:bifurcation-summary}) for both OA and DG.
OA begins with three equilibria for large enough precision under the two-component model, with $\tau = 0$ emerging as a nontrivial stable point.
In general, for high enough separation between the two modes, stable equilibria exist beyond trivial agreement, and the area of starting conditions that lead to these equilibria increases.
One can view this phenomenon as a reflection of the original binary signal model: with enough separation between the two modes of the richer signal space, we essentially return to a setting where Output Agreement is useful because the ``signal'' is effectively which mode has been chosen.
The pattern continues more generally for $K$ components, with stable threshold equilibria emerging between pairs of modes for high enough precision values. 

For DG, $\tau = 0$ remains the unique stable equilibrium for two components, while two stable equilibria emerge as the precision increases at $\pm 1$, between the distribution's modes, for three components.
It follows that if signal noise is sufficiently small and the underlying distribution is multimodal (with at least three modes), the designer has more choice in how agents map information to reports. 
E.g., in peer grading, if the quality distribution has three modes, the designer can choose between thresholds separating low and high marks at the first or second mode.
On the other hand, the designer does not have control or knowledge of the underlying signal distribution. 

\paragraph{RBTS and DMI.}
We view equilibria in the snapshot of a specific precision value in Figure~\ref{fig:snapshots}.
The intersection of $G$ with 1/2 corresponds to equilibria in OA, with $F$ corresponds to equilibria in DG (and DMI), and with $Q$ corresponds to equilibria in RBTS.  
The function $Q$ closely tracks $G$ for high enough precision values. 
We therefore find that the same stability and number of equilibria occur in RBTS as in DG.
However, we note that bifurcation (the increase in equilibria) occurs at lower precision values for RBTS (see \S~\ref{appendix:experiments} for visualization).
From a design perspective, a lower bifurcation point in RBTS is nice because there is a wider set of signal distributions allowing for some choice over stable thresholds. 

\section{Flexibility of a Richer Report Space} \label{sec:positive-results}
While a single, inflexible threshold equilibrium at the median exists under the Gaussian model, we found in Section~\ref{sec:experiments} that multiple stable threshold equilibria emerge when the underlying signal distribution is multimodal with low individual signal noise.
However, in peer prediction settings the designer cannot control, and often cannot diagnose, modality of the signal distribution.
Moreover, when the distribution is unimodal, our results remain negative.
A natural next question is whether these negative results persist when the report space is larger: i.e., what if the designer aims to employ a peer prediction mechanism designed for three reports instead of two?

In this section, we study an extension of DG to non-binary reports, \emph{Correlated Agreement} (CA) \citep{shnayder2016informed}.
Under the same model of equilibria and dynamics, we find issues of inflexibility persist for any parameters of the mechanism: if there are $m$ possible reports under the Gaussian model, there only exists a single nontrivial, stable threshold equilibrium $\tau \in \reals^{m-1}$.
While the outlook is still negative in this case, we uncover a new peer prediction framework with more flexibility by leveraging our results for the non-binary report setting.
Specifically, if the designer intends for agents to map their signals to a binary report, what if we first ask for more?
E.g., the designer uses CA with a report space $\{1,2,3,4\}$ of size $m=4$, and is then able to generate a \emph{mapping} back to $\{L, H\}$.
If there exists a three-dimensional stable threshold equilibrium at, say, $(-0.5, 0, 0.5)$, the designer can choose between a binary threshold of $-0.5$, $0$, or $0.5$ through this mapping.
We formalize this \emph{report-mapping framework} in \S~\ref{subsec:report-mapping} and discuss how the designer can set CA parameters in order to guarantee these choices exist over thresholds.

\subsection{Correlated Agreement} \label{subsec:correlated-agreement}

We begin by studying the non-binary report setting under a natural extension of DG called \emph{Correlated Agreement} \citep{shnayder2016informed}.
At a high level, CA gives the designer flexibility in allowing agents to receive payments when their peers submit reports that differ from their own.

Let $[m] = \{1,2,\dots,m\}$, and let the report space be $\R = [m]$.
Each agent $i$ still calculates a report $r_i \in \R$ according to deterministic strategy $\sigma_i: \reals \to \R$.
Now we consider a threshold vector $\tau \in \reals^{m-1}$ with $\tau_1 < \tau_2 \dots < \tau_m$, where a threshold strategy is defined as follows:
\[
	\sigma^{\tau}(x)=
	\begin{cases}
	1, & x \leq \tau_1 \\
    2, & \tau_1 < x \leq \tau_2 \\
    \ldots \\
    k, & \tau_{k-1} < x \leq \tau_k \\ 
    \ldots \\
	m, & x > \tau_{m-1}.
	\end{cases}
\]
For convenience we define $\tau_0 = -\infty$ and $\tau_m = \infty$.  
% Moreover, we define $S$ to be the $m \times m$ correlation matrix, where $S(i,j) = \Pr[X = i, X' = j] - \Pr[X = i] \Pr[X' = j]$.

In CA, the designer chooses a symmetric, binary $m \times m$ \emph{score matrix} $\Delta$. 
While~\citet{shnayder2016informed} suggest initializing $\Delta$ as the sign matrix of correlations between signal bins, this underlying structure is often unknown, and furthermore depends on the current thresholds used by agents under our dynamics model.
Thus we allow the designer to fix the score matrix beforehand.
Then if agents $i$ and $j$ submit reports $r_i$ and $r_j$ for some shared task, and a peer reports $r_k$ on a distinct, i.i.d. task, agent $i$ receives payment $\Delta(r_i, r_j) - \Delta(r_i, r_k)$.

As in DG, if all agents are using a symmetric threshold strategy $\sigma^{\tau}$, agent $i$'s expected penalty for reporting $\ell$ is $\pi_{\ell}$, where $\pi_{\ell} = \sum_{k \in [m]} \Pr[\tau_{k-1} < X \leq \tau_k] \Delta(\ell, k)$.
That is, $\pi_{\ell}$ is the expected score over the prior of report $\ell$ against another agent using the symmetric strategy on a randomly selected task. 
Thus we work with the following formulation of CA with the same expected payment:
\begin{equation} \label{eq:ca-payment}
M_{CA}(r_i, r_j, \pi) = \Delta(r_i, r_j) - \pi_{r_i}. 
\end{equation} 
Note that when using the identity matrix as $\Delta$, CA exactly corresponds to extending the DG mechanism to $m$ reports (and in the case of binary reports, the identity matrix is the \emph{only} valid score matrix, meaning CA is equivalent to DG).
However, for other matrices $\Delta$, an agent's expected utility now depends on the probability of \emph{any} report $j$ such that $\Delta[i,j] = 1$.

Under more general monotonicity conditions over the utility functions of each report $k$ (see Condition~\ref{cond:seq-envelope-ca} in \S~\ref{app:ca}), we can derive a similar characterization of equilibria by equating utilities of adjacent reports at each threshold. 
As their forms are intuitive extensions of our analysis for binary-report settings, we leave the formal equilibrium characterization to \S~\ref{app:ca}. 
We also note that it is possible for thresholds to collapse, e.g. $\tau_1 = \tau_2 = 0$.
In this case the score matrix also collapses to a smaller report space, and we can use our characterization to check for equilibria with our new effective number of thresholds. 

\paragraph{Dynamics.}
If our monotonicity condition (Condition~\ref{cond:seq-envelope-ca}) hold for a fixed threshold $\tau$, it immediately follows that the best response to a threshold strategy $\tau \in \reals^{m-1}$ is also a threshold $\hat \tau \in \reals^{m-1}$. 
We can therefore study the same dynamics model, with $\dot \tau = \hat \tau - \tau$. 
As $m$ increases, analytically studying the dynamics of threshold strategies becomes more unwieldly, so in this section we numerically simulate the system for different initial conditions and visualize the vector field to understand which equilibria are stable or unstable.
We focus on the Gaussian model, normalizing the base signal so its variance is $a^2 = 1$.
After numerically guaranteeing that Condition~\ref{cond:seq-envelope-ca} holds in each case, we vary the individual signal precision ($1/b^2$) and study how equilibria change.

	\begin{figure}[t]
		\begin{minipage}[t]{0.49\textwidth}
			\centering
			\includegraphics[width=0.49\textwidth]{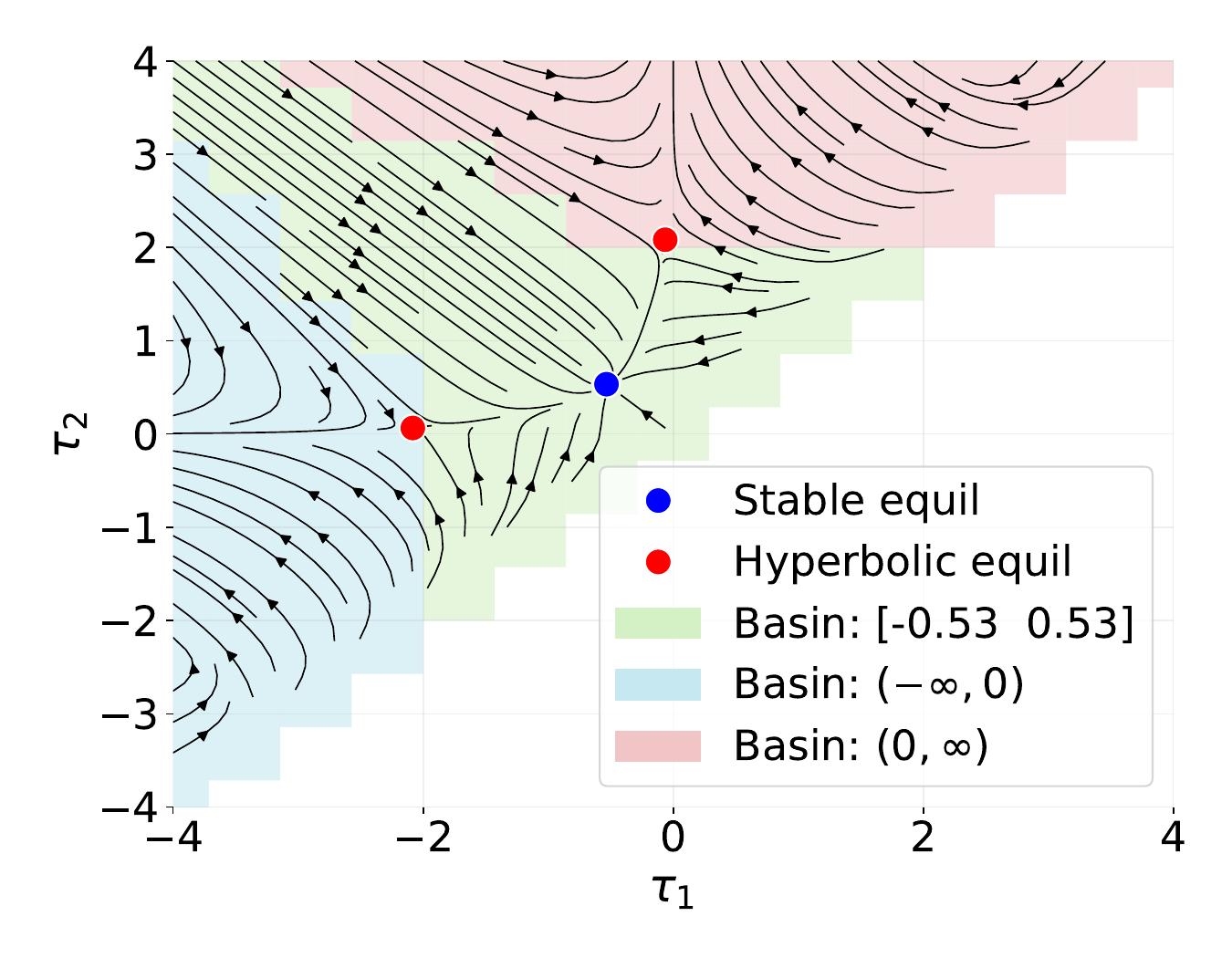}
			\includegraphics[width=0.49\textwidth]{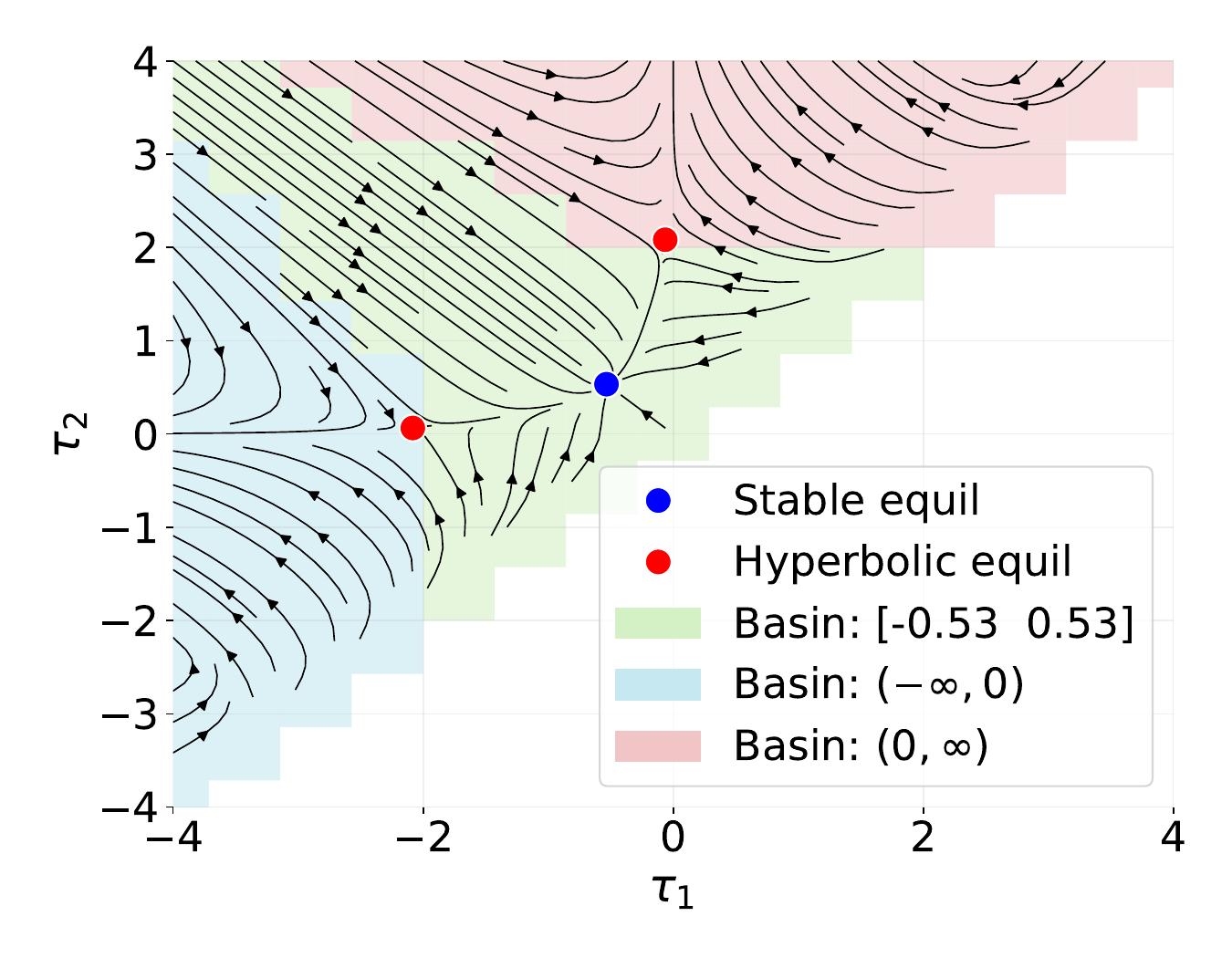}
			\caption{Vector fields of thresholds $\tau \in \reals^2$ for CA with score matrix $\Delta_I$, $m=3$, and precision $1$ (left) and $4$ (right). The basins of attraction for each stable equilibrium are indicated by color.}
			\label{fig:3-report-ca-vector}
		\end{minipage}
		\hfill
		\begin{minipage}[t]{0.49\textwidth}
			\includegraphics[width=0.49\textwidth]{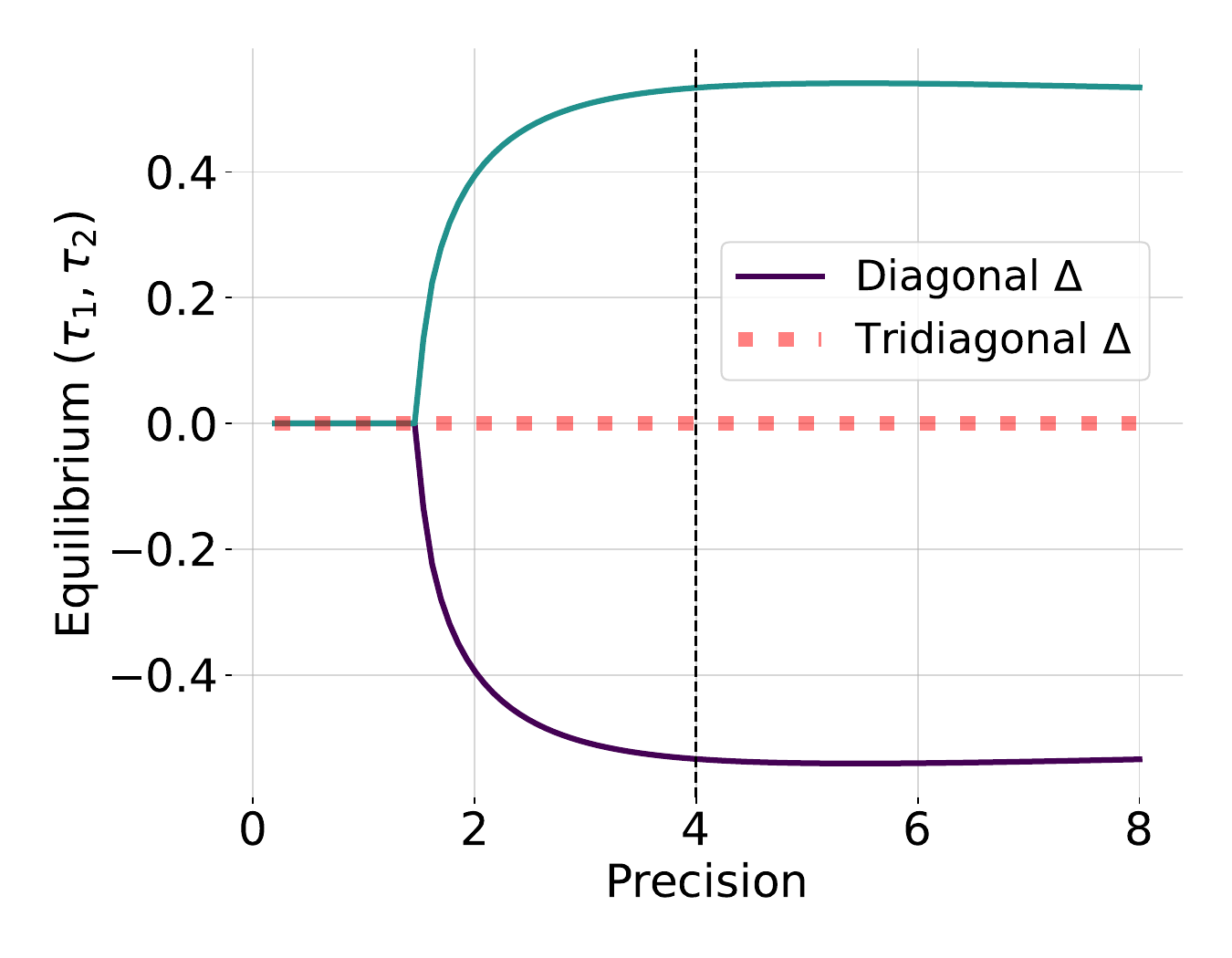}
			\includegraphics[width=0.49\textwidth]{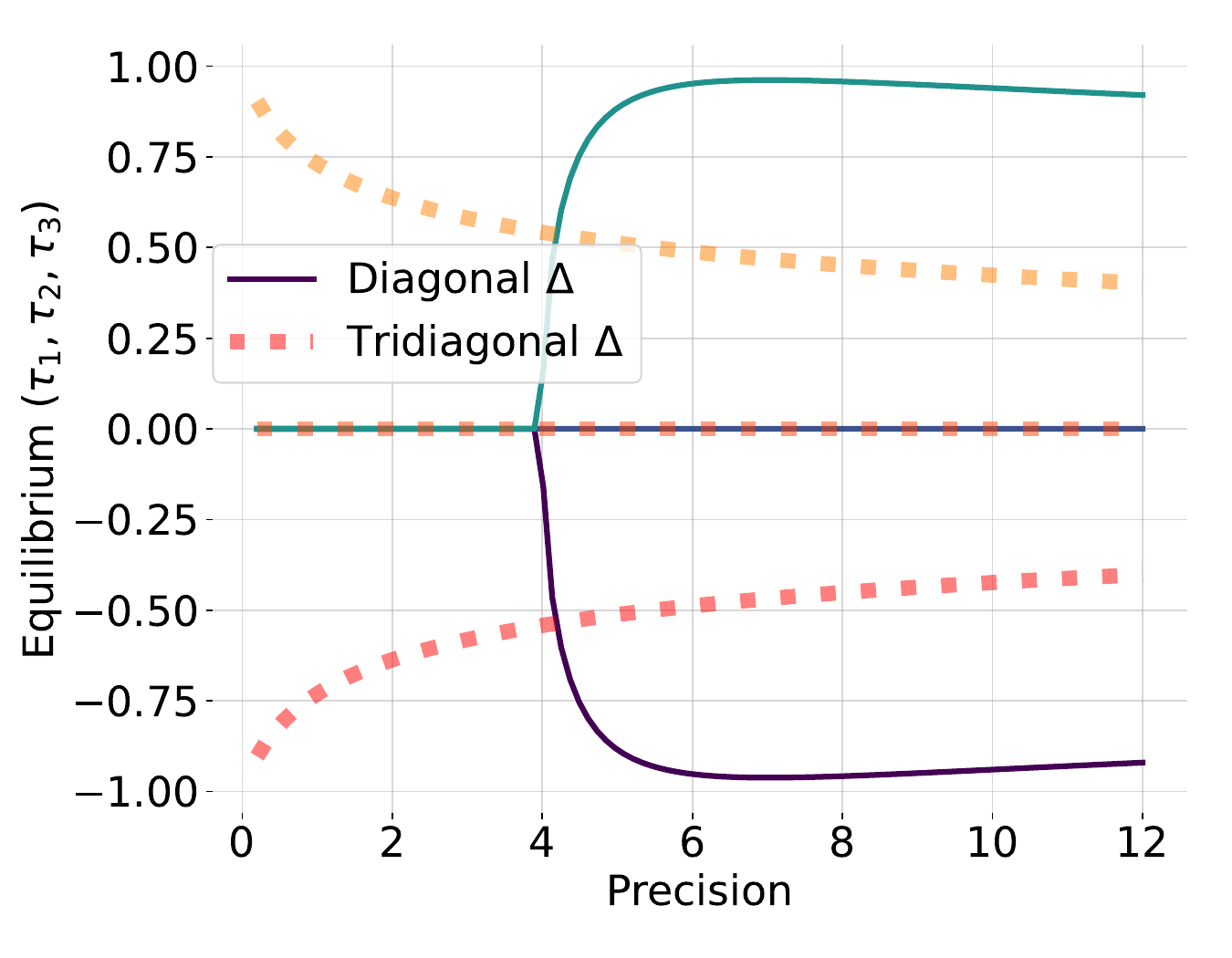}
			\caption{Nontrivial stable threshold equilibria for CA with score matrix $\Delta_I$ (solid) or $\Delta_T$ (dotted) for $m=3$ (left) and $m=4$ (right), varying precision of signals $Z_i$ under the Gaussian model.
			}
			\label{fig:ca-over-precision}
		\end{minipage}
	\end{figure}

\subsection{Diagonal score matrix (DG)}

Let $\Delta = \bm{I}$ be the identity matrix, so that CA serves as an extension of DG to non-binary reports.
For any precision value, when $m=3$ we find there exists a stable equilibrium at a point $(-c, c)$ for some $c \geq 0$.
Figure~\ref{fig:ca-over-precision} shows how the two thresholds at $-c$ and $c$ change with signal precision.
At small precision values, $c$ collapses to $0$, so that we return to the same equilibrium as the binary report case (e.g., report `1' for all signals below $0$, and `3' otherwise).
Because the posterior is flatter with longer tails, there is enough mass above any threshold $\tau_2 > 0$ relative to the prior that agents with signals $\tau_2 - \epsilon$ will prefer reporting `3', and vice versa for $\tau_1 < 0$ with signals $\tau_1 + \epsilon$.
However, for larger precision values, shorter tails allow the value of $c$ to move away from zero.

Figure~\ref{fig:3-report-ca-vector} illustrates that alongside this nontrivial stable equilibrium (blue), there also exist stable equilibria at $(-\infty, 0)$ and $(0, \infty)$.
Again, these equilibria effectively collapse thresholds to $0$, e.g. agents report `2' when receiving signals below $0$ and `3' otherwise in the former case. 
These collapsed, stable equilibria are separated from the center point $(-c,c)$ by two hyperbolic equilibria (red).
Intuitively, the basin of attraction for such collapsed equilibria correspond to settings with enough positive correlation between two adjacent signals far enough in the tails that agents are incentivized to collapse their meanings together. 
In general, then, we observe that there still only exists (at most) one nontrivial stable equilibrium beyond a collapsed threshold of $0$, and the designer has no choice in where this point is.

The pattern continues when $m = 4$ in Figure~\ref{fig:ca-over-precision} (indicated by the solid lines), with higher correlation coefficients leading to nontrivial stable threshold equilibria at some symmetric $\tau = (-c, 0, c)$.
Again, there exist several stable equilibria collapsing the threshold to $0$, now at $(-\infty, 0, 0)$, $(0, 0, \infty)$, $(-\infty, -\infty, 0)$, and $(0, \infty, \infty)$, separated by saddle points from the center equilibrium at $\tau$.
Our observations generalize for larger $m$, with at most one nontrivial stable equilibrium of dimension $m-1$ emerging for high enough precision (see \S~\ref{subsec:larger-m} for more details).

\subsection{Tridiagonal matrix}
We also study equilibria and their stability under different precison values when the score matrix corresponds to the \emph{tridiagonal matrix} $\Delta_{D}$, where $\Delta_T[i,j] = 1$ when $i$ and $j$ are \emph{adjacent}.
E.g., for $m=3$ and $m=4$ respectively, $\Delta_T$ looks like
\[
	\begin{pmatrix} 1 & 1 & 0 \\ 1 & 1 & 1 \\ 0 & 1 & 1 \end{pmatrix} \qquad \text{and} \qquad \begin{pmatrix} 1 & 1 & 0 & 0 \\ 1 & 1 & 1 & 0 \\ 0 & 1 & 1 & 1 \\ 0 & 0 & 1 & 1 \end{pmatrix}. \\
\]

When signal precision is lower, it is reasonable to expect this mechanism is robust against thresholds collapsing because agents receive payments for a neighborhood of signals which may be more likely under their posteriors. 
We begin by noting that for a report space of size $m = 3$, the expected reward under the tridiagonal matrix for report $2$ is zero, and agents are always incentivized to collapse this middle signal and return to the binary report case (see \S~\ref{subsec:tridiagonal-m-3} for a formal analysis).\footnote{This perhaps reflects the fact that in the original CA setup, the score matrix cannot be tridiagonal when $m=3$ because a signal cannot be positively correlated with all signals (as the second row indicates).}

We visualize the setting of $m = 4$ with the score matrix $\Delta_T$ in Figure~\ref{fig:ca-over-precision}.
Unlike the diagonal score matrix, here there exists a \emph{unique}, non-zero, stable equilibrium $\tau \in \reals^{3}$ \emph{even when the precision is low} (indicated by the dotted lines).
For example, when the precision is $1$, a stable equilibrium emerges under $\Delta_T$ with $\tau \approx (-0.73, 0, 0.73)$.
While a collapse to $0$ does not occur, agents will still always converge to this unique set of thresholds, and thus the designer still has no flexibility in choice over equilibria. 
We numerically confirm this behavior for higher $m$ in \S~\ref{subsec:larger-m}.

\subsection{Report mapping framework} \label{subsec:report-mapping}

The previous subsections show that with non-binary report spaces, employing CA with either score matrix leads to the same inflexibility of equilibria. 
So is there any hope in developing a mechanism that provides the designer more choices in how agents map signals to reports? 
To make progress toward this goal, we propose a framework which leverages our results for non-binary reports. 

\begin{definition}{(Report mapping framework)}
	We define the \emph{report mapping framework} as follows.
	\begin{enumerate}
		\item Run the Correlated Agreement mechanism over $m$ \emph{raw reports} with fixed score matrix $\Delta$. E.g., if $m = 3$, agents submit one of three reports in $\{1,2,3\}$.
		\item Map each agent's raw report to a \emph{final binary report} of $L$ or $H$.
\end{enumerate}
\end{definition}

Intuitively, given our results in the previous section, running step 1 with $m$ reports will lead to a threshold equilibrium of dimension $m-1$ in some regimes.
For example, when $m=3$ and the precision is 2, there is a unique stable threshold equilibrium $\tau \in \reals^2$ when $\Delta = \bm{I}$ with $\tau \approx (-0.5, 0.5)$, so that agents report `1' if their signal is below $-0.5$, `2' if their signal is between $-0.5$ and $0.5$, and `3' otherwise.
In step 2, the designer can then choose from either $-0.5$ or $0.5$ as a final threshold: e.g., a threshold of $-0.5$ is simulated by mapping reports of `2' and `3' to $H$ and reports of `1' to $L$. 
This is in contrast to our results using binary-report peer prediction mechanisms directly, where agents inevitably converge to a threshold of $0$. 

\paragraph{Using the identity matrix $\Delta_I$.}

As previously mentioned and shown in Figure~\ref{fig:ca-over-precision}, when the precision is low enough equilibria still collapse to a single threshold at $0$ under score matrix $\Delta_I$, meaning there is no inherent benefit in running the report mapping framework for $m=3$ or $m=4$.
However, when precision is high, the designer can choose from three different thresholds.
For example, if $m=3$ and the precision is $\geq 3/2$, a nontrivial threshold equilibrium emerges of the form $(-c, c)$, limiting to around $(-0.5, 0.5)$; and for $m=4$, if the precision is $\gtrapprox 4$, a nontrivial threshold equilibrium emerges of the form $(-c, 0, c)$ limiting to around $(-0.75, 0, 0.75)$.
As evident in Figure~\ref{fig:3-report-ca-vector}, the basin of attraction for these nontrivial equilibria also increases in size as precision grows, so that there is a larger margin of error for the designer in estimating the initial threshold.

We note that the bifurcation point at which equilibria no longer collapse to $0$ is higher for $m=4$ than for $m=3$, intuitively because the amount of probability per report bin under the signal distribution decreases and exacerbates agents near thresholds reporting against them.
We numerically verify that this pattern continues, with the minimum precision required for a non-zero, stable equilibrium $\tau \in \reals^{m-1}$ to emerge increasing over $m$; see \S~\ref{subsec:larger-m} for visualization.
Thus there is a tradeoff when using $\Delta_I$: while larger values of $m$ increase the number of potential thresholds in the report-mapping framework that the designer can choose from, the signal precision required for a non-zero, stable equilibrium $\tau \in \reals^{m-1}$ to emerge in the first place is also increasing over $m$.

\paragraph{Using the tridiagonal matrix $\Delta_T$.}

As a reminder, when $m=3$ our only nontrivial scoring matrix option is the identity. 
But for $m=4$, employing the report-mapping framework when using the score matrix $\Delta_T$ is useful, even for low precision values. 
Specifically, there exists a nontrivial, unique threshold equilibrium even when agent information is highly noisy. 
Intuitively, the correlation between an agent's signal and adjacent bins is high enough under $\tau$ that, even under higher signal noise, they are incentivized to report truthfully according to the bin their signal is in.
In these regimes, then, the mechanism designer is able to choose between three different threshold mappings of signals to final reports ($-c$, $0$, or $c$ for some $c > 0$). 

This pattern continues for larger $m$, with a single stable threshold equilibrium $\tau \in \reals^{m-1}$ existing for any precision level. 
We note that there is a tradeoff for larger report spaces: as $m$ grows and precision limits to $0$, the outer thresholds move toward $\pm \infty$. 
However, this transition to a smaller-dimension equilibrium occurs at a significantly lower precision level than the equilibrium collapse under the diagonal matrix $\Delta_I$.
We direct the reader to \S~\ref{subsec:larger-m} for a more detailed discussion. 

\paragraph{Takeaways.}
In practice, then, using the report-mapping framework with the tridiagonal matrix $\Delta_T$ is useful when the designer is either unsure of underlying agent signal precision, or believes the precision is low.
For example, in peer grading the designer may believe there is high variability in how individual graders read an essay due to skill or lack of effort, or there is high subjectivity or disagreement about the quality of a submission.
When the designer has confidence that signal precision is higher, e.g. graders evaluate essay qualities more consistently, the diagonal matrix $\Delta_I$ can also be used to ensure there exist several threshold options for the designer.

We additionally note that while this section focuses on settings where the principal is seeking out binary reports, based on our results the report-mapping framework could be adapted to any final report space, as long as the raw report space is set to be larger.
\section{Discussion}
\label{sec:discussion}
We initiate the study of peer prediction when signals are finer-grained than reports. 
To start, we characterize behavior in several binary-report peer prediction mechanisms when signals are real-valued. 
While all these mechanisms have various truthfulness guarantees in the binary signal setting, in our continuous signal model we find that the notion of truthfulness breaks. 
Specifically, when agents map signals to reports according to a real-valued threshold, under dynamics we find equilibria rarely stay at the initial threshold a mechanism designer sets. 
Instead, agents will often stabilize at a different (and potentially uninformative) threshold according to the specific payment scheme.
% We show how such equilibria play out in a natural, noisy Gaussian signal setting and extend our analysis to experiments with multimodal and skewed distributions. 
We demonstrate that this inflexibility of equilibria extends to other general settings where signals are finer-grained than reports, e.g. when the report space is non-binary or when the signal space is finite (see below).

Motivated by these negative results, we present a new report-mapping framework which allows the principal more flexibility in setting threshold equilibria.
By using the Correlated Agreement mechanism to elicit raw reports from a larger space, the designer has the flexibility to choose several different mappings between these reports and $\{L,H\}$ depending on the norms with which they prefer agents to interpret their signals. 

Our results imply that the standard modeling assumptions in peer prediction literature miss important nuances, and these holes can propagate to real changes in how we expect agents to behave in practice. 
In particular, models of peer prediction have often taken as given an arbitrary joint distribution $P$ over pairs of signals drawn from a finite set.
This suggests that we should be able to choose an arbitrary meaning for each signal and then classic peer prediction should ``just work'' for the resulting distribution.
In contrast, our results show that classic peer prediction is substantially more inflexible as currently used in practice.  
Fortunately, our report-mapping framework avoids this collapse of information to a single equilibrium, by asking agents to map their information to a finer list of reports first. 
We discuss several extensions of our results and avenues for future work below. 

\paragraph{Beyond real-valued signals.}
We additionally study OA and DG when the signal space is finite and larger than the binary report space in \S~\ref{app:finite-signal}.
When signals are ordered by stochastic dominance, we find inflexibility persists: uninformative equilibria remain stable under OA, while only a few stable threshold equilibria exist around the middle signals under the Gaussian model for DG. 
A natural next question for future work, then, is how peer prediction mechanisms can be adapted to settings where there is no underlying monotone ordering of signal meanings. 
Moreover, for other settings like labeling an image from $\{$cat,dog,fish$\}$, one would not expect a single-dimensional real-valued signal model to be appropriate, with perhaps $\reals^3$ being a better choice.

\paragraph{Further broadening the model.}
Another missing piece in our model is the fact that in many settings some deterministic \emph{context}, like the content of the essay being graded, is shared by all looking at the same task.
In principle, participants may choose strategies that depend on these inessential details, like the first word of the essay, to correlate their reports, all while getting optimal payoffs in multi-task mechanisms.
While there is work dealing with behavior heterogeneity~\citep{agarwal2020peer}, such behavior seems impossible to fully rule out.
Moreover, even if participants are well-meaning, there is often still ample metadata about tasks such as categories or difficulties that naturally influence participants' process of translating their actual experience into a ``signal.''

Studying effort would also be natural in our model.
For example, one could model effort in the Gaussian model as giving the agent $aZ + bZ_i$ where $b$ is decreasing in their exerted effort.
One interesting interplay with our dynamics is that low-effort agents seem most likely to move the threshold, as high-effort agents will have more reason to respect the current one; e.g. when all agents are high-effort, they all coordinate perfectly on the correct report.
\paragraph{CA with other score matrices.}
While the tridiagonal matrix $\Delta_T$ is a natural score matrix for Correlated Agreement when signals follow the symmetric Gaussian model, using other score matrices could be useful in settings of multimodality or skewness (e.g. rewarding larger clusters of report bins when the distribution is skewed toward them).
Future work could therefore include studying how to ideally set the score matrix when the principal has some knowledge about the underlying information structure, as well as better understanding the relationship between behavior under a fixed score matrix and the true correlation structure of signals.

\subsection*{Acknowledgments}

This paper is based on initial results when the first two authors were working with the Oinc cryptocurrency team.
We thank Tim Roughgarden, and members of Oinc and Kleros, for useful discussions.

\bibliography{references}

\break

\appendix

\section{Omitted Results for Output Agreement} \label{appendix:oa}
\subsection{Equilibrium characterization generalization}
In general, we can identify \emph{where} finite equilibria exist in OA based on the extreme behavior of the function $G$.

\begin{condition} \label{cond:oa-interval}
	Let $I = [a, b] \subset \reals$ be an interval. 
	Upon seeing signal $x \leq a$, then an agent believes with probability less than 1/2 that another signal will be less than $x$.
	Meanwhile, upon seeing signal $x \geq b$, then an agent believes with probability greater than 1/2 that another signal will be less than $x$. 
  	Formally,
	$x \leq a \implies P(x'\leq x \mid x) < 1/2$ and
	$x \geq b \implies P(x'\leq x \mid x) > 1/2$.
\end{condition}

\begin{theorem} \label{thm:oa-interval}
	Let the agent signal structure satisfy Condition \ref{cond:oa-interval}, with $\Pr[X' \leq \tau \mid X = x]$ monotone decreasing and continuous in $x$, and $G$ continuous.
	Then there exists an equilibrium $\tau^* \in I$.
\end{theorem}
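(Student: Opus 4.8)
The plan is to reduce the statement to a single application of the Intermediate Value Theorem, using the equilibrium characterization of Theorem~\ref{thm:oa-equil} to upgrade a zero of $G - 1/2$ into an actual threshold equilibrium. The key observation is that Condition~\ref{cond:oa-interval} is really a statement about the function $G$ at the endpoints of $I$: instantiating the first implication at $x = a$ gives $G(a) = P(a;a) < 1/2$, and instantiating the second at $x = b$ gives $G(b) = P(b;b) > 1/2$.

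First I would note that $G$ is continuous by hypothesis and $G(a) < 1/2 < G(b)$, so the Intermediate Value Theorem produces a point $\tau^* \in (a,b) \subseteq I$ with $G(\tau^*) = 1/2$. Then I would invoke the sufficiency direction of Theorem~\ref{thm:oa-equil}: the standing assumption that $\Pr[X' \leq \tau \mid X = x] = P(\tau;x)$ is monotone decreasing and continuous in $x$ supplies exactly the continuity needed to apply that theorem together with case~(a) of its sufficient condition, so from $G(\tau^*) = 1/2$ we conclude that $\tau^*$ is a threshold equilibrium under OA. That completes the argument.

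There is no genuinely hard step here; the substance is entirely in lining up the hypotheses of Theorem~\ref{thm:oa-equil} correctly and checking that Condition~\ref{cond:oa-interval} gives the endpoint inequalities in the right (strict) direction. The one point worth a sentence is that the inequalities in the condition are \emph{strict}, so $G(a) \neq 1/2 \neq G(b)$ and the Intermediate Value Theorem applies with no boundary subtlety, placing $\tau^*$ strictly inside $I$ (and in any case $\tau^* \in [a,b] = I$ either way). I would also remark that no uniqueness is claimed: $G$ may cross $1/2$ several times within $I$, each crossing yielding a further equilibrium, consistent with the multimodal examples of \S~\ref{sec:experiments}.
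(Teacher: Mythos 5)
Your proposal is correct and mirrors the paper's own proof exactly: Condition~\ref{cond:oa-interval} gives $G(a) < 1/2 < G(b)$, continuity of $G$ plus the Intermediate Value Theorem yields $\tau^* \in I$ with $G(\tau^*) = 1/2$, and the sufficiency direction of Theorem~\ref{thm:oa-equil} (via the monotone-decreasing hypothesis on $P(\tau;x)$) upgrades this to a threshold equilibrium. No differences of substance.
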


\begin{proof}
	Condition \ref{cond:oa-interval} implies $G(a) < 1/2 $ and $G(b) > 1/2$.
	Since $G$ is continuous, we can apply the Intermediate Value Theorem to conclude there exists a point $\tau^* \in [a, b]$ such that $G(\tau^*) = 1/2$. 
	Then $\tau^*$ is an equilibrium by Theorem \ref{thm:oa-equil}.
\end{proof}
Note in particular that if $G(x)$ is monotone increasing and crosses 1/2, Condition~\ref{cond:oa-interval} is satisfied for the single point interval $\tau$ where $G(\tau) = 1/2$ and we have a unique equilibrium.

\begin{proposition} \label{prop:oa-skewed}
	Let the agent signal structure satisfy Condition \ref{cond:oa-interval}, with $\Pr[X' \leq \tau \mid X = x]$ monotone decreasing and continuous in $x$, and $G$ continuous.
	Let $m$ be the median of $F$, and consider running the OA mechanisim.
	Then if $\Pr[X' \leq m \mid m] > 1/2$, there exists an equilibrium $\tau^*$ such that $\tau^* < m$.
	Similarly, if $\Pr[X' \leq m \mid m] < 1/2$, there exists an equilibrium $\tau^*$ such that $\tau^* > m$.
\end{proposition}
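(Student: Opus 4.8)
The plan is to derive this as a refinement of the Intermediate Value Theorem argument behind Theorem~\ref{thm:oa-interval}: the extra information about $\Pr[X' \le m \mid m]$ tells us on which side of the median $m$ the graph of $G$ must cross $1/2$. First I would rewrite the hypothesis in terms of $G$: since $G(x) = \Pr[X' \le x \mid X = x]$, the condition $\Pr[X' \le m \mid m] > 1/2$ is exactly $G(m) > 1/2$, and the other case is $G(m) < 1/2$.

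Next, in the case $G(m) > 1/2$, I would invoke Condition~\ref{cond:oa-interval} at the left endpoint $x = a$ of $I = [a,b]$ to get $G(a) = \Pr[X' \le a \mid a] < 1/2$. Because Condition~\ref{cond:oa-interval} in fact forces $G(x) < 1/2$ for every $x \le a$, the median $m$ cannot lie in $(-\infty, a]$; hence $a < m$ and $[a,m]$ is a genuine interval on which $G$ is continuous by hypothesis. Applying the Intermediate Value Theorem to $G$ on $[a,m]$ yields a point $\tau^*$ with $G(\tau^*) = 1/2$, and since $G(a) \ne 1/2$ and $G(m) \ne 1/2$ this $\tau^*$ lies strictly between $a$ and $m$, so in particular $\tau^* < m$. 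Finally, since $P(\tau^*; x)$ is monotone decreasing and continuous in $x$ by hypothesis, the sufficient direction of Theorem~\ref{thm:oa-equil} (case (a)) promotes $G(\tau^*) = 1/2$ to the conclusion that $\tau^*$ is a threshold equilibrium, giving the desired equilibrium below $m$. The case $G(m) < 1/2$ is entirely symmetric: Condition~\ref{cond:oa-interval} at $x = b$ gives $G(b) > 1/2$, forces $m < b$, and the Intermediate Value Theorem on $[m,b]$ combined with Theorem~\ref{thm:oa-equil} produces an equilibrium $\tau^* > m$.

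The argument is short and there is no serious technical obstacle — it is a corollary of tools already in place. The one point requiring care is the ordering of the interval endpoints relative to $m$: one must use the \emph{uniform} sign information in Condition~\ref{cond:oa-interval} (that $G < 1/2$ on all of $(-\infty, a]$, not merely at the single point $a$) in order to conclude $a < m$ before the Intermediate Value Theorem can be applied on $[a, m]$, and symmetrically $m < b$ in the other case. The only remaining subtlety, strictness of $\tau^* < m$, is immediate from $G(m) \ne 1/2$.
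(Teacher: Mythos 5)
Your proposal is correct and follows essentially the same route as the paper's proof: translate the hypothesis into $G(m)\gtrless 1/2$, use Condition~\ref{cond:oa-interval} to place $m$ on the correct side of the interval, apply the Intermediate Value Theorem to locate a crossing $G(\tau^*)=1/2$ on the appropriate side of $m$, and promote it to an equilibrium via the sufficiency direction of Theorem~\ref{thm:oa-equil}. The only cosmetic difference is that you apply the IVT once on $[a,m]$ (resp.\ $[m,b]$), whereas the paper first splits into the subcases $m>b$ and $m\in I$ and invokes Theorem~\ref{thm:oa-interval} in the former; your handling of the strictness $\tau^*\neq m$ is also slightly more explicit.
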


\begin{proof}
	First, by Theorem \ref{thm:oa-interval}, we know at least one equilibrium $\tau^*$ exists in the interval $I$. 
	Now assume that $\Pr[X' \leq m \mid m] = G(m) > 1/2$. 
	It follows under Condition~\ref{cond:oa-interval} that either $m > b$ or $m \in I$.
	If $m > b$, then by Theorem~\ref{thm:oa-interval} there exists an equilibrium $\tau^*$ in the interval $I$; $\tau^* \leq b < m$, so the result follows.
	If $m \in I$, then $m \geq a$.
	Since $G$ is continuous, $G(a) < 1/2$, and $G(m) > 1/2$, by the Intermediate Value Theorem there exists a point $\tau^* \in [a, m]$ such that $G(\tau^*) = 1/2$.
	By our characterization in Theorem~\ref{thm:dg-equil}, $\tau^*$ is an equilibrium.

	A symmetric argument holds for when $\Pr[X' \leq m \mid m] < 1/2$. 
\end{proof}

\subsection{Dynamics generalization}

We can observe instability of equilibria under Condition~\ref{cond:oa-interval}.
\begin{proposition} \label{prop:oa-unstable}
	Let the agent signal structure satisfy Condition~\ref{cond:oa-interval} for interval $I$, and such that $G$ is continuous and differentiable. 
	Then there exists an equilibrium $\tau^* \in I$ which is unstable, while the equilibria $\pm \infty$ are stable.
\end{proposition}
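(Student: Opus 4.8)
The plan is to combine the existence result of Theorem~\ref{thm:oa-interval} with the instability criterion of Theorem~\ref{thm:oa-dynamics}, and to handle the behavior at $\pm\infty$ directly from Condition~\ref{cond:oa-interval}. Throughout I would work with $g(\tau) := G(\tau) - 1/2$, which by Condition~\ref{cond:oa-interval} satisfies $g(\tau) < 0$ for $\tau \le a$ and $g(\tau) > 0$ for $\tau \ge b$, and is continuous (indeed differentiable) on $[a,b]$. Recall that by Theorem~\ref{thm:oa-equil} the finite equilibria in $I$ are exactly the zeros of $g$, and that we are implicitly in the regime where $P(\tau;x)$ is strictly decreasing and continuous in $x$ (the standing assumption of the dynamics section), so that Theorem~\ref{thm:oa-dynamics} and the best-response characterization apply.

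First I would locate an unstable finite equilibrium. The idea is to pick an equilibrium at which $g$ changes sign from negative to positive, since this is precisely the ``strictly increasing at $\tau^*$'' hypothesis of Theorem~\ref{thm:oa-dynamics}. Concretely, set $\tau^* := \sup\{\tau \in [a,b] : g(\tau) \le 0\}$. Continuity of $g$ together with $g(b) > 0$ forces $\tau^* < b$ and $g(\tau^*) = 0$, so $\tau^*$ is an equilibrium, and $g(\tau) > 0$ for every $\tau \in (\tau^*, b]$; since Condition~\ref{cond:oa-interval} also gives $g > 0$ on $[b,\infty)$, we obtain $g > 0$ on all of $(\tau^*, \infty)$. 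For the left of $\tau^*$ I would invoke differentiability of $G$: since $g(\tau^*)=0$ and $g$ is positive immediately to the right, $g'(\tau^*) \ge 0$, and in the nondegenerate case $g'(\tau^*) > 0$ one gets $g < 0$ just left of $\tau^*$, a genuine negative-to-positive crossing; in the degenerate case one instead passes to the left endpoint of a maximal zero-interval abutting $\tau^*$, or symmetrically reruns the argument from $\tau_\ell := \inf\{\tau\in[a,b] : g(\tau) \ge 0\}$ (where $g<0$ on $[a,\tau_\ell)$). At such a crossing point, Theorem~\ref{thm:oa-dynamics} yields instability.

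Next, stability of $\pm\infty$. By Proposition~\ref{prop:oa-infinite-equil}, $\tau = \pm\infty$ are equilibria under OA. For $+\infty$, take any finite perturbation, i.e.\ a starting threshold $\tau_0$ with $g(\tau_0) > 0$ (in particular any $\tau_0 \ge b$ by Condition~\ref{cond:oa-interval}). Then $G(\tau_0) > 1/2$, so the case analysis in the proof of Theorem~\ref{thm:oa-dynamics} gives $\hat\tau_0 > \tau_0$ and hence $\dot\tau = \hat\tau - \tau > 0$; since $g > 0$ persists for all $\tau \ge b$, the trajectory is monotone increasing and converges to $+\infty$, so $+\infty$ is stable. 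The case $\tau=-\infty$ is symmetric, using $g < 0$ on $(-\infty, a]$ to get $\dot\tau < 0$. (Note we cannot appeal to ``topological necessity'' here since several finite equilibria may coexist, so the direct sign argument is needed.)

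The main obstacle is the left-of-$\tau^*$ step: Condition~\ref{cond:oa-interval} plus continuity only pin down the sign of $g$ outside $[a,b]$ and guarantee \emph{some} zero inside, but when the zero set of $g$ is not isolated (e.g.\ $G$ is locally constant at level $1/2$), there need not be a strict negative-to-positive crossing, and Theorem~\ref{thm:oa-dynamics} as stated does not classify such degenerate equilibria. This is where differentiability of $G$ — and possibly a mild genericity assumption ruling out $G$ being constant on a subinterval of $I$ — does the real work; I would either adopt such an assumption explicitly or argue that the structure inherited from $P(\tau;x)$ being strictly monotone in $x$, together with $g(a)<0<g(b)$, forces at least one strict crossing among the zeros of $g$.
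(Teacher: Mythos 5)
Your proposal is correct in substance and follows the same core strategy as the paper for the finite equilibrium: both arguments locate a point where $G-1/2$ crosses zero from negative to positive inside $I$ (guaranteed by $G(a)<1/2<G(b)$ and continuity) and then invoke Theorem~\ref{thm:oa-dynamics} to conclude instability, implicitly leaning on the standing assumption that $P(\tau;x)$ is strictly decreasing and continuous in $x$. Where you genuinely diverge is the treatment of $\pm\infty$: the paper examines the equilibria nearest $a$ and $b$, shows each is unstable on its outer side ($G'\geq 0$ there), and then declares stability of $\pm\infty$ ``by topological necessity,'' whereas you argue it directly from the dynamics -- for any finite $\tau\geq b$ one has $G(\tau)>1/2$, hence $\hat\tau>\tau$ and $\dot\tau>0$, so the trajectory increases monotonically and cannot stall at a finite limit since $G\neq 1/2$ on $[b,\infty)$; symmetrically for $-\infty$. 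Your direct sign argument is arguably the more robust of the two, precisely because (as you note) several interior equilibria may coexist and a bare appeal to topological necessity then requires the one-sided analysis the paper performs. On the degenerate cases, you are also somewhat more careful than the source: the paper asserts existence of a crossing with $G'(\tau_0)>0$ via a one-line contradiction that does not really address plateaus or tangential crossings, while your $\sup$-based construction isolates the rightmost sign change and you explicitly flag the zero-interval degeneracy; note in fact that Theorem~\ref{thm:oa-dynamics} only needs the local sign behavior of $G-1/2$ around $\tau^*$ (``strictly increasing at $\tau^*$'' in the crossing sense), not a strictly positive derivative, so your nondegenerate/degenerate split can be collapsed by working with the sign change directly. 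In short: same skeleton, but your handling of the infinite equilibria and of potential degeneracies is a modest improvement over the paper's own write-up.
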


\begin{proof}
	We know already from Theorem \ref{thm:oa-interval} that at least one equilibrium exists in the interval $I$. 
	Note first that since $G(a) < 1/2$ and $G(b) > 1/2$, by continuity and differentiability $G$ must cross 1/2 at some point $\tau_0$ with $G'(\tau_0) > 0$. 
	Assume not: then $G(x) \leq 1/2$ for all $x$, which contradicts the fact that $G(x) > 1/2$ for $x \geq b$.
	It follows by Theorem \ref{thm:oa-dynamics} that $\tau_0$ is an unstable equilibrium.

	\begin{enumerate}
		\item Now, WLOG we pick the equilibrium $\tau_1$ which is closest to $a$. 
		By Condition \ref{cond:oa-interval} and since $G(x)$ is continuous, $G(x) < 1/2$ for $x < \tau_1$. 
		Then it must follow that $G'(\tau_1) \geq 0$. 
		Thus $\tau_1$ is unstable, at least on the left (if $G'(\tau_1) = 0$, then $\tau_1$ is unstable on the left and stable on the right).
		\item Pick the equilibrium $\tau_2$ which is closest to $b$ (this could be the same as the $\tau_1$ we chose in the previous step).
		By Condition \ref{cond:dg-interval} and since $G(x)$ is continuous, $G(x) > 1/2$ for $x > \tau_2$. 
		Then it must follow that $G'(\tau_2) \geq 0$. 
    Thus $\tau_2$ is unstable, at least on the right (if $G'(\tau_2) = 0$, then $\tau_2$ is unstable on the right and stable on the left).
	\end{enumerate}
	We know by Condition \ref{cond:oa-interval} that $G(x) \neq 1/2$ for $x \notin [a, b]$, so that no equilibria occur outside the interval $I$ other than $\pm \infty$.
	Since both $\tau_1$ and $\tau_2$ (or just $\tau_1$, if they are the same equilibrium) are unstable (at least to the left for $\tau_1$, and right for $\tau_2$), it is topologically necessary that the uninformative equilibria $\pm \infty$ are stable.
\end{proof}

In most reasonable settings, one would expect $G(\tau)$ to behave according to Condition \ref{cond:oa-interval}.
Specifically, as a signal $x$ becomes increasingly small, the probability another agent receives a signal smaller than $x$ should approach 0.
Meanwhile, as a signal $x$ grows large, the probability another agent receives a signal smaller than $x$ should approach 1. 
Proposition \ref{prop:oa-unstable} thus suggests that when information corresponds predictably and monotonically to quality of an essay or task, over time agents will naturally move toward uninformative equilibria. 
Intuitively, these dynamics make sense: an agent who thinks others are submitting more ``high'' reports will shift their threshold down to match, reinforcing larger and larger thresholds over time. 

\section{Omitted Results for Dasgupta-Ghosh} \label{appendix:dg}

\subsection{Equilibrium characterization}
We begin by formally characterizing equilibria for DG, as in OA.
By Equation~\ref{eq:dg-util}, the condition for a symmetric Bayes-Nash equilibrium can be written as
\[
	\forall x \in \reals,\; \E_{x' \sim \beta(x)} \ones[\sigma(x) = \sigma(x')] - \pi_{\sigma(x)} \geq \E_{x' \sim \beta(x)} \ones[\overline{\sigma(x)} = \sigma(x')] - \pi_{\overline{\sigma(x)}}.
\]
Since $\E_{x' \sim \beta(x)} \ones[\sigma(x) = \sigma(x')] = 1 -  \E_{x' \sim \beta(x)} \ones[\overline{\sigma(x)} = \sigma(x')]$, this condition simplifies to
\begin{equation} \label{cond:general-equil}
	\forall x \in \reals,\; \E_{x' \sim \beta(x)} \ones[\sigma(x) = \sigma(x')] \geq \pi_{\sigma(x)}.
\end{equation}
For a threshold equilibrium, Condition~\ref{cond:general-equil} is equivalent to:
\begin{align}
	\forall x \leq \tau, P(\tau; x) \geq \pi_L \label{eq:equil-1-dg} \\
	\forall x > \tau, P(\tau; x) \leq \pi_L. \label{eq:equil-2-dg}
\end{align}

Now we have the tools to analyze threshold equilibria in DG.
We first observe that uninformative strategies $\tau^* = \pm \infty$ remain equilibria in the DG mechanism.
\begin{proposition} \label{prop:dg-infinite-equil}
	$\tau^* = \pm \infty$ are both always threshold equilibria under DG.
\end{proposition}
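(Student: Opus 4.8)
The plan is to mirror the proof of Proposition~\ref{prop:oa-infinite-equil}, the only new ingredient being that the DG penalty $\pi_L = F(\tau)$ takes the extreme values $1$ and $0$ at $\tau^* = \infty$ and $\tau^* = -\infty$ respectively, so the equilibrium inequalities \eqref{eq:equil-1-dg}--\eqref{eq:equil-2-dg} continue to hold (now with equality at the relevant signals). Equivalently, one can verify the unsimplified condition \eqref{cond:general-equil} directly, which at the extremes reduces to $1 \geq 1$.

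First I would take $\tau^* = \infty$. Then $\pi_L = F(\infty) = 1$ and $\pi_H = 0$, and the threshold strategy $\sigma^{\tau^*}$ reports $L$ on every real signal. For any $x \in \reals$ we have $x \leq \tau^*$ and $P(\tau^*; x) = \Pr[X' \leq \infty \mid X = x] = 1 \geq 1 = \pi_L$, so \eqref{eq:equil-1-dg} holds; condition \eqref{eq:equil-2-dg} is vacuous since no real $x$ satisfies $x > \infty$. Next I would treat $\tau^* = -\infty$ symmetrically: $\pi_L = F(-\infty) = 0$, every agent reports $H$, and for any $x \in \reals$ we have $x > \tau^*$ with $P(\tau^*; x) = \Pr[X' \leq -\infty \mid X = x] = 0 \leq 0 = \pi_L$, which establishes \eqref{eq:equil-2-dg}; here \eqref{eq:equil-1-dg} is vacuous.

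I do not anticipate a real obstacle; the one point worth stating carefully is that at the extremes the penalty term exactly equals the agreement probability, so the equilibrium conditions are satisfied with equality rather than being violated. This is precisely why the uninformative equilibria of OA persist as equilibria under the DG penalty, even though — as DG is designed to ensure — their equilibrium \emph{value} drops to zero.
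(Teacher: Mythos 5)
Your proposal is correct and follows essentially the same route as the paper's proof: verify conditions \eqref{eq:equil-1-dg}--\eqref{eq:equil-2-dg} directly, noting that at $\tau^* = \pm\infty$ the penalty $\pi_L = F(\tau^*)$ equals $1$ or $0$ so the relevant inequality holds (with equality) while the other condition is vacuous. The added remark that the equilibrium value drops to zero is a nice observation but not needed for the argument.
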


\begin{proof}
	Let $\tau^* = \infty$. 
	Then for all signals $x$ such that $x \leq \tau^*$, $P(\tau^*; x) = 1 \geq \Pr[X' \leq \infty]$ (since any $X' \in \reals$ satisfies $X' < \tau^*$ with probability one.)
	Meanwhile, Statement~\ref{eq:equil-2-dg} is vacuous since no signal $x \in \reals$ satisfies $x > \infty$. 
	A similar argument follows for $\tau^* = -\infty$: for all signals $x > \tau^*$, $P(\tau^*; x) = 0 \leq \Pr[X' \leq -\infty]$, while Statement~\ref{eq:equil-1-dg} is vacuous since no signal $x \in \reals$ satisfies $x < -\infty$.
\end{proof}

We can also provide necessary and sufficient conditions for a finite threshold equilibrium for DG.  Both their form and proof follow the same logic we saw for OA, adapted to account for the additional $\pi_L$ term also depending on the choice of threshold.

\begin{theorem} \label{thm:dg-equil}
	Let finite threshold $\tau$ be given and $P(\tau; x)$ be continuous in $x$.
	If $\tau$ is a threshold equilibrium under the DG mechanism then $G(\tau) = F(\tau)$.
	Conversely, if $G(\tau) = F(\tau)$ and either (a) $P(\tau; x)$ is monotone decreasing in x 
	or (b) $P(\tau; x) -\pi_L$ has a single crossing of $0$ from positive to negative 
	then $\tau$ is a threshold equilibrium under the DG mechanism.
\end{theorem}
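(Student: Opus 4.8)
The plan is to mirror the proof of Theorem~\ref{thm:oa-equil} almost verbatim, with the constant $1/2$ replaced throughout by $\pi_L = F(\tau)$. The key observation that makes this work is that, although $\tau$ now appears in two places (inside $P(\tau;x)$ and inside $\pi_L = F(\tau)$), we are analyzing a \emph{fixed} threshold $\tau$: the other agents' common strategy is $\sigma^\tau$, so $\pi_L = \Pr[\sigma^\tau(X)=L] = F(\tau)$ is just a number, and the deviating agent's per-signal comparison reduces, exactly as recorded in~\eqref{eq:equil-1-dg}--\eqref{eq:equil-2-dg}, to whether $P(\tau;x)$ lies above or below the constant $\pi_L$. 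So the whole argument is carried out at this fixed $\tau$ with $\pi_L$ treated as a constant.

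For necessity, I would assume $\tau$ is a threshold equilibrium, so~\eqref{eq:equil-1-dg} and~\eqref{eq:equil-2-dg} hold, and invoke continuity of $P(\tau;\cdot)$ in $x$: letting $x \to \tau^-$ in~\eqref{eq:equil-1-dg} gives $P(\tau;\tau) \geq \pi_L$, and letting $x \to \tau^+$ in~\eqref{eq:equil-2-dg} gives $P(\tau;\tau) \leq \pi_L$, hence $P(\tau;\tau) = \pi_L$, i.e.\ $G(\tau) = F(\tau)$. For sufficiency, I would assume $G(\tau) = F(\tau)$, equivalently $P(\tau;\tau) - \pi_L = 0$. Under (a), $P(\tau;\cdot)$ is decreasing, so $x \mapsto P(\tau;x) - \pi_L$ is decreasing and vanishes at $\tau$; this is a (weak) single crossing of $0$ from positive to negative located at $\tau$, so (a) implies (b). Assuming (b), the crossing point must be $\tau$ since $P(\tau;\tau) - \pi_L = 0$; therefore $P(\tau;x) \geq \pi_L$ for $x \leq \tau$ and $P(\tau;x) \leq \pi_L$ for $x > \tau$, which is precisely~\eqref{eq:equil-1-dg} and~\eqref{eq:equil-2-dg}. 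Hence $\tau$ is a threshold equilibrium.

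I do not expect a genuine obstacle here beyond the OA case; the one subtlety worth flagging is the dual role of $\tau$ in DG, which is invisible inside the fixed-$\tau$ argument (where $\pi_L$ is constant) but surfaces when the necessary condition is read as a description of the equilibrium \emph{set}: it becomes the curve equation $G(\tau) = F(\tau)$ rather than the constant-level condition $G(\tau) = 1/2$ of OA, which is exactly why the mechanism designer's control over the implementable threshold differs from OA. As in the OA discussion, I would add a sentence noting that $P(\tau;x) - \pi_L$ is, up to the nonnegative density $f$, the derivative of the deviating agent's expected DG utility in its threshold evaluated at the symmetric point, so $G(\tau) = F(\tau)$ is the first-order condition and conditions (a)/(b) are the corresponding concavity and quasiconcavity conditions — while the proof above establishes optimality directly over \emph{all} strategies, as a Bayes--Nash equilibrium requires.
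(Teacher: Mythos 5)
Your proposal is correct and follows essentially the same route as the paper's own proof: it reduces to the per-signal conditions~\eqref{eq:equil-1-dg}--\eqref{eq:equil-2-dg} with $\pi_L = F(\tau)$ treated as a constant at the fixed $\tau$, uses continuity and one-sided limits at $\tau$ for necessity, and for sufficiency notes that (a) implies (b) with the crossing pinned at $\tau$ by $P(\tau;\tau)=\pi_L$. The added remarks on the first-order-condition interpretation also match the paper's discussion of the OA case, so nothing further is needed.
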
 

\begin{proof}
	For necessity, assume that $\tau$ is a threshold equilibrium, so
	Equations \eqref{eq:equil-1-dg} and \eqref{eq:equil-2-dg} hold.
	Since $P(\tau; x)$ is continuous, we must have $\lim_{x \to \tau^+} P(\tau; x) = \lim_{x \to \tau^-} P(\tau; x) = P(\tau; \tau).$
	Further, $\lim_{x \to \tau^+} P(\tau; x) \leq \pi_L$ and $\lim_{x \to \tau^-} P(\tau; x) \geq \pi_L$, so  $G(\tau) = P(\tau; \tau) = \pi_L  = F(\tau)$.

	For sufficiency, assume $G(\tau) = F(\tau)$.  Equivalently $P(\tau; \tau) - \pi_L = 0$, so (a) implies (b) and the single crossing is at $\tau$.  Thus assume (b) holds and
	take some $x > \tau$. 
  	By single crossing, $P(\tau; x) \leq P(\tau; \tau) = G(\tau) = F(\tau) = \pi_L$.
  	Similarly, if $x \leq \tau$, $P(\tau; x) \geq P(\tau; \tau) = \pi_L$.
  	This establishes Equations \eqref{eq:equil-1-dg} and \eqref{eq:equil-2-dg}, so $\tau$ is a threshold equilibrum.
\end{proof}

\subsection{Dynamics}
We consider stability of equilibria for DG under the dynamics described by Equation \eqref{eq:continuous-dynamic}.
Note that as in OA, when everyone else is playing according to $\tau$, there exists a unique best response which is also a threshold strategy.

\begin{proposition} \label{prop:dg-best-response}
	Assume that $P(\tau; x)$ is strictly decreasing and continuous over $x$ for a fixed threshold $\tau$.
	If all agents are playing according threshold strategy $\tau$, the unique best response of an agent across all strategies $\sigma: \reals \to \R$ is to play according to threshold strategy $\hat \tau$ satisfying  $P(\tau; \hat \tau) = F(\tau)$.
\end{proposition}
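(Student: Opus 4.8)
The plan is to follow the same template as the analogous best-response result for OA, with the sole new ingredient being the penalty term $\pi_L = F(\tau)$. The first step is to note that, since every peer plays the symmetric threshold strategy $\sigma^\tau$ on every task, the deviating agent's expected penalty $\pi_{r_i}$ depends only on $\tau$ and on the \emph{value} of the report $r_i$, and not on the rest of the deviating agent's strategy. Hence the best-response optimization decomposes pointwise over signals: for each realized $x$, the agent independently chooses $r_i(x) \in \{L, H\}$ to maximize $\E_{x' \sim \beta(x)} \ones[r_i(x) = \sigma^\tau(x')] - \pi_{r_i(x)}$.

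Next I would carry out this pointwise comparison explicitly. Using $\pi_L = F(\tau)$ and $\pi_H = 1 - F(\tau)$, the expected utility of reporting $L$ on signal $x$ equals $P(\tau; x) - F(\tau)$, while that of reporting $H$ equals $(1 - P(\tau; x)) - (1 - F(\tau)) = F(\tau) - P(\tau; x)$. Subtracting, the agent weakly prefers $H$ exactly when $P(\tau; x) \le F(\tau)$ and strictly prefers $L$ exactly when $P(\tau; x) > F(\tau)$. Now I invoke the hypotheses: since $x \mapsto P(\tau; x)$ is continuous and strictly decreasing, $P(\tau; x) = F(\tau)$ has a unique solution $\hat\tau$, with $P(\tau; x) > F(\tau)$ for $x < \hat\tau$ and $P(\tau; x) < F(\tau)$ for $x > \hat\tau$. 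Substituting into the pointwise characterization, any best response must report $L$ for $x < \hat\tau$ and $H$ for $x > \hat\tau$, and is indifferent only at the single point $x = \hat\tau$; this is precisely the threshold strategy $\sigma^{\hat\tau}$ (the convention $\sigma^{\hat\tau}(\hat\tau) = L$ resolving the measure-zero tie), giving both existence and uniqueness of the best response.

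The one point I would take care with — and the main (minor) obstacle — is the existence and finiteness of $\hat\tau$: strict monotonicity and continuity alone only deliver \emph{uniqueness} of a crossing, so one additionally needs $F(\tau)$ to lie in the range of $P(\tau; \cdot)$, which is the continuous analogue of strong diagonalization and follows from the natural tail conditions $\lim_{x \to -\infty} P(\tau; x) = 1$ and $\lim_{x \to +\infty} P(\tau; x) = 0$. If these fail, $\hat\tau$ is read as $\pm\infty$, recovering the uninformative best responses; this mirrors the treatment of the corresponding OA proposition. Everything else is routine algebra inherited from the OA analysis.
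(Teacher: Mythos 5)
Your argument matches the paper's proof: a pointwise comparison showing the agent reports $H$ exactly when $P(\tau;x)\leq F(\tau)$, followed by strict monotonicity and continuity of $P(\tau;\cdot)$ delivering a unique crossing at $\hat\tau$, which defines the best-response threshold strategy. Your added remark that existence of a finite $\hat\tau$ requires $F(\tau)$ to lie in the range of $P(\tau;\cdot)$ (with $\hat\tau=\pm\infty$ otherwise) is a reasonable refinement of a point the paper's proof asserts without comment, but it does not change the substance of the argument.
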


\begin{proof}
	If $\tau$ is the current threshold strategy that all other agents are following, an agent who receives signal $x$ will best respond with $H$ if $P(\tau; x) \leq F(\tau)$, and $L$ otherwise. 
	Since $P(\tau; x)$ is strictly decreasing and continuous over $x$, there will be a unique point $\hat \tau$ such that $P(\tau; \hat \tau) = F(\tau)$, with $P(\tau; x) \geq F(\tau)$ for $x \leq \hat \tau$ and $P(\tau; x) \leq F(\tau)$ for $x > \hat \tau$.
	This threshold $\hat \tau$ thus corresponds to the best response.
\end{proof}

Now, as in OA, we can characterize the stability of equilibria.

\begin{theorem} \label{thm:dg-dynamics}
	Assume for all $\tau \in \reals$ that $P(\tau; x)$ is strictly decreasing and continuous over $x$.
	Then if $G(\tau) - F(\tau)$ is strictly decreasing at equilibrium point $\tau^*$, $\tau^*$ is stable.  Similarly, if it is strictly increasing $\tau^*$ is stable.
\end{theorem}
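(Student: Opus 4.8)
The plan is to mirror the proof of Theorem~\ref{thm:oa-dynamics}, replacing the constant $1/2$ that governed best responses in OA with the report frequency $\pi_L = F(\tau)$. Fix a time step $t$ and write $\tau$ for $\tau(t)$. The key quantity is the incentive felt by an agent whose signal lands exactly at $\tau$: from the simplified equilibrium condition~\eqref{cond:general-equil}, the utility of reporting $L$ minus that of reporting $H$ at signal $x=\tau$ equals $\bigl(G(\tau) - \pi_L\bigr) - \bigl(\pi_L - G(\tau)\bigr) = 2\bigl(G(\tau) - F(\tau)\bigr)$, since $P(\tau;\tau) = G(\tau)$ and $\pi_H = 1-\pi_L$. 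There are then three cases. If $G(\tau) = F(\tau)$, the agent at $\tau$ is indifferent; Proposition~\ref{prop:dg-best-response} gives $P(\tau;\hat\tau) = F(\tau) = G(\tau) = P(\tau;\tau)$, so $\hat\tau = \tau$ by strict monotonicity of $P(\tau;\cdot)$, hence $\dot\tau = \hat\tau - \tau = 0$ and we are at an equilibrium of the dynamics. If $G(\tau) > F(\tau)$, reporting $L$ is strictly preferred; since $P(\tau;\hat\tau) = F(\tau) < G(\tau) = P(\tau;\tau)$ and $P(\tau;\cdot)$ is strictly decreasing, $\hat\tau > \tau$ and $\dot\tau > 0$. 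Symmetrically, $G(\tau) < F(\tau)$ forces $\hat\tau < \tau$ and $\dot\tau < 0$.

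With the sign of $\dot\tau$ tied to the sign of $G(\tau) - F(\tau)$, the stability conclusion is immediate. At an equilibrium $\tau^*$ we have $G(\tau^*) = F(\tau^*)$ by Theorem~\ref{thm:dg-equil}. If $G - F$ is strictly decreasing at $\tau^*$, then perturbing slightly to the left makes $G(\tau) - F(\tau) > 0$, so $\dot\tau > 0$ and the system moves back to the right; perturbing to the right makes $G(\tau) - F(\tau) < 0$, so $\dot\tau < 0$ and the system moves back to the left. Hence $\tau^*$ is stable. If instead $G - F$ is strictly increasing at $\tau^*$, the same computation shows the dynamics push away from $\tau^*$ on both sides, so $\tau^*$ is unstable. (The statement of Theorem~\ref{thm:dg-dynamics} should read ``unstable'' in the strictly-increasing case, matching Theorem~\ref{thm:oa-dynamics}.)

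The one point requiring care — the only real difference from the OA argument — is that the penalty $\pi_L = F(\tau)$ now moves with the threshold, so the function whose crossing governs everything is $G(\tau) - F(\tau)$ rather than $G(\tau) - 1/2$. In particular one must keep straight that, in Proposition~\ref{prop:dg-best-response}, the best response $\hat\tau$ satisfies $P(\tau;\hat\tau) = F(\tau)$ with $F$ evaluated at the \emph{current} population threshold $\tau$ (the published report frequency reflects the current mixture), not at $\hat\tau$; this is exactly what makes the comparison between $P(\tau;\tau) = G(\tau)$ and $F(\tau)$ well-posed and lets the three-case analysis go through verbatim. As elsewhere, the mixture-versus-distribution subtlety flagged after Equation~\eqref{eq:discrete-dynamic} is set aside.
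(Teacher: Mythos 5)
Your proof matches the paper's own argument essentially verbatim: the same three-case analysis of the sign of $\dot\tau = \hat\tau - \tau$ using Proposition~\ref{prop:dg-best-response} and strict monotonicity of $P(\tau;\cdot)$, followed by the same local perturbation argument around $\tau^*$. You are also right that the theorem statement contains a typo---the strictly increasing case should read ``unstable,'' exactly as the paper's own proof (and the OA analogue, Theorem~\ref{thm:oa-dynamics}) concludes.
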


\begin{proof}
	Consider the dynamics at time step $t$, with the current threshold $\tau(t)$.
	Since we are considering a fixed time $t$, we refer to $\tau(t)$ as $\tau$ throughout the proof.

	We consider what happens when an agent receives a signal exactly at $\tau$.
	There are three cases to consider.
	In the first, $G(\tau) = F(\tau)$; then the expected utility of reporting $L$ is $\Pr[X' \leq \tau \mid X = \tau] - \Pr[X' \leq \tau] = 0$, while the expected utility of reporting $H$ is $\Pr[X' > \tau \mid X = \tau] - \Pr[X' > \tau] = 0$; thus the agent is indifferent between reporting $H$ or $L$.
	Moreover, the best response $\hat \tau$ is exactly $\tau$.
	It follows the system is at an equilibrium since $\dot \tau = \hat\tau - \tau = 0$.

	In the second case, $G(\tau) > F(\tau)$. 
	Then reporting $L$ is strictly preferred to reporting $H$.
	Since $P(\tau; \hat \tau) = F(\tau) < G(\tau) = P(\tau; \tau)$, by decreasing monotonicity we have $\hat{\tau} > \tau$. 
	Thus $\dot \tau = \hat\tau - \tau > 0$.
	In the third case, $G(\tau) < F(\tau)$. 
	Then reporting $H$ is strictly preferred to reporting $L$.
	Since $P(\tau; \hat \tau) = F(\tau) > G(\tau) = P(\tau; \tau)$, by decreasing monotonicity we have $\tau > \hat{\tau}$. 
	Thus $\dot \tau = \hat\tau - \tau < 0$.

	If $G(\tau) - F(\tau)$ is strictly decreasing at $\tau^*$, then at a sufficiently close perturbed point $\tau$ to the left of $\tau^*$ we have $G(\tau) > F(\tau)$ and $\dot \tau > 0$; and at a perturbed point $\tau$ to the right of $\tau^*$ we have $G(\tau) < F(\tau)$ and $\dot \tau < 0$.
	Stability follows.  The same logic implies instability in the strictly increasing case.
\end{proof}

% As mentioned before, the difference between $G$ in the OA case and $G-F$ here is important.
% We saw that $G$ was strictly increasing in natural settings like the Gaussian model.  In contrast, in similar settings $G - F$ is strictly decreasing at a unique finite equilibrium point.
% While not all settings will behave as nicely as our Gaussian example, under reasonable behavior in the tails we expect existence of a stable nonuninformative equilibria, while the uninformative equilibria remain unstable. 
% We discuss such generalizations in \S~\ref{appendix:dg}.

\subsection{Gaussian model} \label{subsec:gaussian-dg}

We revisit the Gaussian setting introduced in \S~\ref{subsec:gaussian-intro} under the DG mechanism.
First note that the same equilibria occur under DG as in OA.

\begin{corollary}
	In the Gaussian model under DG, we have three equilibria at $\tau = 0$ and $\pm \infty$.
\end{corollary}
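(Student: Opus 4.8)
The plan is to mirror the OA corollary ("In the Gaussian model under OA, we have three equilibria at $\tau=0$ and $\pm\infty$"), now invoking the DG machinery (Proposition~\ref{prop:dg-infinite-equil} and Theorem~\ref{thm:dg-equil}) in place of the OA results. First I would recall from \S~\ref{subsec:gaussian-intro} the closed forms $P(\tau;x) = \Phi\!\left(\tfrac{\tau-\rho x}{b\sqrt{1+\rho}}\right)$, $F(x)=\Phi\!\left(\tfrac{\sqrt\rho}{a}x\right)$, and $G(x)=\Phi\!\left(\tfrac{1-\rho}{b\sqrt{1+\rho}}x\right)$ from Equations~\eqref{eq:p} and \eqref{eq:gaussian-FG}. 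Existence of the trivial equilibria at $\tau=\pm\infty$ is immediate from Proposition~\ref{prop:dg-infinite-equil}. For the finite equilibrium, note $P(\tau;x)$ is strictly decreasing and continuous in $x$ (it is $\Phi$ composed with a strictly decreasing affine map of $x$, since $\rho>0$), so both Theorem~\ref{thm:dg-equil} and its monotone sufficient condition (a) apply.

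The remaining work is to verify that $\tau=0$ is the unique finite solution of the equilibrium equation $G(\tau)=F(\tau)$. At $\tau=0$ both sides equal $\Phi(0)=1/2$, so $\tau=0$ is a solution, and by clause~(a) of Theorem~\ref{thm:dg-equil} it is a genuine threshold equilibrium. For uniqueness I would argue that $G(x)-F(x)$ has no other zero: both $G$ and $F$ are standard-normal CDFs of the form $\Phi(c_G x)$ and $\Phi(c_F x)$ with $c_G = \tfrac{1-\rho}{b\sqrt{1+\rho}}>0$ and $c_F=\tfrac{\sqrt\rho}{a}>0$, so $G(x)=F(x)$ forces $c_G x = c_F x$, hence $x=0$ unless $c_G=c_F$. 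One should handle (or simply note as non-generic) the degenerate case $c_G=c_F$, in which $G\equiv F$ and the equilibrium condition holds at every $\tau$; for generic parameters $c_G\neq c_F$ and $\tau=0$ is the only finite equilibrium. Combining, the three equilibria are exactly $\tau\in\{0,+\infty,-\infty\}$.

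I expect the only mild obstacle to be the bookkeeping around the degenerate case $c_G=c_F$ (equivalently $a\sqrt{1-\rho}\sqrt{1+\rho}=b\sqrt\rho$, i.e. a measure-zero locus in $(a,b)$-space): the clean statement "three equilibria" is a generic claim, and I would either restrict to $c_G\neq c_F$ or observe that Figure~\ref{fig:gaussian-F-G} depicts the generic picture. Everything else is a direct substitution into results already established, so the proof will be short — essentially: infinite equilibria from Proposition~\ref{prop:dg-infinite-equil}; $P(\tau;x)$ strictly decreasing so Theorem~\ref{thm:dg-equil}(a) applies; $G(\tau)-F(\tau)=\Phi(c_G\tau)-\Phi(c_F\tau)$ vanishes only at $\tau=0$ by strict monotonicity of $\Phi$; done.
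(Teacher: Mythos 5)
Your proposal follows the paper's argument almost step for step: infinite equilibria from Proposition~\ref{prop:dg-infinite-equil}, applicability of Theorem~\ref{thm:dg-equil} via strict monotonicity and continuity of $P(\tau;x)$ in $x$, the finite equilibrium at $\tau=0$ from $G(0)=F(0)=\Phi(0)$, and uniqueness by comparing the slopes $c_F$ and $c_G$ of the two Gaussian CDFs in Equation~\eqref{eq:gaussian-FG}. The one point where you diverge is your treatment of the case $c_G=c_F$, and there your proposal is both weaker than necessary and mischaracterizes the situation: you describe $c_G=c_F$ as a measure-zero (but nonempty) locus in $(a,b)$-space and propose to exclude it by a genericity restriction, which would only establish the corollary for generic parameters rather than for all $a,b>0$.

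In fact $\rho$ is not a free parameter; it is determined by $a$ and $b$ via $\rho=\frac{a^2}{a^2+b^2}$, equivalently $a^2=\frac{\rho}{1-\rho}\,b^2$. Substituting this into the ratio of coefficients gives
\[
\left(\frac{c_F}{c_G}\right)^2=\frac{\rho\, b^2(1+\rho)}{a^2(1-\rho)^2}=\frac{1+\rho}{1-\rho}>1
\]
for every $a,b>0$, which is exactly the paper's computation in Equation~\eqref{eq:gaussian-coefficients}; equivalently, imposing $c_G=c_F$ forces $a=0$, which is excluded by assumption. So the degenerate case is empty, no genericity caveat or restriction is needed, and the corollary holds unconditionally. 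With that one observation added, your argument coincides with the paper's proof.
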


\begin{proof}
	Existence of equilibria at $\tau = \pm \infty$ immediately follows from Proposition~\ref{prop:dg-infinite-equil}.

	Now, note that $P(\tau; x)$ (Equation~\eqref{eq:p}) is strictly decreasing and continuous so that Theorem \ref{thm:dg-equil} applies. 
	By Equation~\eqref{eq:gaussian-FG}, $G(0) = F(0) =\Phi(0)$, so $\tau=0$ is an equilibrium under DG.
	Moreover, there could not be another finite equilibrium unless the coefficients of $x$ in Equation~\eqref{eq:gaussian-FG} were equal.
	Letting $c_F = \frac {\sqrt{\rho}} {a}$ and $c_G = \frac {(1-\rho)} {b \sqrt{1+\rho}}$ be these coefficients, and substituting $a^2 = \frac {\rho}{1-\rho} b^2$, we have
	\begin{align}
	\label{eq:gaussian-coefficients}
	\left(\frac{c_F} {c_G}\right)^2
	= \frac{\rho b^2 (1+\rho)} {a^2 (1-\rho)^2}
	= \frac{\rho b^2 (1+\rho)} {\frac {\rho}{1-\rho} b^2 (1-\rho)^2}
	= \frac{1+\rho} {1-\rho}
	> 1~.
	\end{align}
	Uniqueness of $\tau=0$ as a finite threshold equilibrium follows.
\end{proof}

Now, consider the stability of these equilibria. 

\begin{corollary}
	In the Gaussian model under DG, the threshold equilibrium $\tau = 0$ is stable, while the uninformative equilibria $\pm \infty$ are unstable.
\end{corollary}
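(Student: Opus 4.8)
The plan is to invoke Theorem~\ref{thm:dg-dynamics} together with the explicit formulas from Equation~\eqref{eq:gaussian-FG}. First I would check the standing hypothesis: $P(\tau;x)$ as given in Equation~\eqref{eq:p} has argument $(\tau-\rho x)/(b\sqrt{1+\rho})$, which is affine and strictly decreasing in $x$, and $\Phi$ is continuous and strictly increasing, so $P(\tau;\cdot)$ is strictly decreasing and continuous for every $\tau$. Thus the dynamics hypotheses are met and it suffices to pin down the sign of the derivative of $G-F$ at the finite equilibrium $\tau=0$ (whose existence and uniqueness are the content of the previous corollary).

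Using Equation~\eqref{eq:gaussian-FG}, write $F(x)=\Phi(c_F x)$ and $G(x)=\Phi(c_G x)$ with $c_F=\sqrt{\rho}/a>0$ and $c_G=(1-\rho)/(b\sqrt{1+\rho})>0$. Differentiating gives $(G-F)'(x)=c_G\,\phi(c_G x)-c_F\,\phi(c_F x)$, so $(G-F)'(0)=(c_G-c_F)\,\phi(0)$. Equation~\eqref{eq:gaussian-coefficients} already establishes $(c_F/c_G)^2=(1+\rho)/(1-\rho)>1$, hence $c_F>c_G$ and $(G-F)'(0)<0$. Since $G-F$ is strictly decreasing at $\tau=0$, Theorem~\ref{thm:dg-dynamics} yields that $\tau=0$ is stable.

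For the uninformative equilibria I would argue directly from the sign of $G-F$ on the half-lines rather than from a local condition at $\pm\infty$ (where both CDFs flatten). For $x>0$ we have $c_F x>c_G x>0$, so $G(x)-F(x)=\Phi(c_G x)-\Phi(c_F x)<0$; mirroring the direction-of-flow analysis in the proof of Theorem~\ref{thm:oa-dynamics}, adapted to DG where the relevant comparison is $G(\tau)$ versus $\pi_L=F(\tau)$ and the best response solves $P(\tau;\hat\tau)=F(\tau)$ (Proposition~\ref{prop:dg-best-response}), the inequality $G(\tau)<F(\tau)$ together with strict decreasing monotonicity of $P(\tau;\cdot)$ forces $\hat\tau<\tau$ and hence $\dot\tau=\hat\tau-\tau<0$. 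So from every large finite threshold the dynamics move toward $0$ and away from $+\infty$, making $+\infty$ unstable; the symmetric computation for $x<0$ gives $\Phi(c_G x)-\Phi(c_F x)>0$, hence $\dot\tau>0$ and $-\infty$ unstable. (Alternatively, since $\tau=0$ is the unique finite equilibrium and is stable, instability of the two equilibria at $\pm\infty$ follows by topological necessity for a one-dimensional flow.)

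The main obstacle is the careful handling of $\pm\infty$: ``perturbing'' such an equilibrium means taking a large finite threshold, and one must know that the one-dimensional flow carries all such thresholds back toward the interior, which requires the sign of $G-F$ (equivalently of $\dot\tau$) on the whole of $(0,\infty)$ and $(-\infty,0)$, not merely near the equilibria. The strict inequality $c_F>c_G$ combined with monotonicity of $\Phi$ settles this globally with no delicate estimate, so the only real work is assembling these pieces.
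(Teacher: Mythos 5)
Your proof is correct, and the core of it---computing $(G-F)'(0)=(c_G-c_F)\,\phi(0)<0$ from the coefficient comparison in Equation~\eqref{eq:gaussian-coefficients} and invoking Theorem~\ref{thm:dg-dynamics}---is exactly the paper's argument for stability of $\tau=0$. The only divergence is in how instability of $\pm\infty$ is deduced: the paper simply appeals to topological necessity given that $\tau=0$ is the unique finite equilibrium and is stable (the alternative you mention parenthetically), whereas your primary argument determines the sign of $G-F$, and hence of $\dot\tau=\hat\tau-\tau$, on all of $(0,\infty)$ and $(-\infty,0)$ via $c_F>c_G$ and monotonicity of $\Phi$. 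Your global flow argument is slightly more work but makes explicit what ``perturbing'' an equilibrium at $\pm\infty$ means and shows the flow from every finite threshold heads back to $0$, which is the substance the paper's one-line topological claim leaves implicit; both routes are valid here.
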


\begin{proof}
	As depicted in Figure~\ref{fig:gaussian-F-G-difference}, $G(x)-F(x)$ is strictly decreasing at $0$.
	Formally, $G'(x) - F'(x) = c_G \phi(c_G x) - c_F \phi(c_F x)$, so $G'(0) - F'(0) = (c_G - c_F) \phi(0) < 0$ (since $c_G < c_F$ by~\eqref{eq:gaussian-coefficients}).
	Thus, by Theorem~\ref{thm:dg-dynamics}, the equilibrium at $0$ is stable.
	Since $\tau = 0$ is stable, the equilibria $\tau = \pm \infty$ are unstable by topological necessity.
\end{proof}

% Since $\tau=0$ is the only stable equilibrium, it is in fact globally attracting: for any finite starting threshold $\tau(0)$, we have $\lim_{t \to \infty} \tau(t) = 0$.  
% In contrast to OA, this means that the behavior of DG in the Gaussian case is quite robust: unless the initial threshold starts exactly at an uninformative equilibrium it inevitably converges toward an informative one where $H$ and $L$ each get reported half the time.  
% In this sense, DG behaves in a way one might hope based on its traditional analysis: it discourages uninformative equilibria.  

% While robust, this equilibrium is also inflexible: the mechanism designer cannot specify a desired threshold, and must settle for $\tau=0$.
% Thus the semantics of $H$ and $L$ relative to the underlying signal are predetermined, as half of the tasks must be classified each way.
% In peer grading, for example, half the assignments would be labeled unsatisfactory, which might not be the desired outcome.\footnote{See \S~\ref{sec:discussion} for futher discussion.}
% We expect this general phenomenon, that there is a unique equilibrium threshold for DG which is determined by the underlying signal distribution, extends beyond the Gaussian case; see \S~\ref{sec:experiments}.

\subsection{Equilibrium characterization generalization}
As in OA, while we do not expect all settings to behave as symmetrically as the Gaussian model in Section~\ref{subsec:gaussian-intro}, we can still characterize existence and location of finite threshold equilibria based on the shapes of the functions $F$ and $G$.

\begin{condition} \label{cond:dg-interval}
	Let $I = [a, b] \subset \reals$ be an interval. 
	Upon seeing signal $x \leq a$, then relative to the prior, an agent believes it more likely that another signal will be \emph{less} than $x$.
	Meanwhile, upon seeing signal $x \geq b$, then relative to the prior, an agent believes it more likely that another signal will be \emph{greater} than $x$.
  	Formally,
	$x \leq a \implies P(x'\leq x \mid x) > P(x' \leq x)$ and
	$x \geq b \implies P(x'\geq x \mid x) > P(x' \geq x)$.
\end{condition}

\begin{theorem} \label{thm:dg-interval}
	Let the agent signal structure satisfy Condition \ref{cond:dg-interval}, with $\Pr[X' \leq \tau \mid X = x]$ monotone decreasing and continuous in $x$, and $F$ and $G$ continuous.
	Then there exists an equilibrium $\tau^* \in I$.
\end{theorem}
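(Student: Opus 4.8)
The plan is to follow exactly the template of the proof of Theorem~\ref{thm:oa-interval}, replacing the threshold $1/2$ by the prior CDF value $F(\tau)$ and the function $G$ by $G - F$. The key observation is that Condition~\ref{cond:dg-interval} is precisely an ``endpoint sign condition'' on $G - F$, so once I translate it into that form, the Intermediate Value Theorem plus Theorem~\ref{thm:dg-equil} finish the argument.

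First I would unpack Condition~\ref{cond:dg-interval} in terms of $F$ and $G$. Recall $G(x) = P(x;x) = \Pr[X' \le x \mid X = x]$ and $F(x) = \Pr[X' \le x]$. The hypothesis ``$x \le a \implies \Pr[X' \le x \mid X = x] > \Pr[X' \le x]$'' says exactly $G(x) > F(x)$ for all $x \le a$; in particular $G(a) - F(a) > 0$. The hypothesis ``$x \ge b \implies \Pr[X' \ge x \mid X = x] > \Pr[X' \ge x]$'' says $1 - \Pr[X' < x \mid X=x] > 1 - \Pr[X' < x]$, i.e.\ (using continuity of $\beta$ and of $F$, so that the $<$ and $\le$ events have equal probability) $G(x) < F(x)$ for all $x \ge b$; in particular $G(b) - F(b) < 0$.

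Next, since $F$ and $G$ are both continuous by assumption, the function $x \mapsto G(x) - F(x)$ is continuous on $[a,b]$, takes a strictly positive value at $a$ and a strictly negative value at $b$, so by the Intermediate Value Theorem there is a point $\tau^* \in [a,b]$ with $G(\tau^*) - F(\tau^*) = 0$, i.e.\ $G(\tau^*) = F(\tau^*)$. Finally, since $P(\tau^*;x)$ is monotone decreasing and continuous in $x$ by hypothesis, condition (a) of Theorem~\ref{thm:dg-equil} is met, and together with $G(\tau^*) = F(\tau^*)$ this theorem yields that $\tau^*$ is a threshold equilibrium under DG, which is what we wanted. I would also add the parenthetical remark, mirroring the one after Theorem~\ref{thm:oa-interval}, that if $G - F$ is itself monotone decreasing and crosses zero, then Condition~\ref{cond:dg-interval} holds for the degenerate single-point interval and the equilibrium is unique.

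I do not anticipate a substantive obstacle here; the only point requiring a little care is the translation of the ``$\ge x$'' half of Condition~\ref{cond:dg-interval} into the sign of $G - F$, where one must make sure the continuity assumptions justify treating $\Pr[X' < x \mid X = x]$ and $\Pr[X' \le x \mid X = x]$ (and likewise for the prior) as equal, so that the complement of the event $\{X' \le x\}$ lines up cleanly with $\{X' \ge x\}$. Everything else is a direct transcription of the OA argument with $1/2$ replaced by $F(\tau)$.
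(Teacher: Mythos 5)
Your proposal is correct and follows essentially the same argument as the paper: apply the Intermediate Value Theorem to $H(x) = G(x) - F(x)$, whose signs at $a$ and $b$ are fixed by Condition~\ref{cond:dg-interval}, and then conclude via the sufficiency direction of Theorem~\ref{thm:dg-equil} using the monotone-decreasing hypothesis on $P(\tau;x)$. Your extra care in translating the ``$\ge x$'' half of the condition and the uniqueness remark are fine elaborations of the same proof.
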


\begin{proof}
	Take the function $H(x) = G(x) - F(x)$. 
	Then  Condition \ref{cond:dg-interval} implies $H(a) = G(a) - F(a) > 0$ and $H(b) = G(b) - F(b) < 0$.
	Since $G$ and $F$ are continuous, $H$ is continuous; thus we can apply the Intermediate Value Theorem to conclude there exists a point $\tau^* \in [a, b]$ such that $H(\tau^*) = 0$. 
	Then $\tau^*$ is an equilibrium by Theorem \ref{thm:dg-equil}.
\end{proof}

If Condition~\ref{cond:dg-interval} holds and $F$ is symmetric about a point (its median) in the interval $I$, as in the Gaussian model in Section~\ref{subsec:gaussian-intro}, it is reasonable to expect that $G$ is symmetric about the same point and an equilibrium occurs at the median.
However, we can also characterize how equilibria change relative to the median in skewed settings.

\begin{proposition} \label{prop:dg-skewed}
	Let the agent signal structure satisfy Condition~\ref{cond:dg-interval}, with $\Pr[X' \leq \tau \mid X = x]$ monotone decreasing and continuous in $x$, and $G$ and $F$ continuous.
	Let $m$ be the median of $F$, and consider running the DG mechanism.
	Then if $\Pr[X' \leq m \mid m] > 1/2$, there exists an equilibrium $\tau^*$ such that $\tau^* > m$.
	Similarly, if $\Pr[X' \leq m \mid m] < 1/2$, there exists an equilibrium $\tau^*$ such that $\tau^* < m$.
\end{proposition}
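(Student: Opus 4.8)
The plan is to mirror the argument for Proposition~\ref{prop:oa-skewed}, but working with the function $H(x) = G(x) - F(x)$ rather than $G(x) - 1/2$, since Theorem~\ref{thm:dg-equil} tells us that finite threshold equilibria of DG are exactly the zeros of $H$ (the sufficient direction applying because of the monotone-decreasing hypothesis on $P(\tau;x)$). By continuity of $F$ and $G$, $H$ is continuous, and as in the proof of Theorem~\ref{thm:dg-interval}, Condition~\ref{cond:dg-interval} gives sign information at the endpoints of $I = [a,b]$: for $x \le a$ it yields $G(x) > F(x)$, i.e. $H(x) > 0$, and for $x \ge b$ it yields $G(x) < F(x)$, i.e. $H(x) < 0$. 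In particular $H(a) > 0$ and $H(b) < 0$, and Theorem~\ref{thm:dg-interval} already guarantees a zero of $H$ in $I$; what remains is to pin down its location relative to $m$.

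Next I would use the median hypothesis. Since $m$ is a median of $F$, $F(m) = 1/2$. In the first case $\Pr[X' \le m \mid m] = G(m) > 1/2$, so $H(m) = G(m) - F(m) > 0$. The key observation is that this forces $m < b$: if instead $m \ge b$, then the $x \ge b$ clause of Condition~\ref{cond:dg-interval} would give $H(m) < 0$, a contradiction. Having established $m < b$, with $H(m) > 0$, $H(b) < 0$, and $H$ continuous on $[m,b]$, the Intermediate Value Theorem yields $\tau^* \in (m,b)$ with $H(\tau^*) = 0$, i.e. $G(\tau^*) = F(\tau^*)$; by Theorem~\ref{thm:dg-equil} (sufficient direction, via the monotone-decreasing hypothesis) $\tau^*$ is a threshold equilibrium of DG with $\tau^* > m$, as required. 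The second case, $G(m) < 1/2$, is symmetric: now $H(m) < 0$, which forces $m > a$ (otherwise the $x \le a$ clause of Condition~\ref{cond:dg-interval} gives $H(m) > 0$), and applying the Intermediate Value Theorem to $H$ on $[a,m]$ produces an equilibrium $\tau^* \in (a,m)$ with $\tau^* < m$.

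The only step requiring genuine care is the sign bookkeeping that flips the conclusion relative to the OA version in Proposition~\ref{prop:oa-skewed}: for OA the relevant object is $G - 1/2$, so $G(m) > 1/2$ places the equilibrium \emph{below} $m$, whereas for DG the relevant object is $G - F$ and $F(m)$ is pinned to $1/2$, so the same hypothesis makes $H$ \emph{positive} at $m$ and hence, since $H$ is negative past $b$, places a zero of $H$ \emph{above} $m$. Beyond confirming via the short contradiction argument that $m$ lies on the correct side of the relevant endpoint of $I$, everything reduces to continuity plus the Intermediate Value Theorem together with the equilibrium characterization already in hand, so I do not anticipate a substantive obstacle.
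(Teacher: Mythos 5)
Your proposal is correct and follows essentially the same route as the paper: use the median to get the sign of $H(x)=G(x)-F(x)$ at $m$, use Condition~\ref{cond:dg-interval} for the sign at the relevant endpoint of $I$, and apply the Intermediate Value Theorem plus Theorem~\ref{thm:dg-equil} to locate an equilibrium on the correct side of $m$. The only (harmless) difference is that you handle everything with a single IVT application on $[m,b]$ (resp.\ $[a,m]$) after ruling out $m\ge b$ (resp.\ $m\le a$), whereas the paper splits into the cases $m<a$ and $m\in I$; your version is a slight streamlining of the same argument.
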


\begin{proof}
	First, by Theorem \ref{thm:dg-interval}, we know at least one equilibrium $\tau^*$ exists in the interval $I$. 
	Now assume that $\Pr[X' \leq m \mid m] > 1/2$. 
	In other words, since $F(m) = 1/2$, $G(m) > F(m)$. 
	It follows under Condition~\ref{cond:dg-interval} that either $m < a$ or $m \in I$.
	If $m < a$, then by Theorem  \ref{thm:dg-interval} there exists an equilibrium $\tau^*$ in the interval $I$; $\tau^* \geq a > m$, so the result follows.
	Otherwise, consider the case where $m \in I$.
	Then define $H(x) = G(x) - F(x)$. 
	$H(m) > 0$, while under Condition~\ref{cond:dg-interval}, $H(b) < 0$. 
	By the Intermediate Value Theorem, since $H$ is continuous, there exists a point $\tau^* \in [m, b]$ such that $H(\tau^*) = 0$.
	By our characterization in Theorem~\ref{thm:dg-equil}, $\tau^*$ is an equilibrium.

	A symmetric argument holds for when $\Pr[X' \leq m \mid m] < 1/2$. 
\end{proof}

In unimodal, continuous settings, we can often diagnose skewness of a distribution based on the difference between its mean and median. 
In a right-skewed distribution, it is often the case that the bulk of high probability outcomes lie below the median, so that conditioning on the median essentially amplifies the impact of these outcomes.
We would therefore expect $\Pr[X' \leq m \mid m]$ to \emph{increase} relative to 1/2. 
By Proposition~\ref{prop:dg-skewed}, equilibria drift to the right of $F$'s median.
A similar argument holds for left-skewed distributions: there is more high probability mass concentrated to the right of the median, so conditioning on the median should \emph{decrease} the probability of a signal being below the median.
Thus we would expect by Proposition~\ref{prop:dg-skewed} that when $F$ is left-skewed, equilibria drift to the left of $F$'s median.

\subsection{Dynamics generalization}
Under reasonable behavior in the tails, we expect existence of a stable, nontrivial equilibrium, while the uninformative equilibria remain unstable. 
\begin{proposition} \label{prop:dg-stable}
	Let the agent signal structure satisfy Condition~\ref{cond:dg-interval} for interval $I$, and such that $F$ and $G$ are continuous, and differentiable. 
	Then there exists an equilibrium $\tau^* \in I$ which is locally stable, while the equilibria $\pm \infty$ are unstable.
\end{proposition}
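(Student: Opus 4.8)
The plan is to mirror the argument behind Proposition~\ref{prop:oa-unstable}, replacing the function $G-1/2$ that governed the OA dynamics with $H := G - F$, which governs the DG dynamics through Theorem~\ref{thm:dg-dynamics}. First I would record the two structural facts supplied by Condition~\ref{cond:dg-interval}: for $x \le a$ we have $G(x) > F(x)$, i.e.\ $H(x) > 0$, and for $x \ge b$ we have $G(x) < F(x)$, i.e.\ $H(x) < 0$; in particular $H(a) > 0$, $H(b) < 0$, and $H$ has no zero outside $[a,b]$. Since the finite DG equilibria are exactly the zeros of $H$ (Theorem~\ref{thm:dg-equil}) together with $\pm\infty$ (Proposition~\ref{prop:dg-infinite-equil}), every finite equilibrium lies in $I$, and Theorem~\ref{thm:dg-interval} already guarantees at least one exists there.

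Next I would dispatch the instability of $\pm\infty$ directly from the dynamics. By the case analysis inside the proof of Theorem~\ref{thm:dg-dynamics}, the sign of $\dot\tau = \hat\tau - \tau$ agrees with the sign of $H(\tau)$. Hence for every $\tau \ge b$ we have $\dot\tau < 0$, so a threshold just below $+\infty$ strictly decreases and moves away from $+\infty$; likewise for every $\tau \le a$ we have $\dot\tau > 0$, so a threshold near $-\infty$ increases away from it. Thus both uninformative equilibria are unstable. This step is actually cleaner than in the OA case: there, instability of the internal equilibrium forced a topological argument for the stability of $\pm\infty$, whereas here the inward-pointing vector field at the ends of $I$ does the work immediately.

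It remains to exhibit a locally stable equilibrium in $I$. The sign information above shows that $[a,b]$ is forward-invariant for the one-dimensional flow ($\dot\tau > 0$ at $a$ and $\dot\tau < 0$ at $b$ both point inward), so every trajectory started in $[a,b]$ converges monotonically to a zero of $H$. Let $\tau_1$ and $\tau_2$ be the smallest and largest zeros of $H$ in $[a,b]$; by continuity these are well defined, and $a < \tau_1 \le \tau_2 < b$ since $H(a)>0>H(b)$ strictly. On $(-\infty,\tau_1)$ we have $H>0$, so trajectories there increase to $\tau_1$, making $\tau_1$ stable from the left; symmetrically $\tau_2$ is stable from the right. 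If $\tau_1=\tau_2$ this common point is stable from both sides and we are done. Otherwise I would inspect the sign of $H$ just right of $\tau_1$ and just left of $\tau_2$: if $H\le 0$ on a right-neighborhood of $\tau_1$ then $\tau_1$ is also stable from the right (hence locally stable); if $H\ge 0$ on a left-neighborhood of $\tau_2$ then $\tau_2$ is locally stable; and in the remaining case $H$ is positive just right of $\tau_1$ and negative just left of $\tau_2$, so it crosses from positive to negative at some $\tau^*\in(\tau_1,\tau_2)$, and invoking Theorem~\ref{thm:dg-dynamics} at such a crossing yields a locally stable equilibrium.

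The main obstacle I anticipate is the non-generic behavior of $H$ near its zeros: if $H$ merely touches $0$ tangentially, or vanishes on a whole subinterval, or oscillates around $0$, then the hypothesis ``strictly decreasing at $\tau^*$'' in Theorem~\ref{thm:dg-dynamics} can fail, leaving only semi-stability or a degenerate continuum of equilibria. The extremal-zero bookkeeping above is designed to sidestep this by always producing either a genuine sign change or an equilibrium attracting from the relevant side; to make the proof fully rigorous one should state (as in Proposition~\ref{prop:oa-unstable}) that the regularity of $F$ and $G$ excludes the pathological oscillatory case, or add a mild transversality hypothesis ensuring the selected crossing of $H$ through $0$ is strict.
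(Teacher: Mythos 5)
Your proposal is correct in substance and shares the paper's core reduction (work with $H = G - F$, whose zeros in $I$ are the finite equilibria by Theorem~\ref{thm:dg-equil}, and read off stability from the sign of $\dot\tau$ via Theorem~\ref{thm:dg-dynamics} and Theorem~\ref{thm:dg-interval}), but you derive the two conclusions by a genuinely different route. For the uninformative equilibria, the paper first establishes semi-stability of the extremal zeros $\tau_1,\tau_2$ (stable from the left, resp.\ right) and then infers instability of $\pm\infty$ by ``topological necessity''; you instead observe directly that $H<0$ for $\tau\ge b$ and $H>0$ for $\tau\le a$ forces the vector field to point away from $\pm\infty$, which is cleaner and does not need the intermediate bookkeeping. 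For the finite stable equilibrium, the paper asserts that $H$ must cross $0$ at some $\tau_0$ with $H'(\tau_0)<0$, justifying this by ``assume not: then $H\ge 0$ everywhere''---a step that does not actually follow (a crossing could occur with $H'=0$, e.g.\ a cubic tangency or worse), so the paper's argument silently assumes away exactly the degenerate behavior you flag. Your extremal-zero case analysis handles the generic situations without ever invoking a derivative condition, and your closing paragraph honestly identifies the residual gap (tangential or oscillatory zeros, or $H$ vanishing on an interval, where only semi-stability or Lyapunov-type stability survives under the paper's strict ``perturbations return'' definition); note that the same caveat applies, unacknowledged, to the paper's proof, and both arguments also implicitly use the strict monotonicity of $P(\tau;\cdot)$ from Theorem~\ref{thm:dg-dynamics}, which the proposition's hypotheses do not state. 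A fully rigorous version along either line would add the mild transversality (or monotone-crossing) assumption you suggest.
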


\begin{proof}
	We know already from Theorem \ref{thm:dg-interval} that at least one equilibrium exists in the interval $I$. 
	Again, we let $H(x) = G(x) - F(x)$.
	Note first that since $H(a) > 0$ and $H(b) < 0$, by continuity and differentiability $H$ must cross 0 at some point $\tau_0$ with $H'(\tau_0) < 0$. 
	Assume not: then $H(x) \geq 0$ for all $x$, which contradicts the fact that $H(x) < 0$ for $x \geq b$.
	It follows by Theorem \ref{thm:dg-dynamics} that $\tau_0$ is a stable equilibrium.

	\begin{enumerate}
		\item Now, WLOG we pick the equilibrium $\tau_1$ which is closest to $a$. 
		By Condition \ref{cond:dg-interval} and since $H(x)$ is continuous, $H(x) > 0$ for $x < \tau_1$. 
		It follows that $H'(\tau_1) \leq 0$. 
		Then $\tau_1$ is stable, at least on the left (if $H'(\tau_1) = 0$, then $\tau_1$ is stable on the left and unstable on the right).
		\item Pick the equilibrium $\tau_2$ which is closest to $b$ (this could be the same as the $\tau_1$ we chose in the previous step).
		By Condition \ref{cond:dg-interval} and since $H(x)$ is continuous, $H(x) < 0$ for $x > \tau_2$. 
		It follows that $H'(\tau_2) \leq 0$. 
    Then $\tau_2$ is stable, at least on the right (if $H'(\tau_2) = 0$, then $\tau_2$ is stable on the right and unstable on the left).
	\end{enumerate}
	We know by Condition \ref{cond:dg-interval} that $H(x) \neq 0$ for $x \notin [a, b]$, so that no equilibria occur outside the interval $I$ other than $\pm \infty$.
	Since both $\tau_1$ and $\tau_2$ (or just $\tau_1$, if they are the same equilibrium) are stable (at least to the left for $\tau_1$, and right for $\tau_2$), it is topologically necessary that the uninformative equilibria $\pm \infty$ are unstable.
\end{proof}

In natural settings, we would expect Condition \ref{cond:dg-interval} to hold.
Specifically, conditioning on a small enough signal, one would expect the probability of other small signals to increase relative to the prior.
Conditioning on a large signal, one would expect decreased probability of much smaller signals and thus a smaller value of $G$ relative to the prior.
Ultimately, then, when conditioning on a signal leads to local sensitivity, we would expect the dynamics in Proposition \ref{prop:dg-stable} to hold. 
In particular, note that Proposition~\ref{prop:dg-stable} applies to the model in \S~\ref{sec:experiments}, when the distributions are skewed. 
Even under complicated, multi-modal information structures, we expect tail behavior to often still satisfy Condition \ref{cond:dg-interval}, so that there exists a stable equilibrium in the interval.

\section{Omitted Results for RBTS} \label{app:rbts}
In this section we formally analyze the RBTS mechanism under real-valued signals; fully characterize equilibria under monotonicity and continuity conditions; and derive the dynamics of equilibria under the Gaussian model.
We focus on RBTS instead of its predecessor, the Bayesian Truth Serum (BTS), because incentive alignment under BTS requires an arbitrarily large number of agents; meanwhile, RBTS guarantees truthfulness for $n \geq 3$ agents. 

\subsection{Equilibrium characterization}
As a reminder, in RBTS each agent $i$ submits both an information report $r_i \in \{L, H\}$ and also a \emph{prediction} report $p_i \in [0, 1]$.
Let $S(p, r) = 2 p \ones[r = H] - p^2$.
RBTS randomly picks a reference agent $j$ and peer agent $k$, and pays agent $i$
\begin{equation*} 
	M_{\RBTS}((r_i, p_i), (r_j, p_j), (r_k, p_k)) = S(p_j + \delta t(r_i), r_k) + S(p_i, r_k),
\end{equation*}
where $t(L) = -1$, $t(H) = 1$, and $\delta>0$ is chosen by the mechanism.
We refer to the first term in Equation~\ref{eq:rbts-payment} as the \emph{information score}, and the second as the \emph{prediction score}.
Under binary signals and reports, truthfulness over both the information and prediction report forms a strict Bayes-Nash equilibrium.

If we consider a threshold strategy $\tau$ in our real-valued signal model, an agent with signal $x_i$ has belief $p_i(x_i) = \Pr[X' > \tau \mid X = x_i] = 1 -P(\tau; x_i)$.
Since the Brier score $S$ is proper, it immediately follows that agents will report $p_i$ truthfully in a symmetric threshold Bayes-Nash equilibrium.
Thus it is WLOG that we can study incentives in threshold equilibria under only the information score, $M_{\RBTS}(r_i, p_j, r_k) = S(p_j + \delta t(r_i), r_k)$.
Moreover, we can operate under the assumption that the prediction report $p_j$ is truthful, i.e. $p_j = p_j(x_j) = 1 - P(\tau; x_j)$ for agent $j$'s true signal $x_j$.

Assume an agent's signal $x$ satisfies $x \leq \tau$.
Then a threshold equilibrium $\tau$ must satisfy
\begin{align*}
	\E_{x_j, x_k \sim \beta(x)} \left[S \left(p_j(x_j) - \delta, \sigma(x_k) \right) \right] &\geq \E_{x_j, x_k \sim \beta(x)} \left[S \left(p_j(x_j) + \delta, \sigma(x_k) \right) \right] \\
	\E \left[ 2 (p_j(x_j) - \delta) \ones [ \sigma(x_k) = H] - (p_j(x_j) - \delta)^2 \right] &\geq \E \left[  2 (p_j(x_j) + \delta) \ones [ \sigma(x_k) = H] - (p_j(x_j) + \delta)^2 \right] \\
	4 \delta \E \left[ p_j(x_j) - \ones [ \sigma(x_k) = H]  \right] &\geq 0 \\
	\Pr [X' \leq \tau \mid X = x] &\geq \E_{x_j \sim \beta(x)} \left[ \Pr [ X' \leq \tau \mid X_j = x_j] \right]
\end{align*}
where the last line follows from plugging in $p_j(x_j) = 1 - \Pr[X' \leq \tau \mid X_j = x_j]$, and noting that $\E_{x_j, x_k \sim \beta(x)} \ones [ \sigma(x_k) = H] = 1 - \Pr[X' \leq \tau \mid X = x]$.
One can show a symmetric condition if $i$'s signal satisfies $x > \tau$, so that all symmetric Bayes-Nash threshold equilibria $\tau$ are characterized by the following conditions:
\begin{align}
	\forall x \leq \tau, P(\tau; x) &\geq \E_{x' \sim \beta(x)} P(\tau; x'), \label{eq:rbts-equil-1} \\ 
	\forall x > \tau,P(\tau; x) &\leq \E_{x' \sim \beta(x)} P(\tau; x'). \label{eq:rbts-equil-2}
\end{align}
Therefore, unlike the DG mechanism, in RBTS an agent's actions in equilibrium depend on the \emph{expected} conditional probability distribution.
We can immediately see that the uninformative thresholds $\tau = \pm \infty$ still always appear as equilibria: 

\begin{proposition} \label{prop:rbts-infinite-equil}
  $\tau^* = \pm \infty$ are threshold equilibria under RBTS.
\end{proposition}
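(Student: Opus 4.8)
The plan is to verify the equilibrium characterization \eqref{eq:rbts-equil-1}--\eqref{eq:rbts-equil-2} directly at the two endpoints, exactly in the spirit of Proposition~\ref{prop:oa-infinite-equil}. Recall $P(\tau; x) = \Pr[X' \leq \tau \mid X = x]$. For $\tau^* = \infty$, every realization of $X'$ satisfies $X' \leq \infty$ with probability one, so $P(\infty; x) = 1$ for all $x$, and therefore also $\E_{x' \sim \beta(x)} P(\infty; x') = 1$. Hence \eqref{eq:rbts-equil-1} holds for every $x \leq \infty$ (it reads $1 \geq 1$), while \eqref{eq:rbts-equil-2} is vacuous since no $x \in \reals$ satisfies $x > \infty$. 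Symmetrically, for $\tau^* = -\infty$ we have $P(-\infty; x) = 0$ for all $x$, hence $\E_{x' \sim \beta(x)} P(-\infty; x') = 0$, so \eqref{eq:rbts-equil-2} holds with equality for all $x > -\infty$ and \eqref{eq:rbts-equil-1} is vacuous. This establishes both statements.

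Since \eqref{eq:rbts-equil-1}--\eqref{eq:rbts-equil-2} were derived for an \emph{arbitrary} threshold strategy using only the properness of the Brier score (to pin down the prediction report) together with purely algebraic manipulations of the information score, the derivation does not rely on continuity or interiority of $\tau$ and so applies verbatim at the endpoints. If a more self-contained argument is preferred, one can instead reason directly from the payment rule \eqref{eq:rbts-payment}: when all agents follow $\sigma^{\infty}$, the reference agent's prediction report is $p_j = \Pr[X' > \infty \mid X = x_j] = 0$ and the peer report is $r_k = L$, both independent of a deviating agent's information report $r_i$, so the information score $S(p_j + \delta t(r_i), r_k) = S(\pm\delta, L) = -\delta^2$ is constant in $r_i$; thus reporting $L$ as prescribed is (weakly) a best response, and the prediction score $S(p_i, L) = -p_i^2$ is maximized at the truthful report $p_i = 0$. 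The case $\tau^* = -\infty$ is identical with $H$ in place of $L$ and $p = 1$.

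There is essentially no obstacle here; the only thing to watch is consistency with the derivation of \eqref{eq:rbts-equil-1}--\eqref{eq:rbts-equil-2} (the division by $4\delta$, the substitution $p_j(x_j) = 1 - P(\tau; x_j)$, etc.), which remains valid at $\tau = \pm\infty$ because every step is algebraic. As with OA and DG, these endpoints are the familiar uninformative equilibria in which all agents report the same symbol regardless of signal; their (in)stability is treated separately in the dynamics analysis.
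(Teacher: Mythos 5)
Your proof is correct and follows essentially the same route as the paper: it verifies conditions \eqref{eq:rbts-equil-1}--\eqref{eq:rbts-equil-2} directly at $\tau^* = \pm\infty$, where $P(\pm\infty;x)$ is identically $1$ (resp.\ $0$), making one condition hold with equality and the other vacuous. The additional self-contained argument from the payment rule \eqref{eq:rbts-payment} is a fine supplement but not needed.
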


\begin{proof}
	Let $\tau^* = \infty$.
	Then for all signals $x, x' \in \reals$, $P(\tau^*; x) = 1$ and $P(\tau^*; x') = 1$, so that each side of Equation~\ref{eq:rbts-equil-1} always evaluates to 1 and the condition for $x < \tau^*$ is satisfied.
	The condition of Equation~\ref{eq:rbts-equil-2} is vacuous.
	An analagous argument holds for $\tau^* = -\infty$.
\end{proof}

We also observe necessary and sufficient conditions for finite equilibria $\tau$, relative to the function $G$.
Let $Q(x) = \E_{x' \sim \beta(x)} P(x; x')$, and let $\bar{P}(\tau; x) = \E_{x' \sim \beta(x)} P(\tau; x')$ be the expected conditional probability of another agent reporting $L$ over signal $x$.

\begin{theorem} \label{thm:rbts-equil}
  Let a finite threshold $\tau$ be given and $P(\tau; x)$ be continuous over $x$; also assume $\beta(x)$ is continuous over $x$.
  If $\tau \in \reals$ is threshold equilibrium under the RBTS mechanism then $Q(\tau) = G(\tau)$.
  Conversely, if $Q(\tau) = G(\tau)$ and $P(\tau; x) - \bar P(\tau; x)$ has a single crossing of $0$ from positive to negative then $\tau$ is a threshold equilibrium under the RBTS mechanism.
\end{theorem}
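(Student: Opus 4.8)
The plan is to follow the template of Theorems~\ref{thm:oa-equil} and~\ref{thm:dg-equil}, replacing the constant threshold $1/2$ (OA) or the prior mass $\pi_L = F(\tau)$ (DG) by the agent-dependent quantity $\bar P(\tau;x)$. Set $\psi(x) = P(\tau;x) - \bar P(\tau;x)$. Then by Equations~\eqref{eq:rbts-equil-1} and~\eqref{eq:rbts-equil-2}, $\tau$ is a threshold equilibrium under RBTS if and only if $\psi(x) \geq 0$ for all $x \leq \tau$ and $\psi(x) \leq 0$ for all $x > \tau$. Observe also that $\psi(\tau) = P(\tau;\tau) - \E_{x'\sim\beta(\tau)}P(\tau;x') = G(\tau) - Q(\tau)$, so the claimed first-order condition $Q(\tau) = G(\tau)$ is precisely $\psi(\tau) = 0$. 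This reframing makes both directions parallel to the OA and DG arguments.

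For necessity, I would first show $\psi$ is continuous at $\tau$. The map $x \mapsto P(\tau;x)$ is continuous by hypothesis and bounded in $[0,1]$; together with the assumed continuity of $x\mapsto\beta(x)$, a portmanteau / dominated-convergence argument gives that $x\mapsto \bar P(\tau;x) = \E_{x'\sim\beta(x)}P(\tau;x')$ is continuous, hence so is $\psi$. Given that, the equilibrium inequalities force $\lim_{x\to\tau^-}\psi(x)\geq 0$ and $\lim_{x\to\tau^+}\psi(x)\leq 0$; by continuity both limits equal $\psi(\tau)$, so $\psi(\tau) = 0$, i.e. $G(\tau) = Q(\tau)$.

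For sufficiency, suppose $G(\tau) = Q(\tau)$, i.e. $\psi(\tau) = 0$, and that $\psi$ has a single crossing of $0$ from positive to negative. Since $\psi$ vanishes at $\tau$, the single-crossing property locates the sign change at $\tau$, giving $\psi(x)\geq 0$ for $x\leq\tau$ and $\psi(x)\leq 0$ for $x>\tau$ (the conclusion is robust to $\psi$ vanishing on an interval containing $\tau$, since that still yields these weak inequalities on each side). These are exactly Equations~\eqref{eq:rbts-equil-1} and~\eqref{eq:rbts-equil-2}, so $\tau$ is a threshold equilibrium under RBTS.

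The only genuinely new step relative to the OA and DG proofs is establishing continuity of $\bar P(\tau;\cdot)$, and hence of $\psi$, at $\tau$ from the hypothesis that $\beta(x)$ is continuous in $x$; this is exactly why that extra assumption appears in the statement here but not in Theorems~\ref{thm:oa-equil} and~\ref{thm:dg-equil}. I expect this continuity transfer to be the main (and only) obstacle, and it is mild once one fixes a sense of continuity for $\beta$ (e.g. weak convergence); the squeeze and single-crossing steps are then a direct transcription of the earlier arguments.
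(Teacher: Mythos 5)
Your proposal matches the paper's proof essentially step for step: the paper likewise works with $f(\tau;x)=P(\tau;x)-\bar P(\tau;x)$, establishes continuity of $\bar P(\tau;\cdot)$ via the Dominated Convergence Theorem from the assumed continuity of $\beta(x)$ and $P(\tau;\cdot)$, squeezes the one-sided limits at $\tau$ against the equilibrium inequalities to get $G(\tau)=Q(\tau)$ for necessity, and invokes the single-crossing hypothesis (anchored at $\tau$ since the function vanishes there) to verify Equations~\eqref{eq:rbts-equil-1} and~\eqref{eq:rbts-equil-2} for sufficiency. The only difference is cosmetic: you explicitly note the degenerate case where the crossing occurs over an interval containing $\tau$, which the paper leaves implicit.
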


\begin{proof}
	For necessity, first note that since $\beta(x)$ and $P(\tau; x)$ are continuous over $x$, $\bar{P}(\tau; x) = \E_{x' \sim \beta(x)} P(\tau; x')$ is also continuous over $x$ by an application of the Dominated Convergence Theorem.
	Let $f(\tau; x) = P(\tau; x) - \bar{P}(\tau; x)$.
	Since $f(\tau; x)$ is continuous, we must have $\lim_{x \to \tau^+} f(\tau; x) = \lim_{x \to \tau^-} f(\tau; x) = f(\tau; \tau).$
	Further, $\lim_{x \to \tau^-} f(\tau; x) \geq 0$ and $\lim_{x \to \tau^+} f(\tau; x) \leq 0$ by Conditions~\ref{eq:rbts-equil-1} and~\ref{eq:rbts-equil-2}, so $f(\tau; \tau) = 0 = G(\tau) - Q(\tau)$.

	For sufficiency, take some $x > \tau$. By single crossing, $P(\tau; x) \leq \bar P(\tau; x)$.
	Similarly, if $x \leq \tau$, $P(\tau; x) \geq \bar P(\tau; x)$.
	This establishes Equations~\eqref{eq:rbts-equil-1} and~\eqref{eq:rbts-equil-2}, so $\tau$ is a threshold equilibrium.
\end{proof}

While the exact crossing condition has changed from $G(\tau) = F(\tau)$ to $G(\tau) = Q(\tau)$, Theorem~\ref{thm:rbts-equil} paints a similar picture to what we saw for DG.  
Best responses are now more complex to compute, so we do not treat the general dynamics formally, but we can provide a comparison of their behavior in the Gaussian case.

\subsection{Gaussian model}
We return to our Normal example, where each agent receives a noisy version of a Gaussian signal.
We first prove the form of the functions $\bar P(\tau; x)$ and $Q(x)$ in this setting.

\begin{proposition}
  In the Gaussian model under the RBTS mechanism,
  \begin{equation}
		\bar{P}(\tau; x) = \Phi \left( \frac{\tau - \rho^2 x} {\sqrt{b^2(1 + \rho^2)(1 + \rho)} } \right),
	\end{equation}
	and thus,
	\begin{equation}
		Q(x) = \Phi \left( \frac{(1 - \rho^2) \tau} {b\sqrt{(1 + \rho)(1 + \rho^2)} } \right). 
	\end{equation}
\end{proposition}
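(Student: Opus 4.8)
The plan is to compute the two Gaussian integrals directly by exploiting the normality of all relevant conditional distributions. Recall from the excerpt that in the Gaussian model $X_i = aZ + bZ_i$, the posterior $\beta(x) = \Pr[X' = \cdot \mid X = x]$ is $N(\rho x, \hat\sigma^2)$ with $\hat\sigma^2 = b^2(1+\rho)$, and $P(\tau; x) = \Phi\big((\tau - \rho x)/(b\sqrt{1+\rho})\big)$ from Equation~\eqref{eq:p}. By definition $\bar P(\tau; x) = \E_{x' \sim \beta(x)} P(\tau; x')$, so the first step is to write this as an integral against the $N(\rho x, \hat\sigma^2)$ density and apply the standard identity $\E[\Phi(c - dW)] = \Phi\big((c - d\mu_W)/\sqrt{1 + d^2\sigma_W^2}\big)$ for $W \sim N(\mu_W, \sigma_W^2)$. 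Here the outer randomness is $X' \sim N(\rho x, b^2(1+\rho))$ and the coefficient is $d = \rho/(b\sqrt{1+\rho})$, so $d^2 \sigma_W^2 = \rho^2(1+\rho)\cdot b^2/(b^2(1+\rho)) \cdot$ — one tracks the algebra to get denominator $\sqrt{b^2(1+\rho)(1+\rho^2)}$ (after simplification) and numerator $\tau - \rho\cdot\rho x = \tau - \rho^2 x$, matching the claimed formula for $\bar P(\tau; x)$.

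The second step is to specialize to $Q(x) = \bar P(x; x)$, i.e.\ substitute $\tau = x$ into the formula just derived. This gives $Q(x) = \Phi\big((x - \rho^2 x)/\sqrt{b^2(1+\rho)(1+\rho^2)}\big) = \Phi\big((1-\rho^2)x/(b\sqrt{(1+\rho)(1+\rho^2)})\big)$, which is exactly the stated expression (the paper's display writes $\tau$ where it means the evaluation point, consistent with $Q$ being a function of a single argument). I would present this as an immediate corollary of the first computation, with at most a line noting that $\tfrac{1-\rho^2}{\sqrt{b^2(1+\rho)(1+\rho^2)}} = \tfrac{1-\rho^2}{b\sqrt{(1+\rho)(1+\rho^2)}}$.

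The one genuinely content-bearing step is justifying the identity $\E[\Phi(c - dW)] = \Phi\big((c - d\mu)/\sqrt{1 + d^2\sigma^2}\big)$; the cleanest route is to interpret $\Phi(c - dW) = \Pr[\,\xi \le c - dW\,]$ for an independent standard normal $\xi$, so the left side is $\Pr[\xi + dW \le c]$, and $\xi + dW \sim N(d\mu, 1 + d^2\sigma^2)$, giving the right side immediately. I expect \textbf{the main obstacle to be purely bookkeeping}: carefully carrying the variance terms $b^2(1+\rho)$ and the coefficient $\rho/(b\sqrt{1+\rho})$ through the $1 + d^2\sigma^2$ combination without dropping a factor, since several $(1+\rho)$ factors partially cancel. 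There is no conceptual difficulty — normality is preserved under the linear-then-threshold operation, and everything reduces to matching means and variances.
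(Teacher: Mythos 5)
Your proposal is correct and follows essentially the same route as the paper: the paper also rewrites $\Phi$ as $\Pr[Y \le \cdot]$ for an auxiliary independent standard normal, combines it with the posterior draw $X' = \rho x + b\sqrt{1+\rho}\,X$, and reads off the variance $1+\rho^2$ of the resulting Gaussian, which is exactly the content of your identity $\E[\Phi(c-dW)] = \Phi\bigl((c-d\mu)/\sqrt{1+d^2\sigma^2}\bigr)$ with $d^2\sigma_W^2 = \rho^2$. The only thing to tidy is the trailing variance bookkeeping, which indeed simplifies to $\rho^2$ and yields the denominator $b\sqrt{(1+\rho)(1+\rho^2)}$ as you state.
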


\begin{proof}
  Note that for a fixed threshold $\tau$,
  $$\bar P(\tau; x) = \E_{x' \sim \Pr[\cdot \mid x]} \Pr [X'' \leq \tau \mid X' = x'] = \E_{X' \sim N(\rho x, b^2(1 + \rho))} \Phi \left(  \frac{\tau - \rho x'}{b \sqrt{1 + \rho}} \right),$$
  and since $X' = \rho x + b \sqrt{1 + \rho} X$ for $X \sim N(0, 1),$
  \begin{align*}
    \bar P(\tau; x) &=  \E_X \Phi \left(  \frac{\tau - \rho(\rho x + b \sqrt{1 + \rho}X)}{b \sqrt{1 + \rho}} \right) \\
    &= \Pr \left[ Y \leq \frac{\tau - \rho(\rho x + b \sqrt{1 + \rho}X)}{b \sqrt{1 + \rho}}  \right] \\
    &= \Pr \left[ Y + \rho X \leq \frac{\tau - \rho^2 x}{b \sqrt{1 + \rho}} \right],
  \end{align*}
  
  where $Y \sim N(0, 1)$.
  It follows that $Y + \rho X \sim N(0, 1 + \rho^2)$, so 
  $$\bar P(\tau; x) = \Phi \left( \frac{\tau - \rho^2 x} {b\sqrt{(1 + \rho^2)(1 + \rho)} } \right).$$
  The definition of $Q(x) = \bar P(x; x)$ follows.
\end{proof}

We can then show that the same equilibria occur in RBTS as the previous mechanisms.

\begin{proposition} \label{prop:gaussian-rbts}
  In the Gaussian model under RBTS, there are three equilibria at $\tau = 0$ and $\pm \infty$.
\end{proposition}

\begin{proof}
  Existence of equilibria at $\pm \infty$ follow as a corollary from~\ref{prop:dg-infinite-equil}.
  Next, note all non-infinite equilibria must satisfy $G(x) = Q(x)$ by Theorem~\ref{thm:rbts-equil}.
  But $G(x)$ and $Q(x)$ both correspond to the standard Normal CDF with different coefficients $c_G = \frac {(1-\rho)} {b \sqrt{1+\rho}}$ and $c_Q = \frac {(1-\rho^2)} {b \sqrt{(1 + \rho)(1+\rho^2)}}$.
  Since $\rho \in (0, 1)$,
  \[
		\left( \frac{c_Q}{c_G} \right)^2 = \frac{(1 - \rho^2)^2}{(1 - \rho)^2(1 + \rho^2)} = \frac{(1 + \rho)^2}{1 + \rho^2} > 1.
  \]
  Thus $\tau = 0$ is the unique point at which $G(\tau) = Q(\tau)$ (see Figure~\ref{fig:gaussian-F-G} for an example with $a = b = 1$), so there are no other potential non-infinite equilibria.

  Now we prove $\tau = 0$ actually is an equilibrium using the sufficiency condition in Theorem~\ref{thm:rbts-equil}.
  Note that when $\tau = 0$, $P(\tau; x)$ and $\bar P(\tau; x)$ are both Normal CDFs over $x$ with negative (and distinct) coefficients.
  One can thus easily check that if $x \leq \tau$, $P(\tau; x) \geq \bar{P}(\tau; x)$.
  Moreover, if $x > \tau$, $P(\tau; x) \leq \bar{P}(\tau; x)$.
  It follows by the sufficiency condition in Theorem~\ref{thm:rbts-equil} that $\tau = 0$ is an equilibrium. 
\end{proof}

We can also derive the exact form of the best response $\hat \tau = c(\rho) \tau$.

Next, we include a proof of the form of the best response in RBTS under the Gaussian model.
\begin{proposition} \label{prop:rbts-normal-br}
  When other agents are playing according to some threshold $\tau$, an agent will best respond with threshold strategy 
  \[
    \hat \tau = \left(\frac{\sqrt{1 + \rho^2} - 1}{\rho (\sqrt{1 + \rho^2} - \rho)}\right) \tau.
  \]
\end{proposition}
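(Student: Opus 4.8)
The plan is to reduce the claim to solving a single scalar equation. Against all other agents playing threshold $\tau$, the derivation leading to Equations~\eqref{eq:rbts-equil-1} and~\eqref{eq:rbts-equil-2} shows that an agent with signal $x$ strictly prefers to report $L$ when $P(\tau;x) > \bar P(\tau;x)$ and strictly prefers $H$ when $P(\tau;x) < \bar P(\tau;x)$. So I would first argue that the best response is a threshold strategy $\sigma^{\hat\tau}$ whose threshold $\hat\tau$ is pinned down by the indifference condition $P(\tau;\hat\tau) = \bar P(\tau;\hat\tau)$, and then solve this equation using the closed forms for $P$ and $\bar P$ from Equation~\eqref{eq:p} and the preceding proposition.

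Concretely, I would substitute $P(\tau;x) = \Phi\big((\tau-\rho x)/(b\sqrt{1+\rho})\big)$ and $\bar P(\tau;x) = \Phi\big((\tau-\rho^2 x)/\sqrt{b^2(1+\rho^2)(1+\rho)}\big)$ into $P(\tau;\hat\tau)=\bar P(\tau;\hat\tau)$. Since $\Phi$ is strictly increasing, this is equivalent to equating the arguments,
\[
  \frac{\tau-\rho\hat\tau}{b\sqrt{1+\rho}} \;=\; \frac{\tau-\rho^2\hat\tau}{b\sqrt{(1+\rho^2)(1+\rho)}}~,
\]
which, after clearing the common factor $b\sqrt{1+\rho}$, multiplying through by $\sqrt{1+\rho^2}$, and collecting the $\hat\tau$ terms, becomes $(\sqrt{1+\rho^2}-1)\tau = \rho(\sqrt{1+\rho^2}-\rho)\hat\tau$. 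Dividing by $\rho(\sqrt{1+\rho^2}-\rho)$, which is strictly positive for $\rho\in(0,1)$ since $\sqrt{1+\rho^2}>\rho$, yields the stated formula for $\hat\tau$.

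The step that needs a little care --- and the only real obstacle --- is justifying that the best response genuinely has threshold form, i.e.\ that $P(\tau;x)-\bar P(\tau;x)$ crosses $0$ exactly once and from positive to negative. Both $P(\tau;\cdot)$ and $\bar P(\tau;\cdot)$ are standard-normal CDFs of decreasing affine functions of $x$, so by injectivity of $\Phi$ the difference vanishes only at the single point where the two affine arguments coincide, namely $\hat\tau$; and comparing the slope magnitudes $\rho/(b\sqrt{1+\rho})$ and $\rho^2/(b\sqrt{(1+\rho^2)(1+\rho)})$ (the former being larger exactly because $\sqrt{1+\rho^2}>\rho$) shows $P(\tau;x)-\bar P(\tau;x)>0$ for $x$ sufficiently negative and $<0$ for $x$ sufficiently positive. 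This matches the best-response rule above, confirming $\sigma^{\hat\tau}$ is the best response. All remaining manipulations are routine algebra.
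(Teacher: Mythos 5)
Your proposal is correct and follows essentially the same route as the paper: use the RBTS best-response rule (report $L$ iff $P(\tau;x)\geq \bar P(\tau;x)$), equate the Gaussian CDF arguments to solve for the unique indifference point $\hat\tau$, and verify the single crossing from positive to negative so the best response is indeed the threshold strategy at $\hat\tau$. Your write-up is in fact slightly more explicit than the paper's (which leaves the algebra and crossing check as ``can be verified''), but the argument is the same.
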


\begin{proof}
  Let $\tau$ be the current threshold strategy that all other agents are following.
  By equations~\eqref{eq:rbts-equil-1} and~\eqref{eq:rbts-equil-2}, an agent who receives signals $x$ will best respond with $H$ if $ P(\tau; x) \leq \bar P(\tau; x)$, and $L$ otherwise. 
  In the Gaussian model, note by Proposition~\ref{prop:gaussian-rbts} that
  \begin{align*}
    P(\tau; x) &= \Phi \left( \frac{\tau - \rho x}{b \sqrt{1 + \rho}} \right), \\
    \bar P(\tau; x) &= \Phi \left( \frac{\tau - \rho^2 x}{b\sqrt{(1 + \rho^2)(1 + \rho)}} \right).
  \end{align*}
  Thus $P(\tau; x) = \bar P(\tau; x)$ at a unique point which can be verified with simple algebra to be $x = \hat \tau$.
  One can check for the specified value of $\hat \tau$, $P(\tau; x) \geq \bar P(\tau; x)$ for $x \leq \hat \tau$ and $ P(\tau; x) \leq \bar P(\tau; x)$ for $x \geq \hat \tau$.
  The result follows. 
\end{proof}

Now, note that for any $\tau$, $\hat \tau = c(\rho) \tau$ for $c(\rho) \in (0, 1)$.
Under our dynamics, $\dot \tau = \hat \tau - \tau.$
If $\tau > 0$, then $\hat \tau < \tau$, while if $\tau < 0$, $\hat \tau > \tau$. 
It follows that $\tau = 0$ is stable, while by topological necessity the uninformative equilibria are unstable. 

We observe that the best responses under DG and RBTS are both linear functions of $\tau$ with coefficients $m_{\DG}$, $m_{\RBTS}$ respectively.
Thus we can formally compare convergence of the DG and RBTS mechanisms under the Gaussian model using the best response form derived in Proposition~\ref{prop:rbts-normal-br}.
We find that if both mechanisms begin with the same starting threshold $\tau(0)$, the stable equilibrium at $0$ is reached faster in DG than RBTS.
E.g. in a peer grading setting, graders will more quickly converge to a half-good half-bad quality consensus than RBTS graders, so that the designer may have more time to react and reset expectations in the latter case. 

\begin{proposition}
  In the Gaussian model, for any initial threshold $\tau(0) \neq 0$, convergence to the stable equilibrium $\tau = 0$ is strictly slower in the RBTS mechanism than in the DG mechanism.
\end{proposition}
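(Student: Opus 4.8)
The plan is to reduce the comparison to the explicit linear best-response maps and then to a single inequality in the correlation $\rho = a^2/(a^2+b^2) \in (0,1)$. Under the Gaussian model the best response in both mechanisms is linear in $\tau$, so write $\hat\tau = m_{\DG}\,\tau$ and $\hat\tau = m_{\RBTS}\,\tau$. Proposition~\ref{prop:rbts-normal-br} gives $m_{\RBTS} = \frac{\sqrt{1+\rho^2}-1}{\rho(\sqrt{1+\rho^2}-\rho)}$, and I would derive $m_{\DG}$ analogously from Proposition~\ref{prop:dg-best-response}: imposing $P(\tau;\hat\tau) = F(\tau)$ and substituting the Gaussian forms of $P$ (Equation~\eqref{eq:p}) and $F$ (Equation~\eqref{eq:gaussian-FG}) together with the identity $a^2 = \frac{\rho}{1-\rho}b^2$ gives $\frac{\tau - \rho\hat\tau}{b\sqrt{1+\rho}} = \frac{\sqrt{1-\rho}}{b}\,\tau$, hence $m_{\DG} = \frac{1-\sqrt{1-\rho^2}}{\rho}$. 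With these in hand, the continuous dynamics $\dot\tau = \hat\tau - \tau = (m-1)\tau$ of Equation~\eqref{eq:continuous-dynamic} integrate to $\tau(t) = \tau(0)\,e^{-(1-m)t}$; since both coefficients lie in $(0,1)$ (for DG, $m_{\DG} < \rho < 1$; for RBTS this is already noted in \S~\ref{subsec:gaussian-dg}), both trajectories converge to $0$, and for a common start $\tau(0)\neq 0$ we have $|\tau_{\RBTS}(t)| > |\tau_{\DG}(t)|$ for every $t>0$ precisely when $m_{\RBTS} > m_{\DG}$.

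The core step is therefore establishing $m_{\RBTS} > m_{\DG}$ for all $\rho\in(0,1)$. Rationalizing the numerators yields the cleaner forms $m_{\DG} = \frac{\rho}{1+\sqrt{1-\rho^2}}$ and $m_{\RBTS} = \frac{\rho}{(\sqrt{1+\rho^2}-\rho)(\sqrt{1+\rho^2}+1)}$, so the claim is equivalent to $(\sqrt{1+\rho^2}-\rho)(\sqrt{1+\rho^2}+1) < 1+\sqrt{1-\rho^2}$. Expanding the left-hand side and cancelling the constant term reduces this to $(1-\rho)(\sqrt{1+\rho^2}-\rho) < \sqrt{1-\rho^2}$; writing $\sqrt{1-\rho^2}=\sqrt{1-\rho}\,\sqrt{1+\rho}$, dividing through by $\sqrt{1-\rho}$, and squaring (legitimate since $\sqrt{1+\rho^2}>1>\rho$ makes both sides positive) and simplifying further collapses the inequality to $(1-\rho)\bigl(\rho - \sqrt{1+\rho^2}\bigr) < 1$, which holds trivially because the left-hand side is negative. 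I would present these reductions as a short chain of equivalences, noting at each step that the quantities being divided out or squared are manifestly positive on $(0,1)$ so the steps are reversible.

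The main obstacle is bookkeeping rather than anything conceptual: one must keep the nested radicals under control through the reduction and verify positivity before each squaring so that the chain of equivalences is genuine. A secondary point is matching the informal phrase ``strictly slower'' to a precise statement; I would fix this by comparing the two solutions launched from the same $\tau(0)\neq 0$ and showing $|\tau_{\RBTS}(t)| > |\tau_{\DG}(t)|$ for all $t>0$, equivalently that the decay exponent $1-m_{\RBTS}$ is strictly smaller than $1-m_{\DG}$, which is immediate once $m_{\RBTS}>m_{\DG}$ is in hand.
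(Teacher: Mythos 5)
Your proposal is correct and takes essentially the same route as the paper: reduce both mechanisms to linear best responses $\hat\tau = m_{\DG}(\rho)\,\tau$ and $\hat\tau = m_{\RBTS}(\rho)\,\tau$ and compare the slopes, so that the strictly slower exponential decay for RBTS follows from $m_{\RBTS}(\rho) > m_{\DG}(\rho)$ for all $\rho\in(0,1)$. The only difference is that you carry out the verification of this inequality explicitly (the paper leaves it as ``one can check,'' and its appendix even writes the ratio as $m_{\DG}/m_{\RBTS}>1$, with the direction reversed relative to the main text), and your algebraic reduction confirms the correct direction $m_{\RBTS}>m_{\DG}$.
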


\begin{proof}
  By Proposition~\ref{prop:rbts-normal-br}, in the RBTS mechanism a best response $\hat \tau$ to all other agents playing according to threshold $\tau$ is $\hat \tau = m_{\RBTS}(\rho)\tau$ for 
  \[m_{\RBTS}(\rho) = \frac{\sqrt{1 + \rho^2} - 1}{\rho (\sqrt{1 + \rho^2} - \rho)}. \]
  Meanwhile, the best response in DG to a fixed threshold $\tau$ satisfies $P(\tau; \hat \tau) = F(\tau)$, or 
  \[ 
    \Phi \left( \frac{\sqrt{\rho} \tau}{a} \right) = \Phi \left( \frac{\tau - \rho \hat \tau}{b \sqrt{1 + \rho}}\right).
  \]
The solution to this equality is easily reached by setting the arguments to $\Phi$ equal to each other.
We end up with $\hat \tau = m_{\DG}(\rho) \tau$ for 
  \[m_{\DG}(\rho) = \frac{a - b\sqrt{\rho(1 + \rho)}}{a \rho}. \]
  Now, one can check that $\frac{m_{\DG}(\rho)}{m_{\RBTS}(\rho)} < 1$ for all $a, b > 0$. 
  It immediately follows that the best response $\hat \tau$ converges to 0 strictly faster under the DG mechanism than under the RBTS mechanism.
\end{proof}

\section{Omitted results for DMI}
\label{app:dmi}

In this section, we characterize equilibria and their dynamics for the DMI mechanism.
Unlike the previous mechanisms, where we could compute the payment for a single task in isolation (albeit requiring multiple tasks from a different agent), DMI is inherently a multi-task mechanism.
As in the original analysis of DMI, we therefore restrict to \emph{consistent} strategies, where the same $\sigma$ is applied to each $X_i$. 
We include a discussion of behavior outside consistent strategies in \S~\ref{subsec:dmi-consistent}.
Even under this restriction, threshold equilibria are much more complex in DMI given the interactions between four different signals, making it unclear how to reason directly from the definition of a threshold equilibrium.
Thus we also restrict all agents to best responding with threshold strategies.

\subsection{Equilibrium Characterization}

Let 
\begin{equation}\label{eq:ex-ante-dmi}
	U_i(\sigma, \sigma') = \E\left[ M_\DMI(\sigma(X_1),\ldots,\sigma(X_4),\sigma'(X_1'),\ldots,\sigma'(X_4')) \right]
\end{equation}
be the (ex-ante) expected utility for playing threshold strategy $\sigma$ given the other agent's strategy $\sigma'$.
\begin{definition}
	A threshold strategy $\sigma: \reals \to \R$ is a {\em equilibrium restricted to threshold strategies} under DMI if for all threshold strategies  $\hat\sigma: \reals \to \R$, $U_i(\sigma,\sigma) \geq U_i(\hat\sigma,\sigma)$.
\end{definition}
This is a weaker condition than being a threshold equilibrium because it rules out deviations to non-threshold strategies, an issue we will return to.  
However, by restricting to this space we can provide an equilibrium characterization in the same spirit as we did for OA and DG.

As described above, the DMI mechanism is designed as an unbiased estimator of the (squared) determinant-based mutual information $\DMI(Y;Y')^2 = (\det U_{Y;Y'})^2$, where
$U_{Y,Y'}$ is the joint distribution matrix of the $\{H,L\}$-valued random variables $Y,Y'$, given by
\begin{align*}
U_{Y,Y'} =
  \begin{bmatrix}
    \Pr[Y=H,Y'=H] &
    \Pr[Y=H,Y'=L]
    \\
    \Pr[Y=L,Y'=H] &
    \Pr[Y=L,Y'=L]
  \end{bmatrix}~.
\end{align*}
This unbiasedness is implied by \citet[Claim 4.4]{kong2024dominantly}, which we paraphrase for the binary-report context:
for any $Y,Y'$, there exists a constant $\alpha > 0$ such that $\E \det \sum_{i\in S} \ones_{Y} \ones_{Y'}^T = \alpha \det U_{Y,Y'}$.

Back to our real-valued setting, recall that $(X_i,X_i')\sim\D$ independently.
Given strategies $\sigma,\sigma'$ with thresholds $\tau,\tau'$, for some $\alpha'>0$ we can rewrite Equation~\eqref{eq:ex-ante-dmi} as $U_i(\sigma,\sigma') = \alpha' h(\tau,\tau')^2$ for
\begin{equation} \label{eq:dmi-expected-score}
  h(\tau,\tau') =
\Pr[X>\tau,X'>\tau']\Pr[X\leq\tau,X'\leq\tau']\strut - \Pr[X>\tau,X'\leq\tau']\Pr[X\leq\tau,X'>\tau']~,
\end{equation}
where $(X,X')\sim\D$.
We are now in a position to analyze equilibria and dynamics.

\subsection{Equilibrium results}
Given the definition of $h$ from Equation~\eqref{eq:dmi-expected-score}, we can show that uninformative strategies $\tau^* = \pm \infty$ remain equilibria in the DMI mechanism.
The proof is immediate, as in both cases one of the two probabilities in each term of Equation~\eqref{eq:dmi-expected-score} is zero independent of $\tau$.

\begin{proposition} \label{prop:dmi-infinite-equil}
	$\tau^* = \pm \infty$ are both always equilibria restricted to threshold strategies under DMI.
\end{proposition}

We can also provide necessary and sufficient conditions for a finite threshold equilibrium for DMI.  The theorem relies on several regularity conditions that rule out corner cases where the first order condition is trivially satisfied for spurious reasons. 

\begin{theorem} \label{thm:dmi-equil}
	Let finite threshold $\tau$ be given, $h(x,\tau)$ be differentiable in $x$ at $x = \tau$, $h(\tau,\tau)\neq 0$, and $f(\tau) > 0$, where $f$ is the continuous density of $F$.
	If $\tau$ is an equilibrium restricted to threshold strategies under the DMI mechanism then $G(\tau) = F(\tau)$.
	Conversely, if $G(\tau) = F(\tau)$ and $h(x,\tau)$ is non-negative and strictly single-peaked 
	then $\tau$ is an equilibrium restricted to threshold strategies under the DMI mechanism.
\end{theorem}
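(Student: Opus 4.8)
The plan is to turn the equilibrium condition into a one-variable unconstrained optimization and handle it by a first-order analysis, exactly paralleling the proofs of Theorems~\ref{thm:oa-equil} and~\ref{thm:dg-equil}. Fix the candidate threshold $\tau$ and let a deviating agent consider a threshold $\hat\tau$; by the expression $U_i(\sigma,\sigma') = \alpha' h(\tau,\tau')^2$ with $\alpha' > 0$, the strategy $\sigma^{\tau}$ is an equilibrium restricted to threshold strategies iff $\tau \in \argmax_{\hat\tau \in \reals \cup \{\pm\infty\}} h(\hat\tau,\tau)^2$. The two infinite deviations are immediately dispatched: setting $\hat\tau = \pm\infty$ makes one factor in each product of Equation~\eqref{eq:dmi-expected-score} vanish, so $h(\pm\infty,\tau) = 0$, which is no larger than $h(\tau,\tau)^2 > 0$ (using $h(\tau,\tau)\neq0$). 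So it remains to analyze $\hat\tau\mapsto h(\hat\tau,\tau)^2$ over $\hat\tau\in\reals$.

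The crux is a derivative computation. First I would simplify $h$: using the four identities among the joint probabilities in Equation~\eqref{eq:dmi-expected-score} (e.g.\ $\Pr[X>x,X'>\tau]+\Pr[X\le x,X'>\tau]=1-F(\tau)$ and its three analogues, from marginal symmetry), the four-term expression collapses to $h(x,\tau)=\Pr[X>x,X'>\tau]-(1-F(x))(1-F(\tau))$, i.e.\ the covariance of $\ones[X>x]$ and $\ones[X'>\tau]$. Differentiating in $x$ using the continuous density $f$ and the identity $\tfrac{d}{dx}\Pr[X>x,X'>\tau]=-(1-P(\tau;x))f(x)$ yields $\partial_1 h(x,\tau)=f(x)\big(P(\tau;x)-F(\tau)\big)$, and in particular $\partial_1 h(\tau,\tau)=f(\tau)\big(G(\tau)-F(\tau)\big)$ since $P(\tau;\tau)=G(\tau)$. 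The hypothesis that $h(\cdot,\tau)$ is differentiable in $x$ at $x=\tau$ is exactly what legitimizes this computation at the relevant point.

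Necessity then follows from the first-order condition: if $\tau$ is a finite maximizer of the map $\hat\tau\mapsto h(\hat\tau,\tau)^2$, which is differentiable at $\tau$, then $2\,h(\tau,\tau)\,\partial_1 h(\tau,\tau)=0$; since $h(\tau,\tau)\neq0$ we get $\partial_1 h(\tau,\tau)=0$, and since $f(\tau)>0$ this is equivalent to $G(\tau)=F(\tau)$. For sufficiency, assume $G(\tau)=F(\tau)$, so $\partial_1 h(\tau,\tau)=0$ and $\tau$ is a stationary point of $h(\cdot,\tau)$. Because $h(\cdot,\tau)$ is strictly single-peaked and strictly monotone away from its peak, its unique peak is its only stationary point, so $\tau$ is that peak: $h(\cdot,\tau)$ is strictly increasing on $(-\infty,\tau)$ and strictly decreasing on $(\tau,\infty)$, hence globally maximized at $\tau$. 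Non-negativity of $h(\cdot,\tau)$ transfers this to $h(\cdot,\tau)^2$ (squaring is increasing on $[0,\infty)$), so $\tau$ maximizes $h(\cdot,\tau)^2$ over $\reals$; combined with the first paragraph, $\sigma^{\tau}$ is an equilibrium restricted to threshold strategies.

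I expect the main obstacle to be the middle step: getting the algebraic collapse of $h$ and the signs and conditioning in $\partial_1 h$ right, so that $\partial_1 h(\tau,\tau)$ comes out proportional to $G(\tau)-F(\tau)$ rather than some other quantity. A secondary point requiring care in the sufficiency direction is the passage from ``$\tau$ is a stationary point of the strictly single-peaked $h(\cdot,\tau)$'' to ``$\tau$ is the peak''; via the formula $\partial_1 h(x,\tau)=f(x)(P(\tau;x)-F(\tau))$, strict single-peakedness of $h(\cdot,\tau)$ corresponds (where $f>0$) to $P(\tau;\cdot)$ crossing $F(\tau)$ exactly once from above, which is the DMI analogue of condition (b) in Theorems~\ref{thm:oa-equil} and~\ref{thm:dg-equil}, and $\partial_1 h(\tau,\tau)=0$ then pins that crossing at $\tau$. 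Finally, the regularity hypotheses $h(\tau,\tau)\neq0$ and $f(\tau)>0$ are precisely what keep the first-order condition nondegenerate.
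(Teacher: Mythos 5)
Your proof is correct and follows essentially the same route as the paper's: reduce the equilibrium condition to maximizing $\hat\tau \mapsto \alpha'\, h(\hat\tau,\tau)^2$, extract the first-order condition $2\,h(\tau,\tau)\,f(\tau)\,(G(\tau)-F(\tau)) = 0$ for necessity (using $h(\tau,\tau)\neq 0$ and $f(\tau)>0$), and invoke non-negativity plus strict single-peakedness of $h(\cdot,\tau)$ for sufficiency. Your explicit covariance rewriting of $h$ and the treatment of the deviations $\hat\tau = \pm\infty$ are just more detailed versions of steps the paper leaves implicit, at the same level of rigor.
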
 

\begin{proof}
	For necessity, assume that $\tau^*$ is an equilibrium restricted to threshold strategies, so $f(\tau,\tau^*)$ is maximized at $\tau = \tau^*$.
	The first order condition for optimality is
\begin{align}
  \label{eq:1}
  0 &= \frac d {d\tau} \alpha' h(\tau,\tau^*)^2
    = 2 h(\tau,\tau^*) \alpha' f(\tau) \left( \Pr[X' > \tau^*] - \Pr[ X' > \tau^* \mid X=\tau] \right)~
\end{align}
	By assumption $h(\tau^*,\tau^*)\neq 0$ and $f(\tau^*) > 0$. Thus $\Pr[X' > \tau^*] - \Pr[ X' > \tau^* \mid X=\tau^*] = 0$, or equivalently $G(\tau^*) - F(\tau^*) = 0$

	For sufficiency, assume $G(\tau) = F(\tau)$.  By the above, $\tau$ satisfies the first order condition for equilibrium.  If $h(x,\tau)$ is non-negative and strictly single-peaked as a function of $x$ then $U_i$ is also strictly single-peaked as a function of $x$ (i.e. it is strictly increasing and then strictly decreasing), so the unique point satisfying the FOC is the global maximum.
\end{proof}

As discussed in the main text, Theorem~\ref{thm:dmi-equil} shows that DMI has the same necessary condition for equilibrium as DG. 
Thus, beyond the slightly different sufficient condition for equilibrium, its equilibria can be found at the same thresholds: those where $G(\tau) = F(\tau)$.
Moreover, with the appropriate sufficient condition, our results about the dynamics (Theorem~\ref{thm:dg-dynamics}) for DG immediately apply to DMI as well.
In particular, the Gaussian case has the same equilibra with the same stability.

\subsection{Beyond consistent threshold strategies} \label{subsec:dmi-consistent}
The above analysis is restricted to our weaker notion of equilibrium in the space of consistent threshold strategies.
It is unclear whether these remain equilibria if agents can use any consistent strategy, or if agents can use any four threshold strategies.
Even more broadly, one could consider \emph{joint-task} strategies $\sigma:\reals^4\to\{H,L\}^4$, which map all four signals simultaneously to all four reports.
While not shown in \citet{kong2020dominantly,kong2024dominantly}, in the binary report setting the DMI mechanism turns out to satisfy the even stronger property that truthfulness is an equilibrium among all joint-task strategies.

Specifically, we can show in the binary signal, binary report model that truthfulness remains a best response even when the other agent deviates from truthful, but crucially, still plays a \emph{consistent} strategy, which here means strategies that (a) map each signal to a distribution over reports, and (b) use the same such map for every task.
\begin{definition}
  A strategy $\sigma:\{H,L\}^T\to\Delta(\{H,L\}^T)$ is \emph{consistent} if $\sigma(s_{1..T})(r_{1..T}) = \prod_{t=1}^T \hat\sigma(s_t)(r_t)$ for some $\hat\sigma:\{H,L\}\to\Delta(\{H,L\})$.
\end{definition}

For example, the truthful strategy $\sigma_{\text{true}} : s \mapsto \delta_s$, where $\delta_s$ is the point distribution on $s$, is a consistent strategy, with $\hat\sigma:s\mapsto\delta_s$ for $s\in\{H,L\}$.

We are able to prove that truthfulness maximizes the expected payoff of $M_\DMI$ when all other agents play consistent strategies.
As a corollary, truthfulness is a Bayes Nash equilibrium even over the larger space of strategies, where one can choose all $T$ reports simultaneously after looking at all $T$ signals.
We leave the proof to \S~\ref{subsec:dmi-proof}.

\begin{theorem}
  \label{thm:dmi-joint-strategy}
  In the binary signal model, suppose all agents $j\neq i$ play consistent strategies.
  Then $\sigma_{\text{true}}$ maximizes agent $i$'s expected payment under $M_\DMI$.
\end{theorem}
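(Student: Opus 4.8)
I would establish the stronger claim—that $\sigma_{\text{true}}$ maximizes $\E[M_\DMI]$ when every other agent plays a consistent strategy—by exploiting the multiplicative structure of the mechanism together with the unbiasedness fact paraphrased from \citet[Claim 4.4]{kong2024dominantly}. First I would fix agent $i$'s strategy to be an arbitrary (possibly joint-task) map $s_{1..4}\mapsto r_{1..4}$ and fix the other agents' consistent strategies, which induce a map $\hat\sigma:\{H,L\}\to\Delta(\{H,L\})$ applied independently to each of the four signals. The key observation is that, conditioned on the realized signal vector $s_{1..4}$ of agent $i$, agent $i$'s reports $r_{1..4}$ are deterministic, while the peer's reports $r'_{1..4}$ are independent across tasks. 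So I would write
\[
\E\left[M_\DMI \mid s_{1..4}\right] = \E\left[\det M_{12}\det M_{34} \,\middle|\, s_{1..4}\right],
\]
and argue that because tasks $\{1,2\}$ and $\{3,4\}$ involve disjoint (hence independent) peer signals $s'_{1..4}$, this factors as $\E[\det M_{12}\mid s_{1},s_2]\cdot\E[\det M_{34}\mid s_3,s_4]$.

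Next I would invoke the unbiasedness identity: for any pair of tasks with agent-$i$ reports $r_a,r_b$ and the peer reporting via $\hat\sigma$ on the true signals, $\E[\det(\ones_{r_a}\ones_{r'_a}^T + \ones_{r_b}\ones_{r'_b}^T)]$ equals $\alpha\det U$, where $U$ is the $2\times2$ joint-distribution matrix of agent $i$'s \emph{reported} label and the peer's reported label on a single task. Crucially, by symmetry of $\D$ and consistency of the peer's strategy, the distribution of (agent $i$'s report, peer's report) on task $t$ depends only on the function $i$ uses to map its signal to a report on that coordinate. I would then show the expected score is a product of two squared determinants of $2\times 2$ matrices, each of which is maximized in absolute value when $i$'s reporting rule on the corresponding coordinates is the identity (truthful), because truthful reporting preserves the full statistical dependence between $i$'s signal and the peer's report, whereas any coarsening or randomization can only shrink the off-diagonal dependence, i.e. the determinant. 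A clean way to phrase this: truthful reporting makes $U$ equal to the true joint law of $(Y,Y')$, and for any other reporting map $g$, the matrix $U_g$ is a post-processing (stochastic-matrix multiplication) of $U$, so $|\det U_g| \le |\det U|$ by submultiplicativity of determinants under stochastic maps.

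Then I would assemble: since $\E[M_\DMI\mid s_{1..4}]$ factors into a product over the two task-pairs and each factor is, up to the positive constant $\alpha$, a squared determinant bounded by the truthful value, and since the truthful strategy simultaneously achieves the bound in both factors and for every signal realization, truthfulness maximizes the conditional expectation pointwise, hence the unconditional expectation. The corollary that truthfulness is a Bayes–Nash equilibrium over joint-task strategies is then immediate since the joint-task truthful strategy is itself consistent.

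\textbf{Main obstacle.} The delicate step is the determinant-monotonicity claim—that replacing the truthful report by any other (possibly randomized, possibly signal-dependent within a coordinate) reporting rule can only weakly decrease $|\det U|$. One must be careful that in a \emph{joint-task} strategy, agent $i$'s report on task $3$ may depend on the signal received for task $1$; I would handle this by conditioning on all of $s_{1..4}$ first (making $i$'s reports deterministic) and only then taking expectations over the peer, so the relevant randomness being ``post-processed'' is entirely on the peer side and the factorization across the two independent task-pairs is legitimate. The other place to be careful is verifying that the constant $\alpha>0$ from Claim 4.4 is the same across the two factors and does not depend on the reporting rule, so that comparing products of determinants is valid; this follows because $\alpha$ depends only on the count-matrix structure (here, summing two rank-one terms), not on the underlying distribution.
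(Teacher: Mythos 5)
Your setup is right and matches the paper: condition on agent $i$'s signal realization $s_{1..4}$ (so $i$'s reports are deterministic even under a joint-task strategy), and use the conditional independence of the peer's task-wise reports to factor $\E[M_\DMI\mid s_{1..4}]$ into $\E[\det M_{12}\mid s_1,s_2]\cdot\E[\det M_{34}\mid s_3,s_4]$. The gap is in the next step. Once you have conditioned, you cannot invoke the Claim 4.4 unbiasedness identity or the post-processing bound $|\det U_g|\le |\det U|$: that identity requires the per-task report pairs to be i.i.d.\ draws from a joint report distribution, which is exactly the randomness that conditioning on $s_{1..4}$ removes. Conditionally, $\E[\det M_{12}\mid s_1,s_2]$ is \emph{not} $\alpha\det U_g$ for any per-task joint law, and it is certainly not a squared determinant --- it can be negative, and it is identically zero whenever $s_1=s_2$, regardless of what $i$ reports (a fact that has no analogue in the determinant-of-$U$ picture). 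Conversely, if you avoid conditioning and work ex ante so that Claim 4.4 applies, the joint-task structure you are trying to cover breaks the argument: $i$'s report on task 1 may depend on $s_3$, so the report pairs are no longer i.i.d.\ across tasks, the ex-ante expectation does not factor, and there is no well-defined per-task reporting map $g$ to post-process with. Your data-processing argument therefore only re-proves Kong's original result for \emph{consistent} deviations; it does not reach the joint-task deviations that are the whole point of this theorem. There is also a secondary soundness issue: the two conditional factors can be negative, so maximizing each factor's absolute value separately does not by itself maximize the product --- one must check that truthful reporting also aligns the signs.

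What the paper does instead is replace your determinant-monotonicity step with an explicit conditional computation (Lemma~\ref{lem:dmi-det-formula}): for a consistent peer with $p(s)=\Pr[R'=H\mid S=s]$, the conditional expectation $\E[\det M_{12}\mid s_1,s_2]$ equals $0$ if $r_1=r_2$ or $s_1=s_2$, and equals $\pm\bigl(p(H)-p(L)\bigr)$ according to whether $(r_1,r_2)$ matches $(s_1,s_2)$ or swaps it. A short case analysis then shows every joint-task report vector yields conditional payoff in $\{0,\pm(p(H)-p(L))^2\}$, truthful reporting attains $(p(H)-p(L))^2$ whenever $s_1\neq s_2$ and $s_3\neq s_4$, and all strategies get $0$ otherwise; this gives pointwise (hence ex-ante) optimality and simultaneously settles the sign question. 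To repair your write-up you would need either this explicit lemma or some other argument that controls the \emph{conditional} expectations directly, rather than the ex-ante joint-distribution determinants.
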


\begin{corollary}
  The truthful strategy $\sigma_{\text{true}}$ is a Bayes Nash equilibrium of the DMI mechanism.
\end{corollary}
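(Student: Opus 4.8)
The plan is to establish the stronger pointwise fact that, conditioned on agent $i$'s full signal vector $\vec s=(s_1,\dots,s_T)$, the truthful report $\vec s$ maximizes the conditional expected payment $\E[M_\DMI\mid\vec s]$ over all report vectors. Because the expected payment is linear in agent $i$'s (possibly randomized) joint-task strategy, it is enough to compare deterministic maps $\vec s\mapsto\vec r$; if truthful wins for every $\vec s$, it wins in expectation. The stated corollary is then immediate: $\sigma_{\text{true}}$ is itself a consistent strategy, so being a best response to any profile of consistent strategies makes the all-truthful profile a Bayes--Nash equilibrium.

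The first step exploits the product form of consistency together with the i.i.d.\ structure of $\D$ across tasks. Since every agent $j\neq i$ applies one map $\hat\sigma:\{H,L\}\to\Delta(\{H,L\})$ task by task and the pairs $(X_t,X_t')$ are i.i.d.\ over $t$, conditioning on $\vec s$ makes the peer reports $R_t'$ mutually independent with $R_t'$ depending only on $s_t$; write $\mu(s)=\Pr[R_t'=H\mid S_t=s]$, which takes only the two values $\mu(H),\mu(L)$. Let $S_1,S_2$ be the two task groups into which $M_\DMI$ partitions the tasks ($\{1,2\}$ and $\{3,4\}$ when $T=4$; additionally condition on this partition if it is randomized). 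Agent $i$'s reports are fixed once we condition on $\vec s$, and the peer reports on $S_1$ are independent of those on $S_2$, so the determinants of the two count matrices are conditionally independent and $\E[M_\DMI\mid\vec s]=\E[\det N_{S_1}\mid\vec s]\,\E[\det N_{S_2}\mid\vec s]$.

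Next I would evaluate one group's factor. Expanding $\det N_S=N_S^{HH}N_S^{LL}-N_S^{HL}N_S^{LH}$ with $N_S^{ab}=\#\{t\in S:r_t=a,\ R_t'=b\}$ and taking the conditional expectation, the diagonal cross terms ($t=t'$) vanish because one task cannot be reported as both $H$ and $L$, while the off-diagonal terms factor by independence of the $R_t'$; a short computation collapses everything to $\E[\det N_S\mid\vec s]=\gamma\,(h_S\ell_S'-h_S'\ell_S)$, where $\gamma=\mu(H)-\mu(L)$ does not depend on agent $i$'s reports, $h_S$ and $\ell_S$ count how many of the $H$-signal tasks in $S$ agent $i$ reports as $H$ and as $L$ respectively, and $h_S',\ell_S'$ are the analogous counts for the $L$-signal tasks. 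Note $h_S+\ell_S$ and $h_S'+\ell_S'$ are fixed by $\vec s$ (they are the numbers of $H$- and $L$-signal tasks in $S$).

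Finally I would optimize. Combining the two groups, $\E[M_\DMI\mid\vec s]=\gamma^2(h_{S_1}\ell_{S_1}'-h_{S_1}'\ell_{S_1})(h_{S_2}\ell_{S_2}'-h_{S_2}'\ell_{S_2})$; since $\gamma^2\geq0$, agent $i$ wants to maximize the product of the two bracketed minors, and the choices on $S_1$ and $S_2$ are free and independent given $\vec s$. For each group $|h_S\ell_S'-h_S'\ell_S|\leq(h_S+\ell_S)(h_S'+\ell_S')$, since $h_S\ell_S'\leq(h_S+\ell_S)(h_S'+\ell_S')$ and $h_S'\ell_S\geq0$ (and symmetrically for the lower bound), and reporting truthfully on $S$ forces $\ell_S=h_S'=0$, hitting this bound with a nonnegative value. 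Hence truthful simultaneously maximizes each minor in magnitude and makes the product nonnegative, so $\E[M_\DMI\mid\vec s,\sigma_{\text{true}}]\geq\E[M_\DMI\mid\vec s,\vec r]$ for all $\vec r$; averaging over $\vec s$ finishes the proof. The main obstacle is the third step --- seeing that the expectation of the group determinant pushes through to the clean product $\gamma(h_S\ell_S'-h_S'\ell_S)$, so that agent $i$'s entire influence inside a group reduces to one integer $2\times2$ minor, whose sign ambiguity is then annihilated by the square coming from the two-group structure of $M_\DMI$.
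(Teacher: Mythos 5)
Your proposal is correct and takes essentially the same route as the paper: condition on agent $i$'s signals, use conditional independence of the peer's reports to factor the expected payment into the product of the two group determinant expectations, show each group factor is a multiple of $\mu(H)-\mu(L)$ (your closed-form minor $\gamma(h_S\ell_S'-h_S'\ell_S)$ is a compact restatement of the paper's case-analysis lemma), and conclude that truthful reporting maximizes the product. The corollary is then obtained exactly as in the paper, by noting that $\sigma_{\text{true}}$ is itself a consistent strategy.
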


We can then show that truthfulness is \emph{not} an equilibrium for real-valued reports: in many cases the best response flips one or two of the truthful reports to increase the chance of a nonzero payoff in $M_\DMI$.
To get some intuition, first consider the binary signal model and an agent who receives signals $H,L,L,L$.
From the form of the mechanism, if they report truthfully, their score will be zero deterministically, since $M_{34}$ will have rank 1.
Thus, they might consider deviating, say flipping the last $L$ to $H$.
It turns out doing so keeps their expected score at zero, since the other agent is equally likely to align or misalign with their reports.
The consistency of the other agent's strategies is crucial for this last fact; given any slight difference in their strategy for tasks 3 and 4, the first agent would deviate from truthfulness.

It is precisely this sort of imbalance that easily arises in the real-valued signal model.
Consider the Gaussian model with $a=b=1$ and $\rho=1/2$, at the equilibrium threshold $\tau=0$.
Suppose the agent receives $x_1 = 10$, $x_2 = -10$, $x_3 = -10$, and $x_4=-1$.
Reporting truthfully, $(H,L,L,L)$, would yield payoff 0 deterministically.
But the agent may note that flipping the last $L$ to $H$ is ``safe'' in the sense that it is extremely unlikely that the other agent will differ in their reports on the first 3 tasks.
So either the other agent reports $r_4 = L$, in which case the payment is zero for both agents, or $r_4 = H$, in which case both agents receive 1.
As there is a reasonable chance of the latter case, about $0.2819$, that is roughly the expected score.

Thus, given real-valued signals $(10,-10,-10,-1)$, the optimal report is $(H,L,L,H)$, which is not ``truthful'' for the prescribed threshold $\tau=0$.
Hence DMI is no longer truthful in this joint-task sense, even at the equilibrium threshold $\tau=0$.

\subsection{Proof of Theorem~\ref{thm:dmi-joint-strategy}} \label{subsec:dmi-proof}

The key observation needed for Theorem~\ref{thm:dmi-joint-strategy} is the following characterization of the possible expected values of each determinant term in the definition of $M_\DMI$.
\begin{lemma}
  \label{lem:dmi-det-formula}
  Suppose agent $j$ plays a consistent strategy.
  Let $R'_t$ be the random variable on $\{H,L\}$ resulting from agent $j$'s strategy, i.e., $R_t' \sim \hat\sigma(S_t')$ for each task $t$.
  Let $p(s_t) = \Pr[R_t' = H \mid S_t = s_t]$, which by assumption is independent of $t$.
  Then for agent $i$,
  \begin{align*}
    \E[\det M_{12} \mid S_1,S_2]
    \;=\;
    \begin{cases}
      0, 
      & \text{if } r_1 = r_2 \text{ or } s_1 = s_2,\\
      p(H) - p(L), 
      & \text{if } (r_1, r_2) = (s_1,s_2),\\
      p(L) - p(H), 
      & \text{if } (r_1, r_2) = (s_2,s_1).
    \end{cases}
  \end{align*}
  where as before
  \[
    M_{12} 
    \;=\; 
    \ones{r_1}\,\ones{R_1'}^\top 
    \;+\;
    \ones{r_2}\,\ones{R_2'}^\top~.
  \]
\end{lemma}
\begin{proof}
  If $r_1=r_2$, then $M_{12}$ has rank one and $\det M_{12} = 0$ deterministically.
  Suppose next $(r_1,r_2) = (H,L)$.
  We have
  \[
    \det M_{12}
    \;=\;
    \begin{cases}
      +1, &\text{if } (R_1',R_2') = (H,L),\\
      -1, &\text{if } (R_1',R_2') = (L,H),\\
      0,  &\text{otherwise}.
    \end{cases}
  \]
  As $R_1',R_2'$ are independent, we have
  \begin{align*}
    \Pr\bigl[(R_1',R_2') = (H,L)  \mid S_1,S_2\bigr] 
    &= p(s_1)\,\bigl(1 - p(s_2)\bigr)
    \\
    \Pr\bigl[(R_1',R_2') = (L,H)  \mid S_1,S_2\bigr] 
    &= \bigl(1 - p(s_1)\bigr)\,p(s_2)~.
  \end{align*}
  We thus have
  \begin{align*}
    \E[\det M_{12} \mid S_1,S_2]
    &=
      p(s_1) \,\bigl(1-p(s_2)\bigr)
      -
      \bigl(1-p(s_1)\bigr)\,p(s_2)
    \\
    &=
      p(s_1) - p(s_2)~.
  \end{align*}
  The same argument gives $\E[\det M_{12} \mid S_1,S_2] =
  p(s_2) - p(s_1)$ when $(r_1,r_2)=(L,H)$.

  We have now established
  \begin{align*}
    \E[\det M_{12} \mid S_1,S_2]
    \;=\;
    \begin{cases}
      0, 
      & \text{if } r_1 = r_2,\\
      p(s_1) - p(s_2), 
      & \text{if } (r_1, r_2) = (H,L),\\
      p(s_2) - p(s_1), 
      & \text{if } (r_1, r_2) = (L,H).
    \end{cases}
  \end{align*}
  To finish the proof, consider the cases $s_1=s_2$, $(s_1,s_2)=(H,L)$, and $(s_1,s_2)=(L,H)$.
  From the above, we obtain 0 in the first case.
  In the second, we have $p(H)-p(L)$ if $(r_1,r_2) = (s_1,s_2)$ and its negation when $(r_1,r_2) = (s_2,s_1)$.
  In the third, we again have $p(H)-p(L)$ if $(r_1,r_2) = (s_1,s_2)$ and its negation when $(r_1,r_2) = (s_2,s_1)$.

\end{proof}

Clearly Lemma~\ref{lem:dmi-det-formula} also applies to $M_{34}$.
And since $\det M_{12}$ and $\det M_{34}$ are independent when conditioned on $S_{1..T}$, the proof of Theorem~\ref{thm:dmi-joint-strategy} follows.

\begin{proof}[Proof of Theorem~\ref{thm:dmi-joint-strategy}]
  Let $s_{1..T}$ be the signal realization for agent $i$, and consider any report $r_{1..T}$.
  We have
  \begin{align*}
    \E[ M_\DMI(r_{1..T},R'_{1..T}) \mid S_{1..T}]
    &
      = \E[\det M_{12} \mid S_1,S_2] \E[\det M_{34} \mid S_3,S_4]~.
  \end{align*}
  Applying Lemma~\ref{lem:dmi-det-formula} to both $M_{12}$ and $M_{34}$, observe that if $s_1=s_2$ or $s_3=s_4$ we then have $\E[ M_\DMI(r_{1..T},R'_{1..T}) \mid S_{1..T}] = 0$ regardless of $r$.
  In particular, $\sigma_{\text{true}}$ maximizes the expected payoff.
  
  Now suppose $s_1\neq s_2$, $s_3\neq s_4$.
  Again we have a zero expectation if $r_1=r_2$ or $r_3=r_4$.
  Otherwise, for each matrix, we have two cases: $r$ matches $s$ or not.
  In particular,
  \begin{align*}
    \E[ M_\DMI(r_{1..T},R'_{1..T}) \mid S_{1..T}] =
    \begin{cases}
      (p(H) - p(L))^2 & (r_1,r_2)=(s_1,s_2), (r_3,r_4)=(s_3,s_4)
      \\
      -(p(H) - p(L))^2 & (r_1,r_2)\neq(s_1,s_2), (r_3,r_4)=(s_3,s_4)
      \\
      -(p(H) - p(L))^2 & (r_1,r_2)=(s_1,s_2), (r_3,r_4)\neq(s_3,s_4)
      \\
      (p(H) - p(L))^2 & (r_1,r_2)=(s_1,s_2), (r_3,r_4)=(s_3,s_4)
    \end{cases}
  \end{align*}
  In summary, the highest expected payoff in this case is $(p(H)-p(L))^2$, which is achieved by $\sigma_{\text{true}}$ and the permutation strategy which swaps $H$ and $L$.
\end{proof}

\section{Omitted Experiments} \label{appendix:experiments}

We include bifurcation figures for the four-component Gaussian mixture case within the DG and OA mechanisms (Figure~\ref{fig:multimodal-extra-dg-oa}), as well as bifurcation figures for RBTS (Figure~\ref{fig:extra-rbts}).
Again, we vary the precision of each component in the Gaussian mixture and study how the stability and number of equilibria change across mechanisms. 
\begin{figure}[ht]
\centering
\begin{subfigure}[t]{\textwidth}
  \centering
  \begin{subfigure}[t]{0.45\textwidth}
      \includegraphics[width=\textwidth]{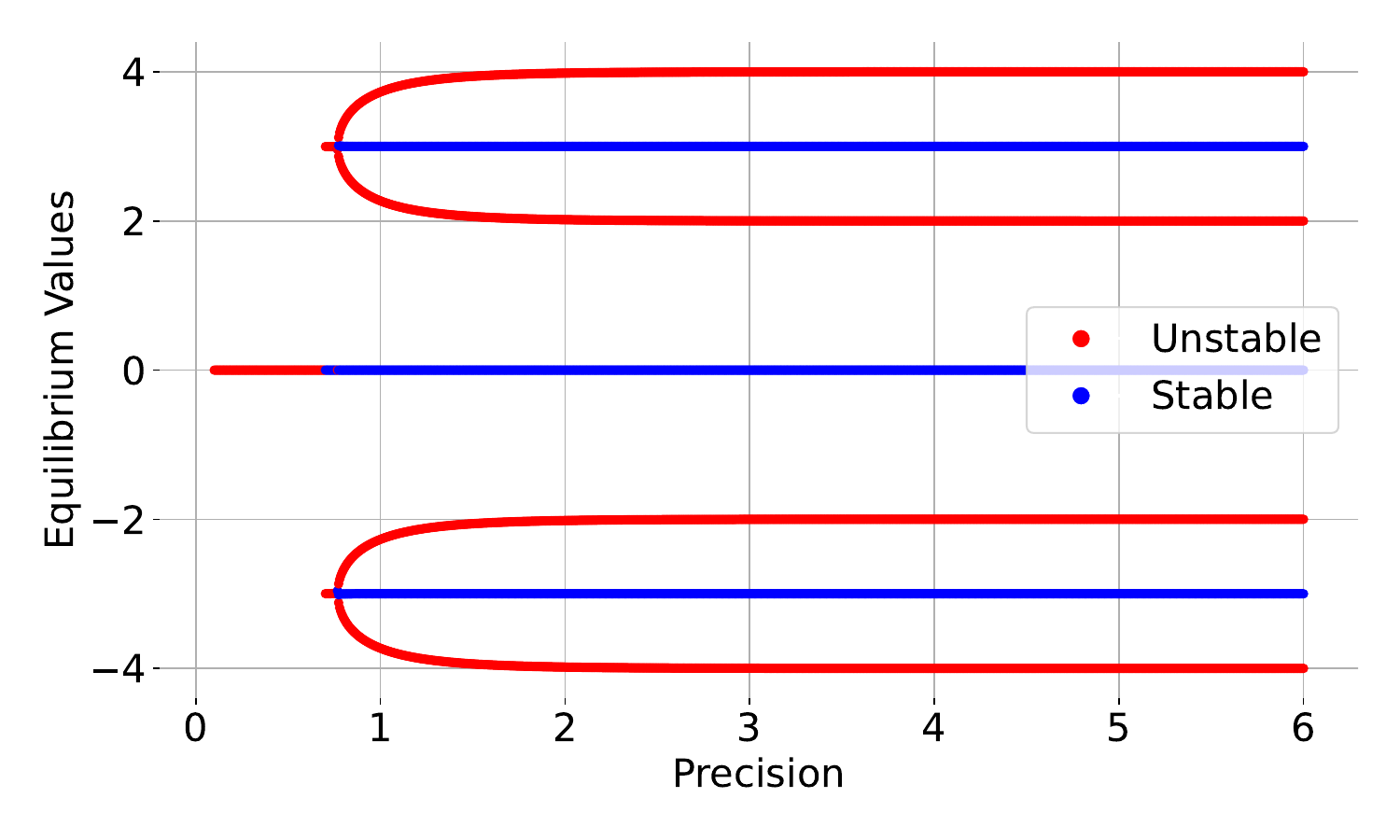}
  \end{subfigure}
  \hspace{0.5cm}
  \begin{subfigure}[t]{0.45\textwidth}
      \includegraphics[width=\textwidth]{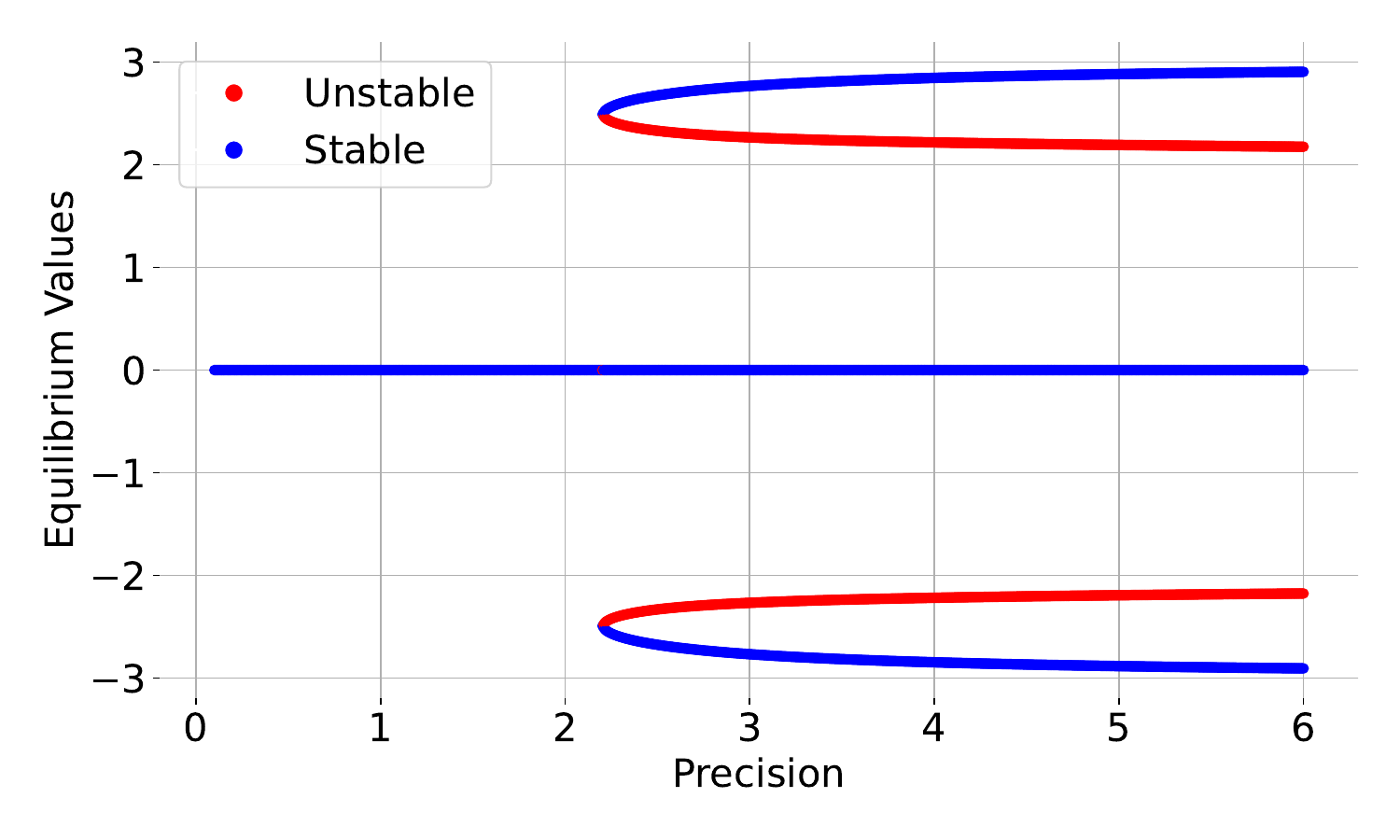}
  \end{subfigure}
  \caption{Bifurcation graphs for OA (left) and DG (right) in the setting of a four-component Gaussian mixture with means ($-4,-2,2,4$).
Following a pattern, we have seven equilibria in OA and five in DG for large enough precision. 
\label{fig:multimodal-extra-dg-oa}
}
\end{subfigure}
\begin{subfigure}{\textwidth}
  \centering
  \begin{subfigure}[t]{0.45\textwidth}
      \includegraphics[width=\textwidth]{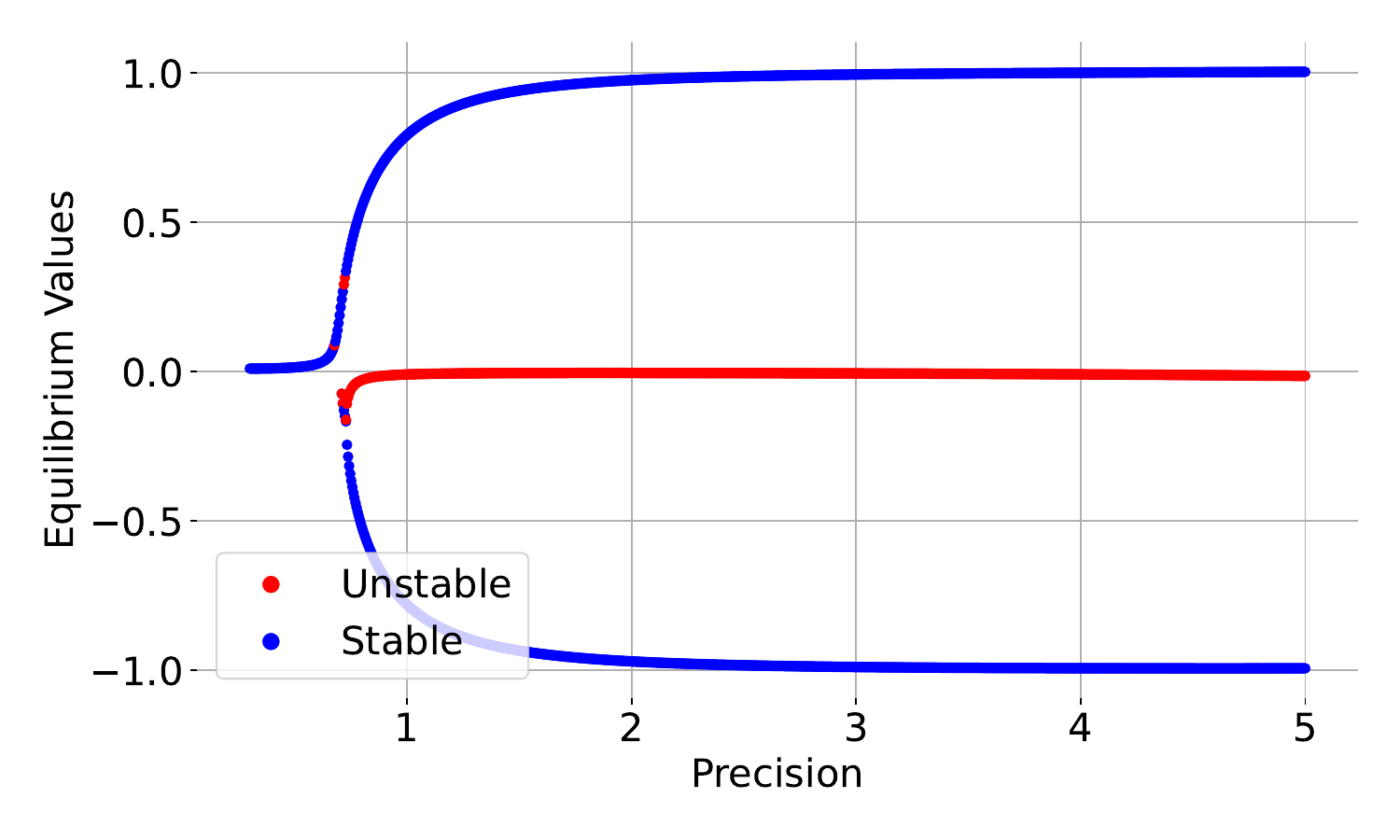}
  \end{subfigure}
  \hspace{0.5cm}
  \begin{subfigure}[t]{0.45\textwidth}
      \includegraphics[width=\textwidth]{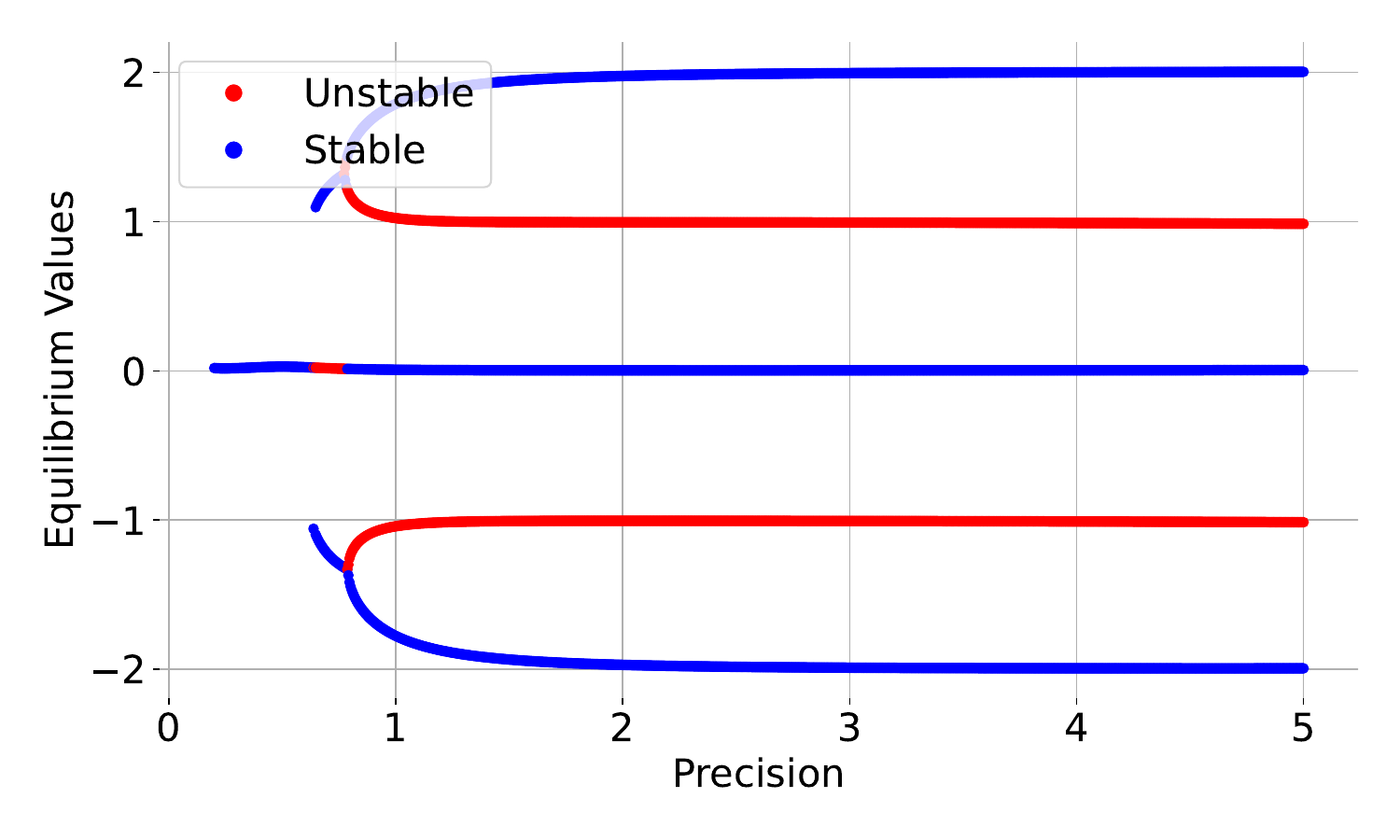}
  \end{subfigure}
  \caption{Bifurcation graphs for RBTS in the setting of a three (left, with means (-2, 0, 2)) and four-component (right, with means (-3,-1,1,3)) Gaussian mixture.
  Note we observe the same stability and number of equilibria as in DG, except that bifurcation occurs sooner.
  \label{fig:extra-rbts}
}
\end{subfigure}
\end{figure}

\section{Omitted Results for Correlated Agreement} \label{app:ca}

\subsection{Equilibrium characterization}

We give the following general characterization of equilibria for Correlated Agreement with $m \geq 3$ reports below, alongside a generalized monotonicity condition. 
The proof follows in the same way as our binary-report results, equating utilities of adjacent reports to each other at their corresponding thresholds. 
Throughoutm we denote $P_k(\tau; x) = \Pr[\tau_{k-1} < X' \leq \tau_k \mid X = x]$ for any $k \in \{1,2,\dots,m\}$.

We begin by noting that by Equation~\ref{eq:dg-payment}, the interim expected utility for playing strategy $\sigma_i$ depends on $\sigma_j$ both directly in the bonus for agreement and indirectly through $\pi_L$:
\begin{equation} \label{eq:dg-util}
	U_i(\sigma_i, \sigma, x) = \E_{x' \sim \beta(x)} \ones[\sigma_i(x) = \sigma(x')]  - \pi_{\sigma_i(x)}.
\end{equation}

Thus the interim utility of playing $\sigma_i(x) = k$ when receiving signal $x$ against symmetric threshold strategy $\tau$ is 
\begin{equation} \label{eq:ca-utilities}
	U_i(\sigma_i, \sigma^{\tau}, x) = \sum_{\ell \in [m]} P_{\ell}(\tau; x) \Delta(\ell, k) - \pi_k.
\end{equation}
It follows that $\tau$ is a symmetric Bayes-Nash threshold equilibrium when
\begin{equation} \label{eq:ca-k-threshold-equilibrium}
	\forall k \in \{1,\dots,m\}, \; \forall x \in (\tau_{k-1}, \tau_k], \; k \in \arg\max_{k \in \{1, \dots, m\}} \sum_{\ell} P_{\ell}(\tau; x) \Delta(\ell, k) - \pi_k.
\end{equation}

\begin{condition}{($k$-sequential monotonicity)} \label{cond:seq-envelope-ca}
	We say a signal distribution satisfies $k$-sequential monotonicity at a threshold $\tau$ if the upper envelope $\max_k \sum_{\ell} P_k(\tau; x) \Delta(\ell, k) - \pi_k$ is composed of the functions $k=1,2,\dots,m$ appearing sequentially in increasing order of $x$.
\end{condition}

\begin{theorem} \label{thm:ca-report-mapping}
	For a fixed $m$ and score matrix $\Delta$, let finite threshold $\tau \in \reals^{m-1}$ be given and $P_k(\tau;x)$ be continuous over $x$ for each $k = \{0,1,\dots,m\}$.
  Let 
  \[
	S(\tau; \hat \tau) \coloneqq
	\begin{bmatrix}
		\sum_{\ell} P_{\ell}(\tau; \hat \tau_1) \left( \Delta(\ell, 1) - \Delta(\ell, 2) \right) + \pi_2 - \pi_1 \\
		\sum_{\ell} P_{\ell}(\tau; \hat \tau_2) \left( \Delta(\ell, 2) - \Delta(\ell, 3) \right) + \pi_3 - \pi_2 \\ 
		\dots \\
    \sum_{\ell} P_{\ell}(\tau; \hat \tau_{m-1}) \left( \Delta(\ell, m-1) - \Delta(\ell, m) \right) + \pi_m - \pi_{m-1},
	\end{bmatrix};
	\]
  in an overload of notation, let $S(\tau) = S(\tau; \tau)$. 
  Then if $\tau$ is an equilibrium under CA, $S(\tau) = \bm{0}$.
	Conversely, if Condition~\ref{cond:seq-envelope-ca} holds at $\tau$, then if $S(\tau) = \bm{0}$, $\tau$ is an equilibrium.
\end{theorem}

\begin{proof}
    We first prove necessity.
    Assume $\tau$ is a threshold equilibrium, and fix $k \in \{1,\dots,m-1\}$.
    Then since $P_k(\tau; x)$ is continuous over $x$ for each $k$, note the function 
    \[f_k(x) = \sum_{\ell} P_{\ell}(\tau; x) \Delta(\ell, k) - \pi_k \]
    is also continuous. 
    It follows $\lim_{x \to \tau_k^-} f_k(x) - f_{k+1}(x) = \lim_{x \to \tau_k^+} f_k(x) - f_{k+1}(x)$.
    But by Equation~\ref{eq:ca-k-threshold-equilibrium}, $\lim_{x \to \tau_k^-} f_k(x) - f_{k+1}(x) \geq 0$ and $\lim_{x \to \tau_k^+} f_k(x) - f_{k+1}(x) \leq 0,$ so we must have $f_k(x) - f_{k+1}(x) = 0.$

  For sufficiency, assume $S(\tau) = \bm{0}.$
	Since each $P_k(\tau; x)$ is continuous, the global maximum $\max_k f_k(x)$ is also continuous and switches from $k$ to $k+1$ when  $f_k(x) = f_{k+1}(x)$. 
	Because $\tau$ solves $S(\tau; \tau) = \bm{0}$, it must coincide with the transition points of the upper envelope. 
	(If it did not, the sequence of maximizing functions would be out of order, violating Condition~\ref{cond:seq-envelope-ca}.)
\end{proof}

We note by the form of utilities indicated by Equation~\ref{eq:ca-utilities} that under Condition~\ref{cond:seq-envelope-ca}, it immediately holds a best response to a threshold strategy $\tau \in \reals^{m-1}$ is also a threshold strategy $\hat \tau \in \reals^{m-1}$.

\subsection{Tridiagonal matrix with three reports} \label{subsec:tridiagonal-m-3}

In this section, we demonstrate theoretically that attempting using CA for $m=3$ reports with the tridiagonal matrix $\Delta_T$ leads to no finite threshold equilibria outside the collapse to a binary threshold of $0$.
Intuitively, note here that $S[2,1] = S[2,2] = S[2,3]$.
That is, if an agent reports the middle report `2' they are scored against all other reports and the expected utility when receiving signal $x$ is $\sum_{\ell \in [m]} P_{\ell}(\tau; x) - \Pr[\tau_{\ell} < X \leq \tau_{\ell+1}] = 0.$
Thus agents are incentivized to report `1' or `3' instead, effectively pushing thresholds toward zero so that the tridiagonal matrix is not useful. 

\begin{lemma} \label{lem:ca-m-3}
  Let $m = 3$ and consider CA with score matrix $\Delta_T$.
  Then under the Gaussian model, an equilibrium $\tau \in \reals^2$ must take the symmetric form $(-x, x)$ for some $x > 0$.
\end{lemma}

\begin{proof}
  Consider any non-symmetric vector $\tau = (\tau_1, \tau_2)$, i.e. $\tau_1 < \tau_2$ and $\tau_1 \neq -\tau_2$. 
  Then for $\tau$ to be an equilibrium, it is necessary by Theorem~\ref{thm:ca-report-mapping} that (1) $\Pr[X' > \tau_2 \mid X = \tau_1] = \Pr[X' > \tau_2]$ and (2) $\Pr[X' \leq \tau_1 \mid X = \tau_2] = \Pr[X' \leq \tau_1]$. 
  In other words, for $\sigma^2 = a^2 + b^2$ we must have 
  \begin{align*}
    \Phi \left( \frac{\tau_2 - \rho \tau_1}{ \sigma \sqrt{1 - \rho^2}}\right) &= \Phi\left(\frac{\tau_2}{\sigma} \right), \\
    \Phi \left( \frac{\tau_1 - \rho \tau_2}{ \sigma \sqrt{1 - \rho^2}}\right) &= \Phi\left(\frac{\tau_1}{\sigma} \right).
  \end{align*}

  Since $\Phi$ is strictly increasing, we can equate the arguments of each side to each other; simplifying with $k = \sqrt{1 - \rho^2}$, we end up with
    \begin{align*}
    \rho \tau_1 &= \tau_2 (1-k), \\
    \rho \tau_2 &= \tau_1 (1-k).
  \end{align*}

  If we multiply these equations together, we have $\rho^2 \tau_1 \tau_2 = \tau_1 \tau_2 (1 - k)^2$.
  If $\tau_1 = 0$, the only solution to this equation is $\tau_2 = 0$, and vice versa if $\tau_2 = 0$.
  Thus WLOG assume $\tau_1, \tau_2 \neq 0$; then we have $\rho^2 = (1 - \sqrt{1 - \rho^2})^2$. 
  Letting $y = \sqrt{1 - \rho^2}$, this simplifies to 
  $y ( y - 1 ) = 0$. 
  The only solutions to this equation are $ y = 0, 1$: if $y = 0$, this implies $\rho = 1$, which contradicts that $b \neq 0$. 
  If $y = 1$, this implies $\rho = 0$, which contradicts that $a \neq 0$. 
  Thus the statement holds.

\end{proof}

\begin{theorem}
  Let $m = 3$ and consider CA with score matrix $\Delta_T$.
  Then under the Gaussian model, there does not exist a stable threshold equilibrium outside the collapsed threshold at $0$.
\end{theorem}

\begin{proof}
  By Lemma~\ref{lem:ca-m-3}, a nontrivial threshold equilibrium $\tau \in \reals^2$ must take the form $\tau = (-x, x)$ for any $x > 0$. 
  Consider any such $\tau$.
  Plugging in $\Delta_T$ to Theorem~\ref{thm:ca-report-mapping}, $\tau$ is an equilibrium if and only if $P_3(\tau; -x) = \pi_3$ and $P_1(\tau; x) = \pi_1$.
  We must then have $f(x) = \Pr[X' > x \mid X = -x] - \Pr[X' > x] = 0$. 
  However, note for any $x$ that 
  \[f(x) =  \Phi\left( - c_1 x \right) - \Phi\left( - c_2 x \right)\]
  for $\sigma = \sqrt{a^2 + b^2}$, $c_1 = \frac{1}{\sigma} \frac{1 + \rho} {\sqrt{1 - \rho^2}}$, and $c_2 = \frac {1} {\sigma}$. 
  Since the Normal CDF is strictly increasing, the statement only holds when $c_1 = c_2$, which occurs when $\rho = 0$. 
  Since $a > 0$, this is impossible; the statement follows. 
\end{proof}

\subsection{Simulations for larger report spaces} \label{subsec:larger-m}

In this section, we include our numerical calculations of equilibria for CA when $m > 4$ to verify that the patterns observed in the main text continue.
In Figure~\ref{fig:larger-m-ca}, we observe that the individual signal precision required for a nontrivial threshold equilibrium to emerge under the diagonal score matrix increases roughly quadratically over $m$.

Meanwhile, we see in Figure~\ref{fig:larger-m-tri-ca} that for any $m > 3$, a nontrivial threshold equilibrium of dimension $m-1$ emerges under the tridiagonal score matrix.
As mentioned in the main text, we note a tradeoff also emerges for larger $m$ in this setting: as the precision approaches $0$, the outer thresholds move toward $\pm \infty$.
This happens because agents with extreme signals have relatively flat posterior distributions. 
Since payments reward positive correlation with neighboring bins, these agents expect higher payoffs from reporting bins closer to the center, where each bin has neighbors on both sides, over reporting edge bins, which have fewer correlated neighbors.
Compared to the points of precision where equilibria collapse under $\Delta_I$, however, these precision values remain extremely low across $m$.
Thus the tridiagonal matrix is still more useful for a larger region of lower precision values.

Conversely, in the limit of large precision values, pairs of adjacent thresholds collapse toward the center, with an effective equilibrium of dimension $m/2 - 1$ if $m$ is even, and $(m-1)/2$ if $m$ is odd.
This phenomenon occurs because agents are incentivized to push their reports \emph{away} from the center as their posteriors concentrate around their signals. 
However, we note in all our figures and simulations that this collapse only occurs for extremely high and thus unreasonable precision values relative to the base signal's noise (e.g., $1/b^2 > 100$).

\begin{figure}[htbp]
    \centering
    \includegraphics[width=0.4\textwidth]{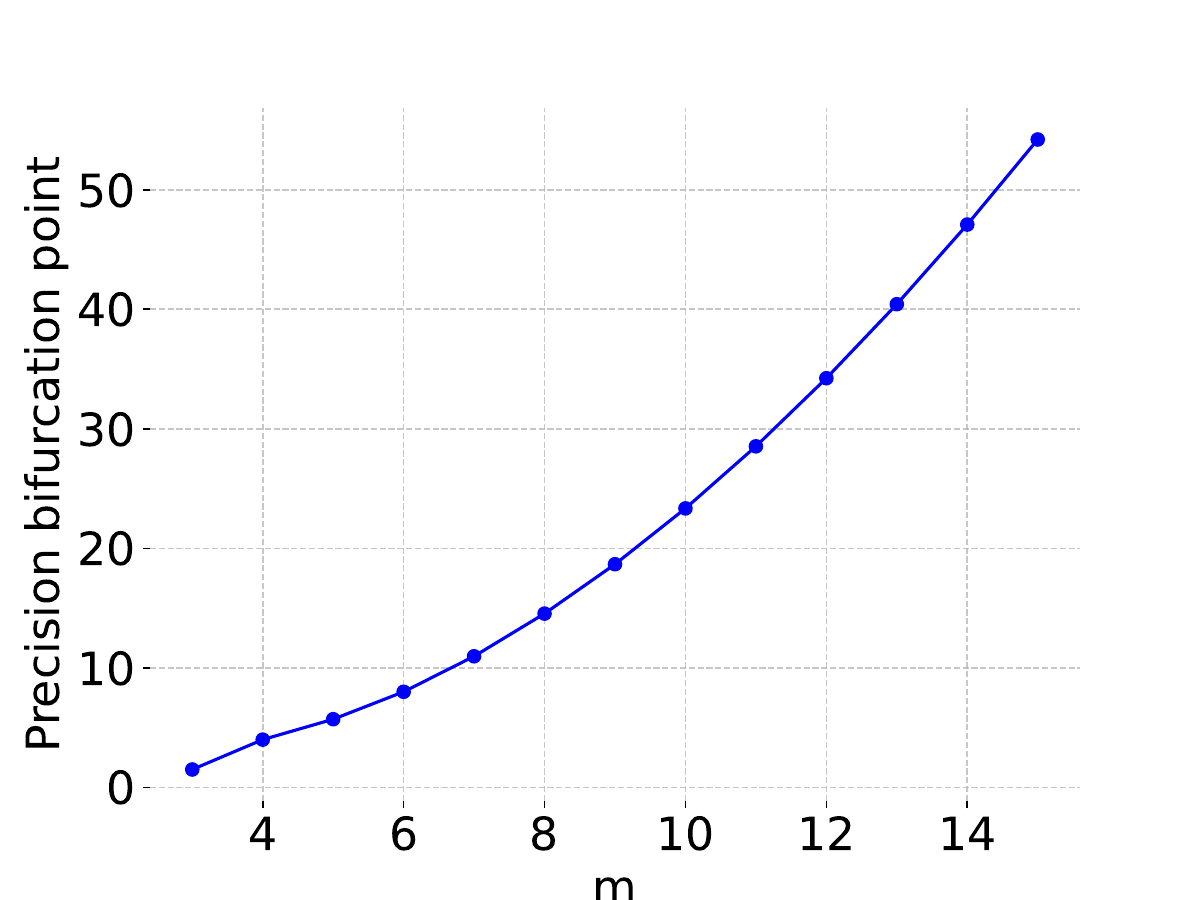}
    \caption{
      A plot of $m$ vs. the bifurcation point of precision $1/b^2$ at which a stable equilibrium $\tau \in \reals^{m-1}$ emerges under CA with diagonal score matrix $\Delta_I$.
      Here we performed a binary search, numerically solving for equilibrium and checking its stability by estimating the Jacobian of the system using finite differences over each coordinate (relative to the ex-ante objective).
      We observe a roughly quadratic increasing relationship.
      If the designer wants more choice in equilibria while using the score matrix $\Delta_I$, then, they must be confident that agents are receiving increasingly low-noise signals.
    }
    \label{fig:larger-m-ca}
\end{figure}

\begin{figure}[htbp]
    \centering
    \begin{subfigure}[t]{0.45\textwidth}
        \includegraphics[width=\textwidth]{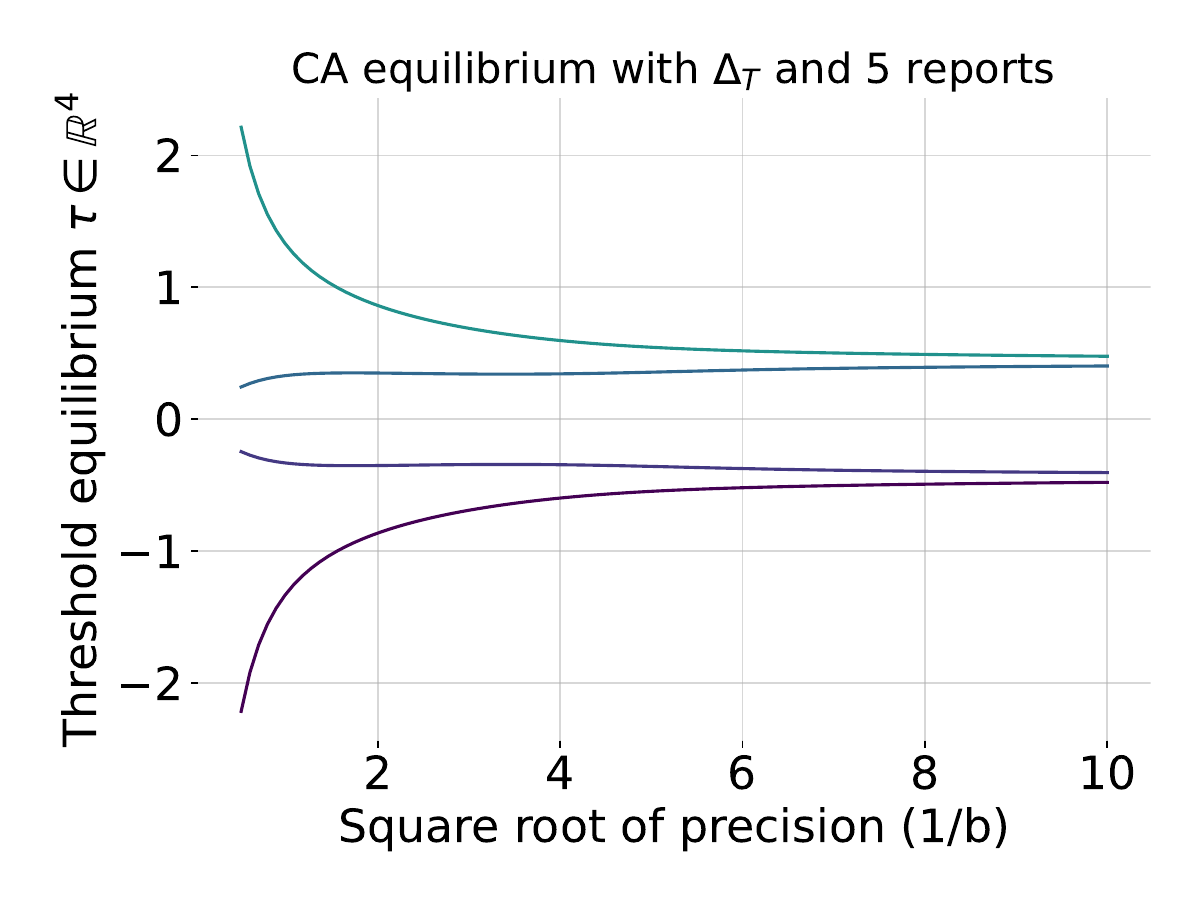}
    \end{subfigure}
    \hfill
    \begin{subfigure}[t]{0.45\textwidth}
        \includegraphics[width=\textwidth]{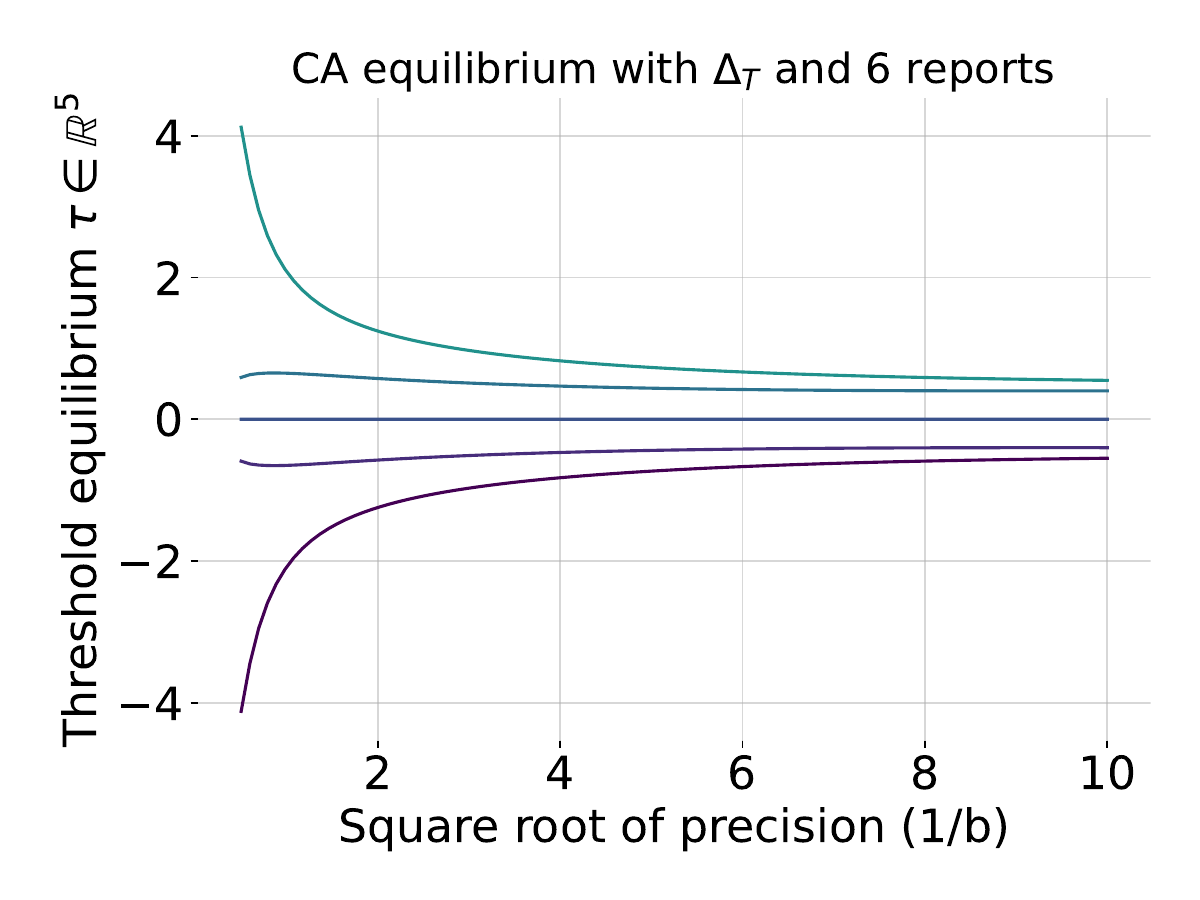}
    \end{subfigure}
    \hfill
    \begin{subfigure}[t]{0.45\textwidth}
        \includegraphics[width=\textwidth]{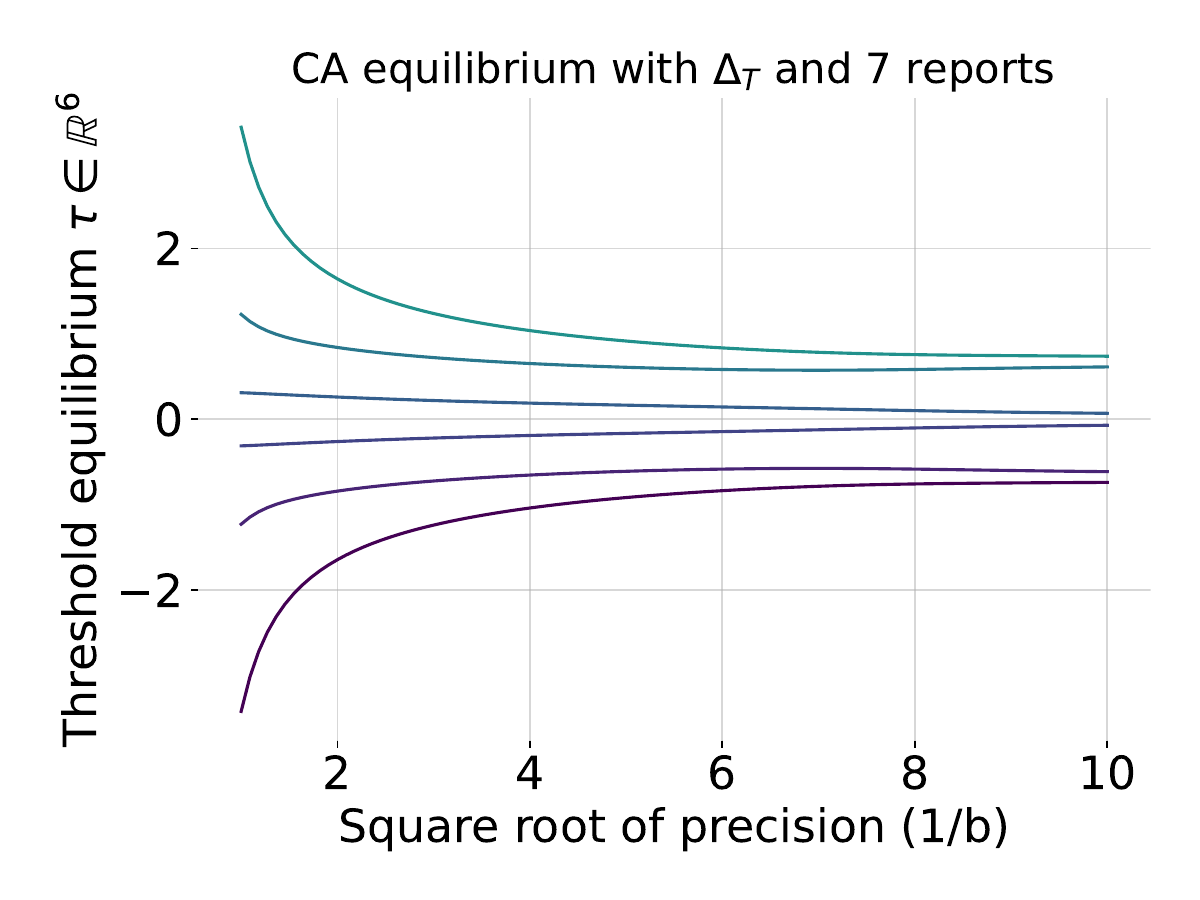}
    \end{subfigure}
    \hfill 
        \begin{subfigure}[t]{0.45\textwidth}
        \includegraphics[width=\textwidth]{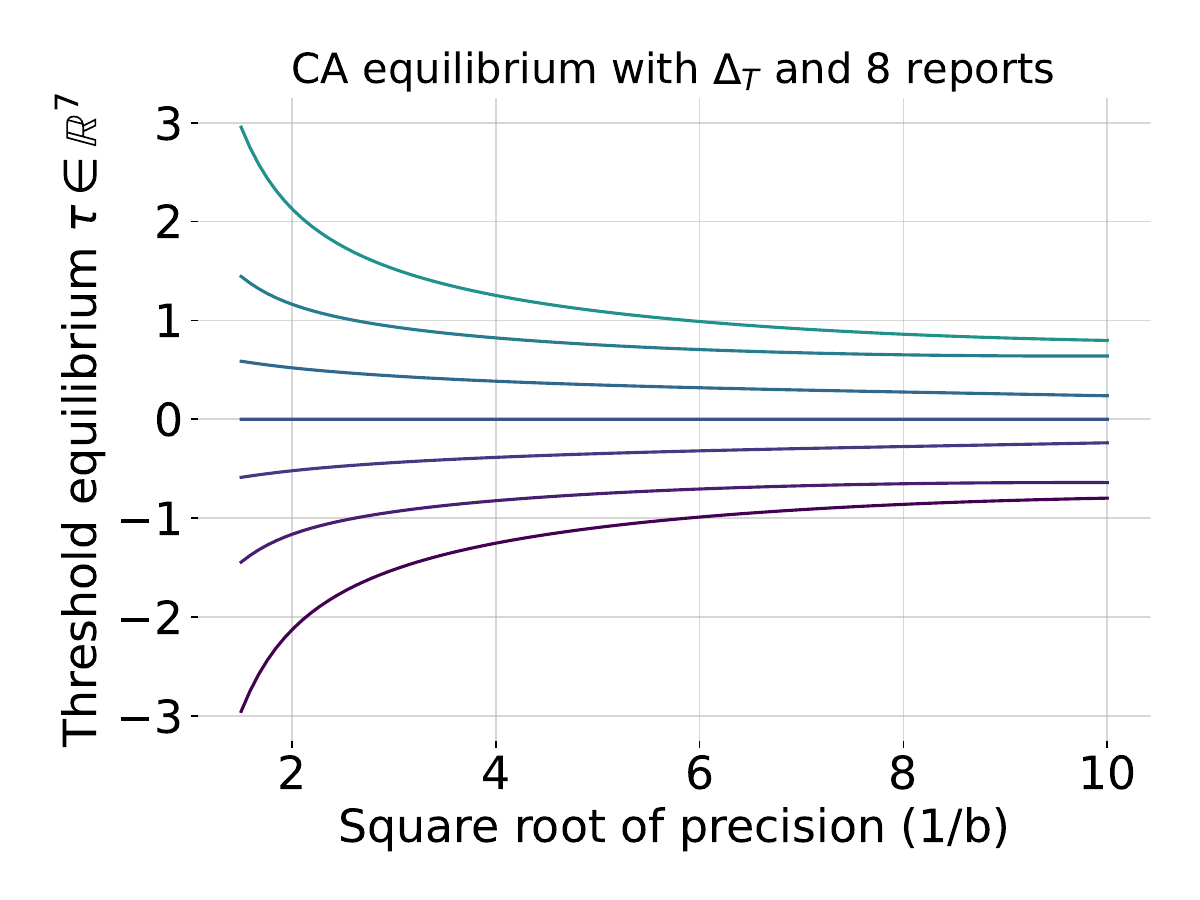}
    \end{subfigure}
    \caption{
      Plots of the unique stable equilibrium of dimension $m-1$ for CA with score matrix $\Delta_T$, for $m=5,6,7,8$, varying the square root of individual signal precision $1/b$.
    }
    \label{fig:larger-m-tri-ca}
\end{figure}

\section{Results for Finite Signal Spaces} \label{app:finite-signal}

\subsection{Output Agreement}
In this section we extend our equilibrium existence and dynamic results under OA to the setting of $m \geq 3$ categorical signals and a binary report space.
Most importantly, our results show that, just as in the real-valued signal case, trivial agreement remains a stable equilibrium; and in many cases, there exists only one other unstable equilibrium corresponding to a threshold between two signal bins.

\paragraph{Equilibrium characterization.}
We consider the setting of a signal space $\CS$ of size $m$. 
Throughout this section, we assume there is some inherent \emph{ordering} to the meaning of these signals, a fact which is already baked into the model of real-valued signals.

\begin{condition} \label{cond:fosc}
    We assume that the set of signals $\CS$ is ordered by strict first-order stochastic dominance. 
    That is, we can index the signals under a strict total order as $1 < 2 < \dots < m$ such that for all $x \in \CS$,
    \begin{equation*}
        \sum_{x' \leq x} \Pr[X' = x' \mid X = 1] > \sum_{x' \leq x} \Pr[X' = x' \mid X = 2] > \dots > \sum_{x' \leq x} \Pr[X' = x' \mid X = m].
    \end{equation*}
    In other words, for any $x, s \in \CS$, let $P(x; s) = \sum_{x' \leq x} \Pr[X' = x' \mid X = s]$. 
    Then for all $x \in \CS$, $P(x; 1) > P(x; 2) \dots > P(x; m)$.
\end{condition}

We observe that we can map our monotonicity conditions for real-valued signals to Condition~\ref{cond:fosc} directly. 

\begin{observation} \label{obs:real-valued-to-discrete}
  Take a joint distribution over real-valued signals $x$ and discretizes it into $m$ bins separated by thresholds $\tau_0, \tau_1, \dots, \tau_m$ (where $\tau_0 = -\infty$, $\tau_m = \infty$).
  Let the function $\Pr[X' \leq \tau_k \mid X = x]$ be monotone decreasing for each $k$ (where both $\tau_k$ and $x$ are real-valued).
  Then in the discrete setting where agents only see which bin $s$ their signal is in, $P(k; s) = \Pr[X' \leq \tau_k \mid \tau_{s-1} < x \leq \tau_s] = \E_{x \mid \tau_{s-1} < x \leq \tau_s} \left[ \Pr\left[X' \leq \tau_k \mid X = x\right] \right]$.
  It immediately holds by monotonicity that Condition~\ref{cond:fosc} is satisfied.
\end{observation}

Because the signal space is finite, ``threshold'' equilibria now correspond to thresholds between two adjacent report bins $k$ and $k+1$.
In cases where e.g. the mode of an underlying real-valued distribution does not lie exactly on such a threshold, it makes sense to study symmetric \emph{mixed} Nash equilibria.
Here we restrict our attention to equilibria that mix over exactly two adjacent thresholds, a reasonable restriction given the sequential ordering of signals under Condition~\ref{cond:fosc}.
For example, if we take the Gaussian model with three signals, based on our real-valued results we expect agents to mix over a threshold between signals $1$ and $2$, and signals $2$ and $3$.

The mixed strategy of each agent is thus now a function $\sigma_i: \CS \to \Delta\left(\twobin\right)$; WLOG, we instead define $\sigma_i: \CS \to [0, 1]$, so that $\sigma_i(x)$ signifies the probability put on report $L$ given signal $x$.
Then we define pure and mixed threshold strategies as follows: 

\begin{definition} \label{def:pure-strategy}
    We define a \emph{pure threshold strategy} above signal $x$ as a strategy $\sigma$ such that $\sigma(x') = 1$ for all $x' \leq x,$ and $\sigma(x') = 0$ otherwise.
\end{definition}

\begin{definition} \label{def:mixed-strategy}
    We define a \emph{mixed threshold strategy} $\sigma$ at signal $x \in \CS$ as a mixed strategy where $0 < \sigma_x(x) < 1$; for any $x' < x,$ $\sigma_x(x') = 1$; and for any $x' > x$, $\sigma_x(x') = 0.$
\end{definition}

Define $\beta(x) = \Pr[X' = \cdot \mid X = x]$ as the posterior distribution over signals conditional on receiving signal $x \in \CS$.
Then agent $i$'s interim expected utility after seeing signal $x$ under OA against mixed strategy $\sigma$ is
\begin{align*}
    U_i(\sigma_i, \sigma, x) &= \E_{x' \sim \beta(x)} \E_{\substack{r_i \sim \sigma_i(x) \\ r_j \sim \sigma(x')}} [ \ones[r_i = r_j]] \\
    &= \sum_{x' \in \CS} \Pr[X' = x' \mid X = x] \left( \sigma_i(x) \sigma(x') + (1 - \sigma_i(x))(1 - \sigma(x')) \right).
\end{align*}

Note in particular that 
\begin{align*}
  U_i(L, \sigma, x) &= \sum_{x' \in \CS} \Pr[X' = x' \mid X = x] \sigma(x') \\
  U_i(H, \sigma, x) &= \sum_{x' \in \CS} \Pr[X' = x' \mid X = x] (1 - \sigma(x')). 
\end{align*}

Now, consider a symmetric threshold equilibrium $\sigma$ at $x$, so that $0 < \sigma(x) < 1,$ i.e. both $L$ and $H$ are in the support of the mixed strategy. 
Then by the principles of indifference, we must have
\begin{align}
    U_i(L, \sigma, x) &= U_i(H, \sigma, x) \nonumber \\
    \sum_{x' \in \CS} \Pr[X' = x' \mid X = x] \sigma(x') &= \sum_{x' \in \CS} \Pr[X' = x' \mid X = x] (1 - \sigma(x')) \nonumber \\
   P(x - 1; x) + \Pr[X' = x \mid X = x] \sigma(x) &= 1/2. \label{eq:mixed-br}
\end{align}

Meanwhile, for $x' < x$, $\sigma(x) = 1$ (i.e. playing $L$ as a function of signal $x$) must be a best response.
In other words, $U_i(L, \sigma, x') > U_i(H, \sigma, x')$, or
\begin{equation} \label{eq:L-br}
    P(x - 1; x') + \Pr[X = x \mid X' = x'] \sigma(x) > 1/2,
\end{equation}
Similarly, for $x' > x$, we must have $U_i(L, \sigma, x') < U_i(H, \sigma, x')$, or
\begin{equation} \label{eq:H-br}
    P(x - 1; x') + \Pr[X = x \mid X' = x'] \sigma(x) < 1/2.
\end{equation}

Under a pure threshold equilibrium above $x$, Equations~\ref{eq:L-br} and~\ref{eq:H-br} hold for $x' \leq x$ and $x' > x$ respectively.
Using this characterization, we can immediately show that uninformative equilibria still exist.

\begin{proposition}
    A symmetric equilibrium exists where $\sigma(x) = 1$ for all $x \in \CS$ (all agents report $L$), and where $\sigma(x) = 0$ for all $x \in \CS$ (all agents report $H$).
\end{proposition}

\begin{proof}
    Assume $\sigma(x) = 1$ for all $x \in \CS$.
    Then $U_i(L, \sigma, x) = \sum_{x' \in \CS} \Pr[X' = x' \mid X = x] = 1 > 0 = U_i(H, \sigma_j, x)$, so that $\sigma$ is an equilibrium. 
    The same argument holds for when $\sigma(x) = 0$ for all $x \in \CS$. 
\end{proof}

For mathematical convenience, we can also define a mixed threshold by extending the discrete set of signals $\{1,2,\dots,m\}$ to a continuous interval.
Specifially, let $\tau \in [0, m]$ indicate a mixed threshold strategy at $x = \lceil \tau \rceil$, i.e. $\tau - \lfloor \tau \rfloor = \sigma(x)$ is the probability of reporting $L$ when receiving signal $x$. 
If $\tau \in \{0,1,2,\dots,m\}$, then $\tau$ represents a pure threshold strategy above $x = \tau$.
In particular, $\tau = 0$ represents the uninformative equilibrium of always reporting $H$, and similarly $\tau = m$ represents the uninformative equilibrium of always reporting $L$. 
We denote such a symmetric mixed (or pure) strategy as $\sigma_{\tau}$. 
For convenience, we let $\Pr[X' \leq 0 \mid X = x] = 0$ for any $x \in \CS$.
Now, we define $\PC(\tau; x): \{0,1,\dots,m\} \to \reals$ for a fixed threshold $\tau \in [0, m]$ as
\begin{equation}
    \PC(\tau; x) =
    \begin{cases}
    (\lceil \tau \rceil  - \tau) \Pr[X' \leq \lfloor \tau \rfloor \mid X = x] + (\tau - \lfloor \tau \rfloor ) \Pr[X' \leq \lceil \tau \rceil \mid X = x] & \tau \notin \{0, 1, \dots, m\}, \\
    \Pr[X' \leq \tau \mid X = x], \tau \in \{0, 1, 2, \dots, m\},
    \end{cases}
\end{equation}

It immediately follows that $U_i(L, \sigma_{\tau}, x) = \PC(\tau; x)$.
Moreover, $\PC(\tau; x)$ is continuous over $\tau$ and monotone decreasing over $x$ by Condition~\ref{cond:fosc}.
Using this continuous extension, then, we can characterize mixed equilibria using the same tools as our previous real-valued analysis. 
To do so, we define the function $G(\tau) = \PC(\tau; \lceil \tau \rceil).$

\begin{proposition} \label{prop:cont-existence-oa}
    Let Condition~\ref{cond:fosc} hold, and define the function $G(\tau) = \PC(\tau; \lceil \tau \rceil).$
    Then there is a mixed equilibrium at $\tau \in (0,m)$ if and only if $G(\tau) = 1/2$.
\end{proposition}

\begin{proof}
    For necessity, we note that any mixed equilibrium $\tau$ must satisfy $U_i(L, \sigma_{\tau}, \lceil \tau \rceil) = G(\tau) = 1/2$ by indifference. 
    For sufficiency, assume $G(\tau) = 1/2,$ so that $U_i(L, \sigma_{\tau}, \lceil \tau \rceil) = 1/2$.
    Then note for any signal $x < \lceil \tau \rceil$, by Condition~\ref{cond:fosc} $U_i(L, \sigma_{\tau}, x) = \PC(\tau; x) > \PC(\tau; \lceil \tau \rceil) = 1/2$.
    Meanwhile, for any signal $x > \lceil \tau \rceil$, $U_i(L, \sigma_{\tau}, x) = \PC(\tau; x) < \PC(\tau; \lceil \tau \rceil ) = 1/2$.
    By definition, then, $\tau$ is a mixed symmetric equilibrium.
\end{proof}

Characterization of pure equilibria follows in a similar way:

\begin{proposition} \label{prop:oa-existence-pure}
    Let Condition~\ref{cond:fosc} hold.
    For OA, there is a pure threshold equilibrium above $\tau = x$ for $x \in \{1,\dots,m-1\}$ if and only if $P(x; x) > 1/2$ and $P(x; x+1) < 1/2$.
\end{proposition}

\begin{proof}
    For sufficiency, assume $P(x; x) > 1/2$ and $P(x; x + 1) < 1/2.$ 
    Then by Condition~\ref{cond:fosc}, for any $x' \leq x,$ $P(x;x') \geq P(x;x) > 1/2$, and for any $x' > x,$ $P(x;x') \leq P(x;x-1) < 1/2$.
    By Definition~\ref{def:pure-strategy}, then, $\tau$ is a pure strategy equilibrium.
    For necessity, assume $x$ is a threshold equilibrium.
    Then by definition, for any $x' \leq x,$ $P(x;x') > 1/2$, and for any $x' > x$ (including $x+1$), $P(x;x') < 1/2$. 
\end{proof}

\paragraph{Dynamics.}
We can now study the same form of best response dynamics to refine our set of equilibria. 
The designer sets some initial $\tau(0) \in [0,m]$, and at each continuous step a small fraction of agents best respond with a threshold $\hat \tau \in \text{BR}(\tau)$, where $\text{BR}(\tau)$ is now a \emph{set} of best responses.
(Note if $x$ and $x+1$ are both best responses, so is every $\tau \in [x, x+1]$.)
We thus consider the dynamics of the differential inclusion system $\dot \tau = \tau - \text{BR}(\tau)$.
To prove statements of stability, we consider any threshold sufficiently close to an equilibrium at $\tau^*$ and show any best response in $\text{BR}(\tau)$ follows a direction toward (or against) $\tau^*$.

\begin{proposition}
    Let Condition~\ref{cond:fosc} hold.
    The best response set $\text{BR}(\tau)$ to a threshold strategy $\tau$ corresponds to either a set of threshold strategies $\hat \tau \in [x, x+1]$, or a pure threshold strategy above $x$, for some $x \in \{0,1,\dots,m\}.$
\end{proposition}

\begin{proof}
    Assume other agents are playing according to a (potentially mixed) threshold strategy $\tau$, and consider the utility of agent $i$.
    Then $\PC(\tau; x)$ is strictly monotone decreasing over the order of $x$ under Condition~\ref{cond:fosc}.
    It follows there is a unique value $x \in \{0,1,\dots,m\}$ such that for $x' \leq x$, $U_i(L, \sigma_{\tau}, x') = \PC(\tau; x') \geq 1/2$, and for $x' > x$, $\PC(\tau; x') < 1/2$;
    if $U_i(L, \sigma_{\tau}, x) > 1/2$, $x$ is a unique pure-strategy best response, while if $U_i(L, \sigma_{\tau}, x) = 1/2$, any $\tau \in [x-1, x]$ is a best response threshold strategy.
\end{proof}

\begin{theorem} \label{thm:oa-mixed-unstable}
    Let Condition~\ref{cond:fosc} hold, and let $\tau^*$ be a mixed equilibrium under OA (i.e. $\tau \in (0,m) \backslash \{1, 2, \dots, m\}$).
    Then $\tau^*$ is unstable.
\end{theorem}

\begin{proof}
    Take current threshold $\tau = \tau(t)$ in a sufficiently small neighborhood around $\tau^*$ such that $\tau$ is not an integer (i.e., $\tau$ is a mixed strategy), and consider an agent $i$ receiving signal $x = \lceil \tau \rceil$.
    There are three cases to consider:
    \begin{enumerate}
        \item $G(\tau) = \frac{1}{2}.$
        Then by Proposition~\ref{prop:cont-existence-oa}, $\tau$ is an equilibrium. 
        \item $G(\tau) > \frac{1}{2}.$
        Then $U_i(L, \sigma_{\tau}, \lceil \tau \rceil) > \frac{1}{2}$, so it follows the best pure response of agent $i$ is to report $L$.
        By Condition~\ref{cond:fosc}, it follows that for any $\hat \tau \in \text{BR}(\tau)$, $\hat \tau > \tau$.
        Thus $\dot \tau = \hat \tau - \tau > 0.$
        \item $G(\tau) < \frac{1}{2},$ so that $U_i(L, \sigma_{\tau}, \lceil \tau \rceil) < \frac{1}{2}$.
        By Condition~\ref{cond:fosc}, it follows that for any $\hat \tau \in \text{BR}(\tau)$, $\hat \tau < \tau$.
        Thus $\dot \tau = \hat \tau - \tau < 0.$
    \end{enumerate}
    It follows that if $G(\tau) = \PC(\tau; \lceil \tau \rceil)$ is strictly increasing at equilibrium $\tau^*$, then $\tau^*$ is \emph{unstable}. 
    But note that for $\tau \notin \{0,1,\dots,m\},$
    $G'(\tau) = P(\lceil \tau \rceil; \lceil \tau \rceil) - P(\lfloor \tau \rfloor; \lceil \tau \rceil) > 0$, meaning $\PC(\tau; \lceil \tau \rceil)$ is strictly increasing. 
\end{proof}

\begin{theorem}
    Let Condition~\ref{cond:fosc} hold, and let $\tau^*$ be a pure symmetric equilibrium under OA. 
    Then $\tau^*$ is stable.
\end{theorem}

\begin{proof}
    Let $\tau^*$ be a pure equilibrium, so that by Proposition~\ref{prop:oa-existence-pure}, $P(\tau^*; \tau^*) > 1/2$ and $P(\tau^*; \tau^* + 1) < 1/2$.
    Now consider agents best responding to some $\tau < \tau^*$ sufficiently close to $\tau^*$, so that $\lceil \tau \rceil = \tau^*$. 
    Then 
    \begin{align*}
        U_i(L, \sigma_{\tau}, \tau^*) &= \PC(\tau; \tau^*) \\
        &= (\tau^*  - \tau) P(\lfloor \tau \rfloor ; \tau^*) + (\tau - \lfloor \tau \rfloor ) P(\tau^*; \tau^*) \\
        &\geq (\tau - \lfloor \tau \rfloor) P(\tau^*; \tau^*),
    \end{align*}

    where the inequality holds since $ P(\lfloor \tau \rfloor ; \tau^*) \geq 0$.
    
    Now, note for $\tau > \frac{1}{2P(\tau^*; \tau^*)} + \lfloor \tau \rfloor$, $U_i(L, \sigma_{\tau}, \tau^*) > 1/2$.
    Since $P(\tau^*; \tau^*) > 1/2$, we have $\frac{1}{2P(\tau^*; \tau^*)} < 1$, so there exists some nontrivial $\tau' < \tau^*$ such that for all $\tau \in [\tau', \tau^*)$, $ U_i(L, \sigma_{\tau}, \tau^*) > \frac{1}{2}$.
    This means that $L$ is a better response when one is best responding to $\tau$, so in the dynamic update of the threshold, $\hat \tau > \tau$ and $\dot \tau = \hat \tau - \tau > 0$.
    A similar argument shows that for $\tau > \tau^*$ sufficiently close to $\tau^*,$ $ U_i(L, \sigma_{\tau}, \tau^* + 1) < \frac{1}{2}$, so that $\dot \tau = \hat \tau - \tau < 0$.
    Thus $\tau^*$ is stable.
    
\end{proof}
Using similar logic, we can show that all uninformative equilibria are stable.

\begin{theorem}
  Let Condition~\ref{cond:fosc} hold.
  Then uninformative equilibria are stable. 
\end{theorem}

\begin{proof}
  Let $\tau^* = 0$, and take some $\tau = \epsilon < 1$.
  Consider an agent who receives signal $1 = \lceil \tau \rceil$ best responding to $\tau$. 
    Then 
    \begin{align*}
        U_i(L, \sigma_{\tau}, 1) &= \tau P(1; 1).
    \end{align*}
    It follows for sufficiently small $\epsilon$ that $U_i(L, \sigma_{\tau}, 1) < 1/2$, so that agents receiving signal $1$ (and by Condition~\ref{cond:fosc}, all higher signals) will always report $H$, and therefore follow a threshold of $\tau = 0$.
    Thus $\dot \tau = \hat \tau - \tau < 0$, and the uninformative equilibrium at $\tau = 0$ is stable. 
    Similar logic follows if $\tau = m$.
\end{proof}

It immediately follows by topology of the dynamics that between any two adjacent pure equilibria, there must be a mixed equilibrium.

\begin{corollary}
  Let Condition~\ref{cond:fosc} hold.
  Then between any two adjacent pure threshold equilibria, there exists a mixed threshold equilibrium.
\end{corollary}

\paragraph{Gaussian model.}

We extend the real-valued Gaussian model to a \emph{discretized Gaussian model} in the following way: for parameters $a,b$, we map the $m$ quartiles of the marginal distribution to $m$ equal-probability discrete bins separated by $m-1$ thresholds, with each bin corresponding to a discrete signal.
This induces a discrete joint distribution.
As the main text proves, we know for any real-valued threshold $\tau$ that $\Pr[X' \leq \tau \mid X = x]$ is monotone decreasing.
By Observation~\ref{obs:real-valued-to-discrete}, then, Condition~\ref{cond:fosc} holds.
Thus we can apply the previous section's results to understand what mixed and pure equilibria exist.

In general, based on our results in the previous section, the following pattern emerges: if there is a (stable) pure equilibrium at some signal $x \in \CS$, it is surrounded by (unstable) mixed equilibria.
We find that unless individual agent noise is sufficiently low, if $m$ is odd there is a \emph{single} unstable, mixed equilibrium at $\tau = m/2$, so that dynamics inexorably lead to uninformative consensus.
If $m$ is even, there is a single stable equilibrium at $\tau = m/2$, with its basin of attraction shrinking as precision decreases.
As individual signal precision grows, we do find that more stable, pure equilibria emerge beyond $0$, but only at larger precision values (see Figure~\ref{fig:bifurcation-finite-oa} for visualization and discussion).
Regardless, uninformative equilibria remain stable, so that agents will inevitably drift toward blind agreement unless the designer is able to set initial thresholds close enough to the stable, pure equilibria if they exist.

\begin{figure}[htbp]
    \centering
    \begin{subfigure}[t]{0.32\textwidth}
        \includegraphics[width=\textwidth]{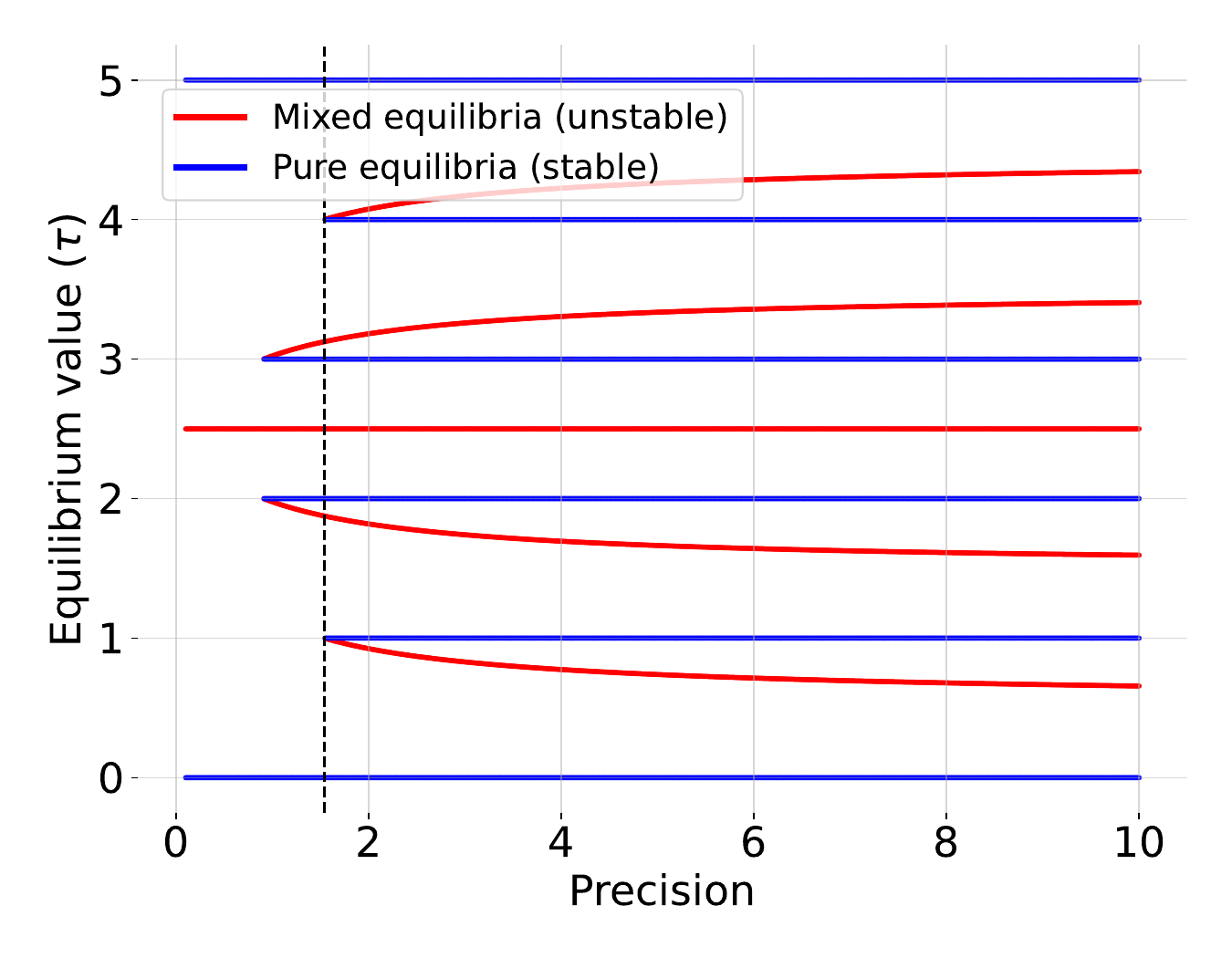}
    \end{subfigure}
    \hfill
    \begin{subfigure}[t]{0.32\textwidth}
        \includegraphics[width=\textwidth]{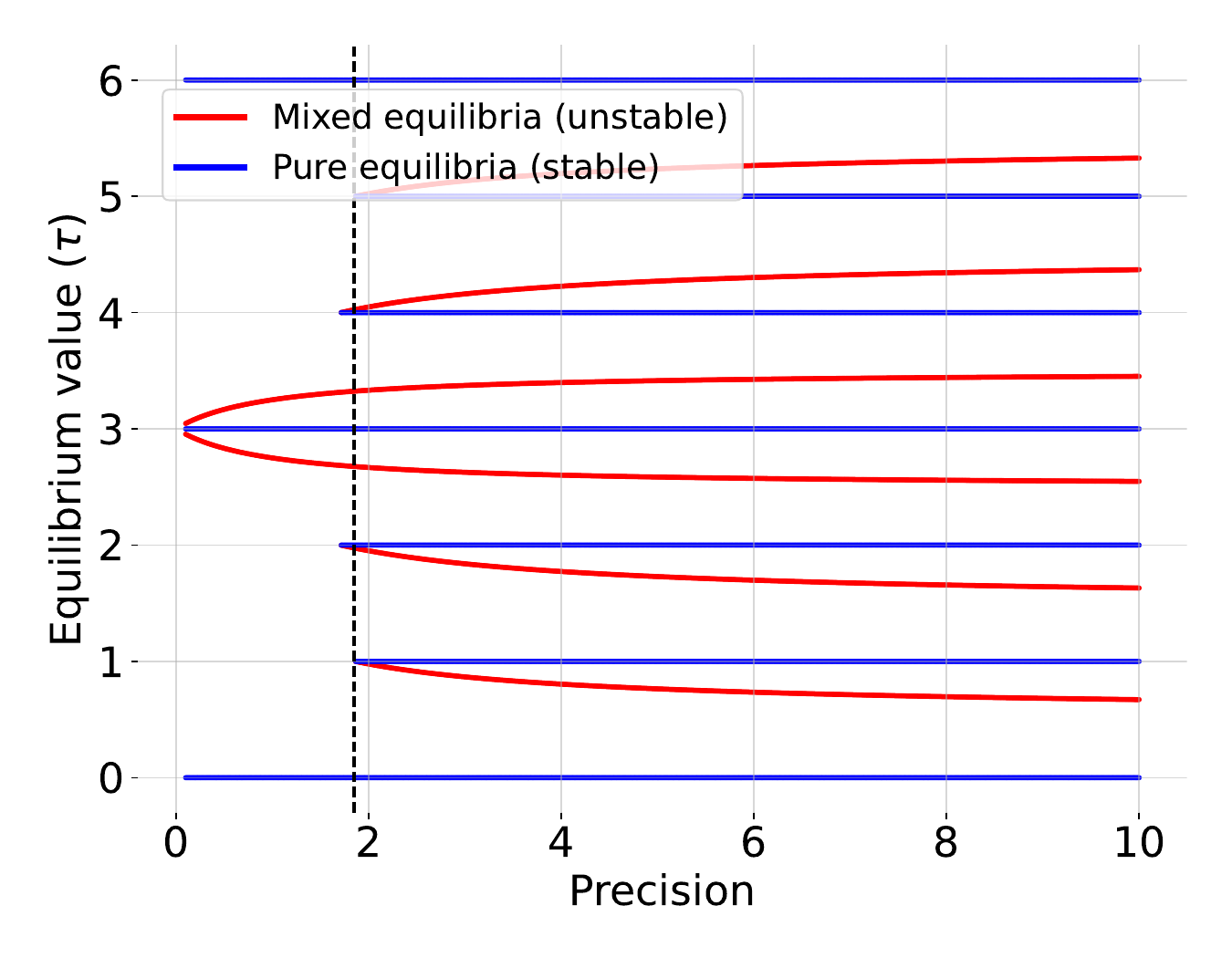}
    \end{subfigure}
    \hfill
    \begin{subfigure}[t]{0.32\textwidth}
        \includegraphics[width=\textwidth]{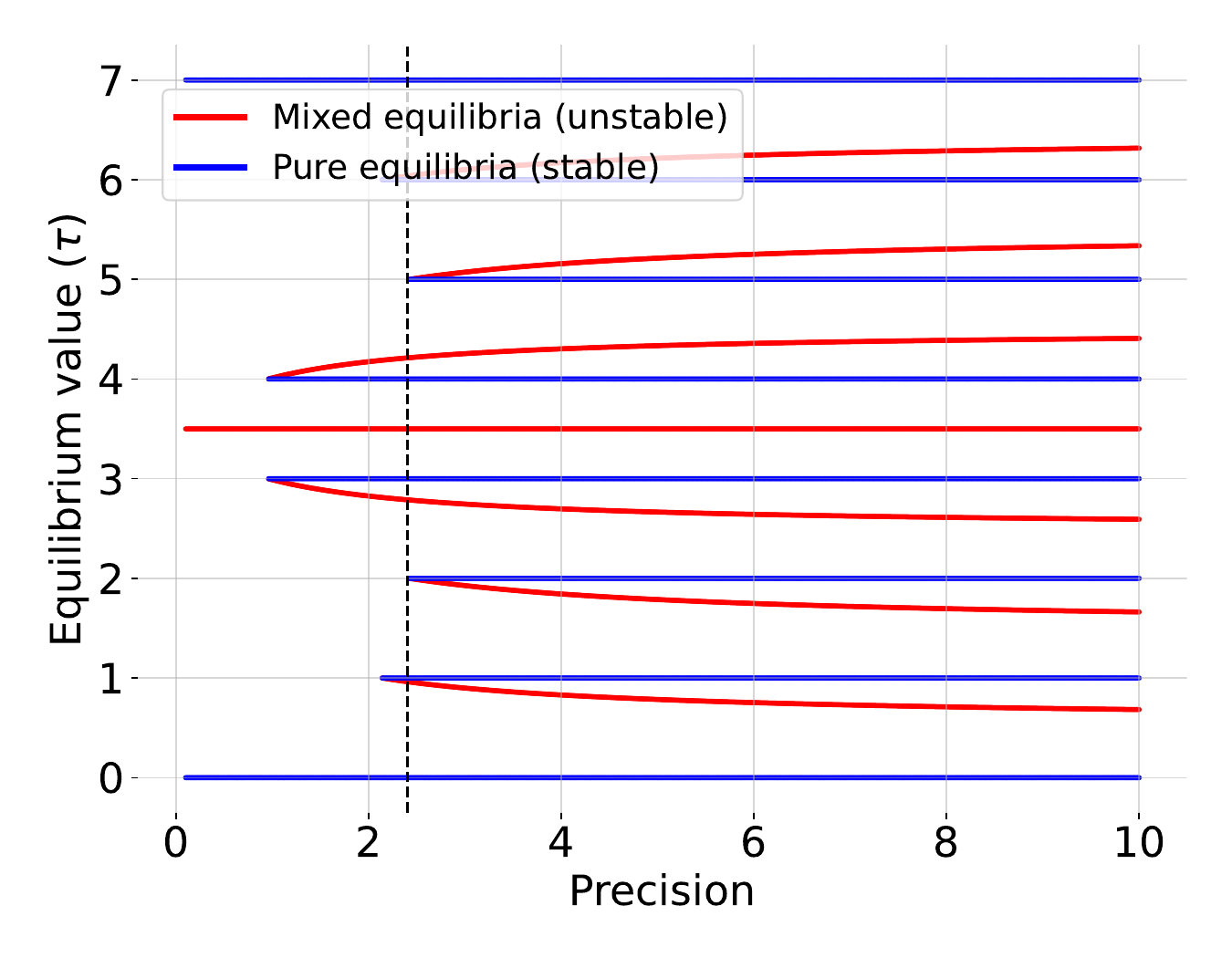}
    \end{subfigure}

    \caption{Bifurcation diagram of equilibria under OA for (left to right) $m=5,6$ and $7$ finite ordered signals. 
    As our theory shows, the red lines indicate unstable mixed equilibria, while the blue indicate stable pure equilibria. 
    We always have stable equilibria at $\tau=0$ and $\tau=m$, representing blind agreement (always reporting $H$ or $L$).
    The dotted black lines indicate the value of individual signal precision $(1/b^2)$ at which there are $m-1$ possible (nontrivial) stable equilibria.
    }
    \label{fig:bifurcation-finite-oa}
\end{figure}

\subsection{Dasgupta-Ghosh}

We now study the same model of threshold equilibria and dynamics, but under the DG mechanism. 

\paragraph{Equilibrium characterization.}
We define DG in the same way as in the main text, but note here that the prior penalty term $\pi_{r_i}$ for report $r_i$ is now $\pi_{r_i} = \Pr[X = r_i]$.
Agent $i$'s interim expected utilities for reporting $L$ or $H$ after seeing signal $x$ under DG against symmetric strategy $\sigma$ are then
\begin{align*}
    U_i(L, \sigma, x) &= \sum_{x' \in \CS } (\Pr[X' = x' \mid X = x] - \Pr[X = x']) \sigma(x'), \\
    U_i(H, \sigma, x) &= \sum_{x' \in \CS} (\Pr[X' = x' \mid X = x] - \Pr[X = x']) (1 - \sigma(x')).
\end{align*}

Consider a symmetric threshold equilibrium $\sigma$ at $x$ so that $0 < \sigma(x) < 1,$ i.e. both $L$ and $H$ are in the support of the symmetric mixed strategy $\sigma$.
Then by the principles of indifference, we must have
\begin{align}
    U_i(L, \sigma, x) &= U_i(H, \sigma, x) \nonumber \\
    2 \left(\sum_{x' \in \CS} \Pr[X' = x' \mid X = x] - \Pr[X = x'] \right) \sigma(x') &= 0 \nonumber \\
    P(x-1; x) + \Pr[X' = x \mid X = x] \sigma(x) &= \Pr[X \leq x-1] + \Pr[X = x] \sigma(x). \label{eq:mixed-br-dg}
\end{align}

Meanwhile, for $x' < x$, $\sigma(x) = 1$ and $U_i(L, \sigma, x') > U_i(H, \sigma, x')$ or
\begin{equation} \label{eq:L-br-dg}
    P(x-1; x')  + \Pr[X = x \mid X' = x']\sigma(x') > \Pr[X \leq x - 1]  + \Pr[X = x]\sigma(x');
\end{equation}
and  for $x' > x$, $\sigma(x) = 0$ and $U_i(L, \sigma, x') < U_i(H, \sigma, x')$ or
\begin{equation}  \label{eq:H-br-dg}
    P(x-1; x')  + \Pr[X = x \mid X' = x']\sigma(x') < \Pr[X \leq x - 1]  + \Pr[X = x]\sigma(x').
\end{equation}
Under a pure threshold equilibrium above $x$, Equations~\ref{eq:L-br-dg} and~\ref{eq:H-br-dg} hold for $x' \leq x$ and $x' > x$ respectively.
Using these equations, we immediately observe that uninformative equilibria still exist under DG.

\begin{proposition}
    A symmetric equilibrium exists where $\sigma(x) = 1$ for all $x \in \CS$ (all agents report $L$), and where $\sigma(x) = 0$ for all $x \in \CS$ (all agents report $H$).
\end{proposition}

\begin{proof}
    Assume $\sigma(x) = 1$ for all $x \in \CS$.
    Then $U_i(L, \sigma, x) = \sum_{x' \in \CS} \Pr[X' = x' \mid X = x] - \Pr[X = x'] = 0 = U_i(H, \sigma_j, x)$, so that $\sigma$ is an equilibrium. 
    The same argument holds when $\sigma(x) = 0$ for all $x \in \CS$. 
\end{proof}

Now, we leverage the same continuous definition of threshold equilibria $\tau \in [0,m]$ and function $\PC(\tau; x)$ to characterize behavior. 
We also define an analagous, continuous definition of the prior $F$, with
\begin{equation}
    F(\tau) =
    \begin{cases}
    (\lceil \tau \rceil  - \tau) \Pr[X' \leq \lfloor \tau \rfloor] + (\tau - \lfloor \tau \rfloor ) \Pr[X' \leq \lceil \tau \rceil], & \tau \in \notin \{0,1,2, \dots,m\}, \\
    \Pr[X' \leq \tau], \tau \in \{0, 1, 2, \dots, m\}.
    \end{cases}
\end{equation}
It immediately follows by definition that $U_i(L, \sigma_{\tau}, x) = \PC(\tau; x) - F(\tau)$; again, $\PC(\tau; x) - F(\tau)$ is strictly monotone decreasing over the order of $x$, and continuous over $\tau$. 
As in our analysis of OA, we can then characterize pure and mixed threshold equilibria.

\begin{proposition} \label{prop:cont-existence-dg}
    Let Condition~\ref{cond:fosc} hold.
    Then there is a mixed equilibrium at $\tau$ if and only if $G(\tau) = F(\tau)$.
\end{proposition}

\begin{proof}
    For necessity, we note that any mixed equilibrium $\tau$ must satisfy $G(\tau) = F(\tau)$ by the indifference principle in Equation~\ref{eq:mixed-br-dg}.

    For sufficiency, assume $G(\tau) = F(\tau),$ so that $U_i(L, \sigma_{\tau}, \lceil \tau \rceil) = 0$.
    Then note for any signal $x < \lceil \tau \rceil$, by Condition~\ref{cond:fosc} $U_i(L, \sigma_{\tau}, x) = \PC(\tau; x) - F(\tau) > \PC(\tau; \lceil \tau \rceil) - F(\tau) = 0$.
    Meanwhile, for any signal $x > \lceil \tau \rceil$, $U_i(L, \sigma_{\tau}, x) = \PC(\tau; x) - F(\tau) < \PC(\tau; \lceil \tau \rceil ) - F(\tau) = 0$.
\end{proof}

\begin{proposition} \label{prop:dg-existence-pure}
    Let Condition~\ref{cond:fosc} hold.
    Then a pure threshold at $x$ is an equilibrium if and only if $P(x; x) > F(x)$ and $P(x; x+1) < F(x).$
\end{proposition}

\begin{proof}
    For sufficiency, assume $P(x; x) > \Pr[X \leq x]$ and $P(x; x + 1) <  \Pr[X \leq x].$ 
    Then by Condition~\ref{cond:fosc}, for any $x' \leq x,$ $P(x;x') \geq P(x;x) >  \Pr[X \leq x]$, and for any $x' > x,$ $P(x;x') \leq P(x;x-1) <  \Pr[X \leq x]$.
    For necessity, assume $x$ is a threshold equilibrium.
    Then by definition, for any $x' \leq x,$ $P(x;x') >  \Pr[X \leq x]$, and for any $x' > x$ (including $x+1$), $P(x;x') <  \Pr[X \leq x]$. 
\end{proof}

\paragraph{Dynamics.}
We now have the tools to study the same dynamics model as in the previous section, with $\dot \tau = \text{BR}(\tau) - \tau$.
First we show that pure equilibria are stable. 

\begin{theorem}
    Let Condition~\ref{cond:fosc} hold, and let $\tau^*$ be a symmetric pure equilibrium under DG. 
    Then $\tau^*$ is stable.
\end{theorem}

\begin{proof}
    By Proposition~\ref{prop:dg-existence-pure}, $P(\tau^*; \tau^*) > \Pr[x' \leq \tau^*]$ and $\Pr[\tau^*; \tau^* + 1] <  \Pr[x' \leq \tau^*]$.
    Now consider agents best responding to some $\tau < \tau^*$ sufficiently close to $\tau^*$, so that $\lceil \tau \rceil = \tau^*$. 
    Then 
    \begin{align*}
        U_i(L, \sigma_{\tau}, \tau^*) &= \PC(\tau; \tau^*) - F(\tau) \\
        &= (\tau^*  - \tau) (P(\lfloor \tau \rfloor; \tau^*) - \Pr[X \leq \lfloor \tau \rfloor]) + (\tau - \lfloor \tau \rfloor ) (P(\tau^*; \tau^*) - \Pr[X \leq \tau^*]).
    \end{align*}
    
    By Proposition~\ref{prop:dg-existence-pure}, $P(\tau^*; \tau^*) > \Pr[X \leq \tau^*]$; so we have $ U_i(L, \sigma_{\tau}, \tau^*) = (1 - \lambda) X + \lambda Y$ for some $Y > 0$.
    It immediately follows there exists some $0 < \lambda < 1$ such that $U_i(L, \sigma_{\tau}, \tau^*) > 0$, or equivalently some $\tau' < \tau^*$ such that for all $\tau \in [\tau', \tau^*)$, $ U_i(L, \sigma_{\tau}, \tau^*) > 0$.
  
    Thus $L$ is a better response when one is best responding to $\tau$, so in the dynamic update of the threshold, $\hat \tau > \tau$. 
    It follows $\dot \tau = \hat \tau - \tau > 0$.
    A similar argument shows that for $\tau > \tau^*$ sufficiently close to $\tau^*,$ $ U_i(L, \sigma_{\tau}, \tau^* + 1) < 0$, so that $\dot \tau = \hat \tau - \tau < 0$.
    Thus $\tau^*$ is stable.
\end{proof}

Meanwhile, to analyze the stability of mixed and uninformative equilibria, we impose the following technical condition on the signal structure.

\begin{condition} \label{cond:positive-corr}
    We assume signals indicate \emph{positive correlation}: for each signal $x \in \CS$, $\Pr[X' = x \mid X = x] > \Pr[X = x].$
\end{condition}
This condition is reasonable in discrete settings where observing a signal consistently strengthens an agent's belief that a peer will receive that signal as well.
Under such settings of local positive correlation, we can show both mixed strategy equilibria and uninformative equilibria are unstable.

\begin{theorem}
    Let Conditions~\ref{cond:fosc} and~\ref{cond:positive-corr} hold, and let $\tau^*$ be a mixed strategy under DG.
    Then $\tau^*$ is unstable. 
\end{theorem}

\begin{proof}
    Take equilibrium point $\tau,$ and consider an agent receiving signal $x = \lceil \tau \rceil$.
    There are three cases to consider:
    \begin{enumerate}
        \item $G(\tau) = F(\tau).$
        Then by Proposition~\ref{prop:cont-existence-dg}, $x$ is an equilibrium. 
        \item $G(\tau) > F(\tau).$
        Then the best response at $x$ is to report $L$, so that for any $\hat \tau \in \text{BR}(\tau)$, $\hat \tau > \tau$. 
        Thus $\dot \tau = \hat \tau - \tau > 0.$
        \item $G(\tau) < F(\tau).$
        Then the best response at $x$ is to report $H$, so that for any $\hat \tau \in \text{BR}(\tau)$, $\hat \tau < \tau$. 
        Thus $\dot \tau = \hat \tau - \tau < 0.$
    \end{enumerate}
    It follows that if $G(\tau) - F(\tau)$ is strictly increasing over $\tau$, then $\tau$ is \emph{unstable}. 
    This is equivalent to 
    \[\Pr[\lceil \tau \rceil; \lceil \tau \rceil] - \Pr[\lfloor \tau \rfloor ; \lceil \tau \rceil] > \Pr[X \leq \lceil \tau \rceil] - \Pr[X \leq \lfloor \tau \rfloor],\]
    or
    \[\Pr[X' = \lceil \tau \rceil \mid X = \lceil \tau \rceil] > \Pr[X = \lceil \tau \rceil], \]
    which holds by Condition~\ref{cond:positive-corr}.
\end{proof}

\begin{theorem}
  Let Conditions~\ref{cond:fosc} and~\ref{cond:positive-corr} hold.
  Then uninformative equilibria are unstable. 
\end{theorem}

\begin{proof}
  Let $\tau^* = 0$, and take some $\tau = \epsilon < 1$.
  Consider an agent who receives signal $1 = \lceil \tau \rceil$ best responding to $\tau$. 
    Then 
    \begin{align*}
        U_i(L, \sigma_{\tau}, 1) &= \PC(\tau; 1) - F(\tau) \\
        &= \tau (\Pr[X' \leq 1 \mid 1] - \Pr[X \leq 1]) \\
        &> 0 \qquad \text{(by Condition~\ref{cond:positive-corr}).}
    \end{align*}
    Thus agents receiving signal $1$ (and all higher signals) will prefer reporting $L$, and therefore follow a threshold at or above $\tau = 1$.
    Thus $\dot \tau = \hat \tau - \tau > 0$, and the uninformative equilibrium at $\tau = 0$ is unstable. 
    Similar logic follows if $\tau = m$.
\end{proof}

The following statement then follows immediately from topology of the dynamics:

\begin{corollary}
  Let Conditions~\ref{cond:fosc} and~\ref{cond:positive-corr} hold.
  Between any two adjacent pure threshold equilibria, there exists a mixed threshold equilibrium; and adjacent to each of the uninformative equilibria is a pure equilibrium.
\end{corollary}

\paragraph{Gaussian model.}

We again consider the discretized Gaussian model, now under DG.
First we note that Condition~\ref{cond:positive-corr} is satisfied in this case. 

\begin{proposition}
  Under the discretized Gaussian model, for any parameters $a,b > 0$, Condition~\ref{cond:positive-corr} holds.
\end{proposition}

\begin{proof}
  Condition~\ref{cond:positive-corr} holds if, for any $\tau_1, \tau_2 \in \reals$, $\tau_1 < \tau_2$ and $R = [\tau_1, \tau_2]$,
  $\Pr[(X, X') \in R \times R] > \Pr[X \in R]^2$. 
  Consider the joint probability as a function of correlation coefficient $\rho = \frac{a^2}{a^2 + b^2}$,  $H(\rho) = \Pr[(X, X') \in R \times R]$.
  Continuity of $H$ for $0 < \rho < 1$ immediately follows by continuity of the Gaussian.
  It then suffices for us to show that $H$ is strictly increasing at any $\rho > 0$, since for $\rho = 0$ we have $\Pr[(X, X') \in R \times R] = \Pr[X \in R]^2$.

  Let $\sigma^2 = a^2 + b^2$.
  Then careful analysis of the bivariate Normal~\citep[Equation 4]{drezner1990computation} allows us to express the derivative as $\frac{\partial H}{\partial \rho} = \phi_2(h_1, h_1; \rho) + \phi_2(h_2, h_2; \rho) - 2\phi_2(h_1, h_2; \rho)$ for $\phi_2(x, x'; \rho)$ the standard bivariate Normal distribution, $h_1 = \tau_1 / \sigma$, and $h_2 = \tau_2 / \sigma$.
  Let $A = \phi_2(h_1, h_1; \rho)$, $B = \phi_2(h_2, h_2; \rho)$ and $C = \phi_2(h_1, h_2; \rho)$.
  Because the bivariate Normal distribution satisfies Total Positivity of Order 2 (TP2) for any $0 < \rho < 1$, we know $A B > C^2$.
  Meanwhile, by the AM-GM inequality, $A + B \geq 2 \sqrt{A B}$.
  Put together, then, $A + B > 2 C$, and the statement holds.
\end{proof}

In general, based on our results in the previous section, the following pattern emerges: the uninformative equilibria remain unstable, so there exists at least one stable equilibrium; and if there are two (stable) pure equilibria at some signal $x \in \CS$, there is a mixed equilibrium between them.

Similar to OA with finite signal spaces, we find that unless individual agent noise is sufficiently low, if $m$ is even there is still a \emph{single} equilibrium at $\tau = m/2$; and if $m$ is odd, there are only two pure equilibria at $\tau = \lfloor m/2 \rfloor$ and $\lceil m/2 \rceil$.
As individual signal precision grows, we do find that more stable, pure equilibria emerge beyond $0$, but only at large precision values.
In particular, the bifurcation point at which the number of stable equilibria reaches its maximum amount for a fixed $m$ \emph{increases} over $m$, so that we require higher and higher precision of individual signals in order to achieve more flexibility. 
(See Figure~\ref{fig:bifurcation-finite-dg} for visualization.)

\begin{figure}[htbp]
    \centering
    \begin{subfigure}[t]{0.32\textwidth}
        \includegraphics[width=\textwidth]{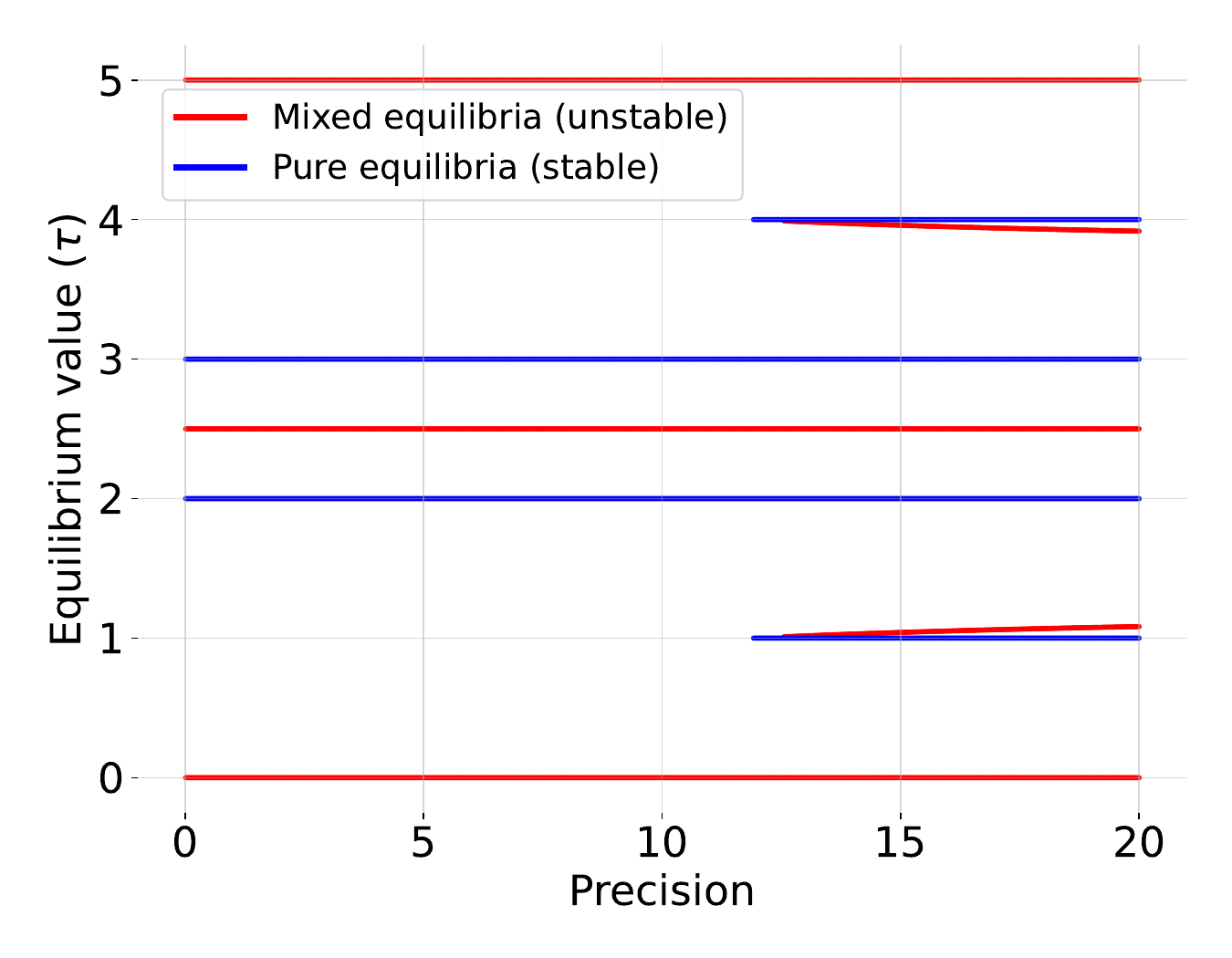}
    \end{subfigure}
    \hfill
    \begin{subfigure}[t]{0.32\textwidth}
        \includegraphics[width=\textwidth]{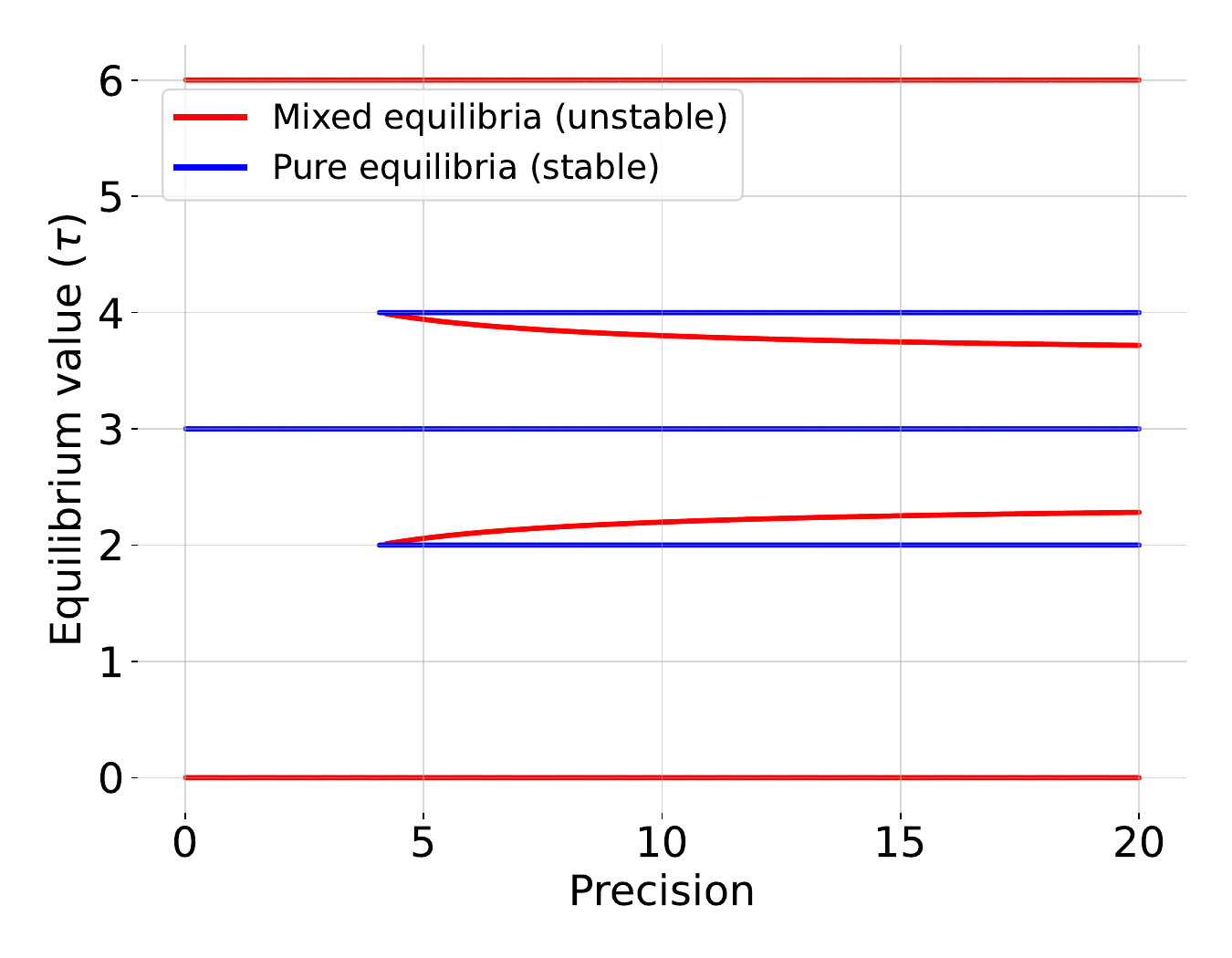}
    \end{subfigure}
    \hfill
    \begin{subfigure}[t]{0.32\textwidth}
        \includegraphics[width=\textwidth]{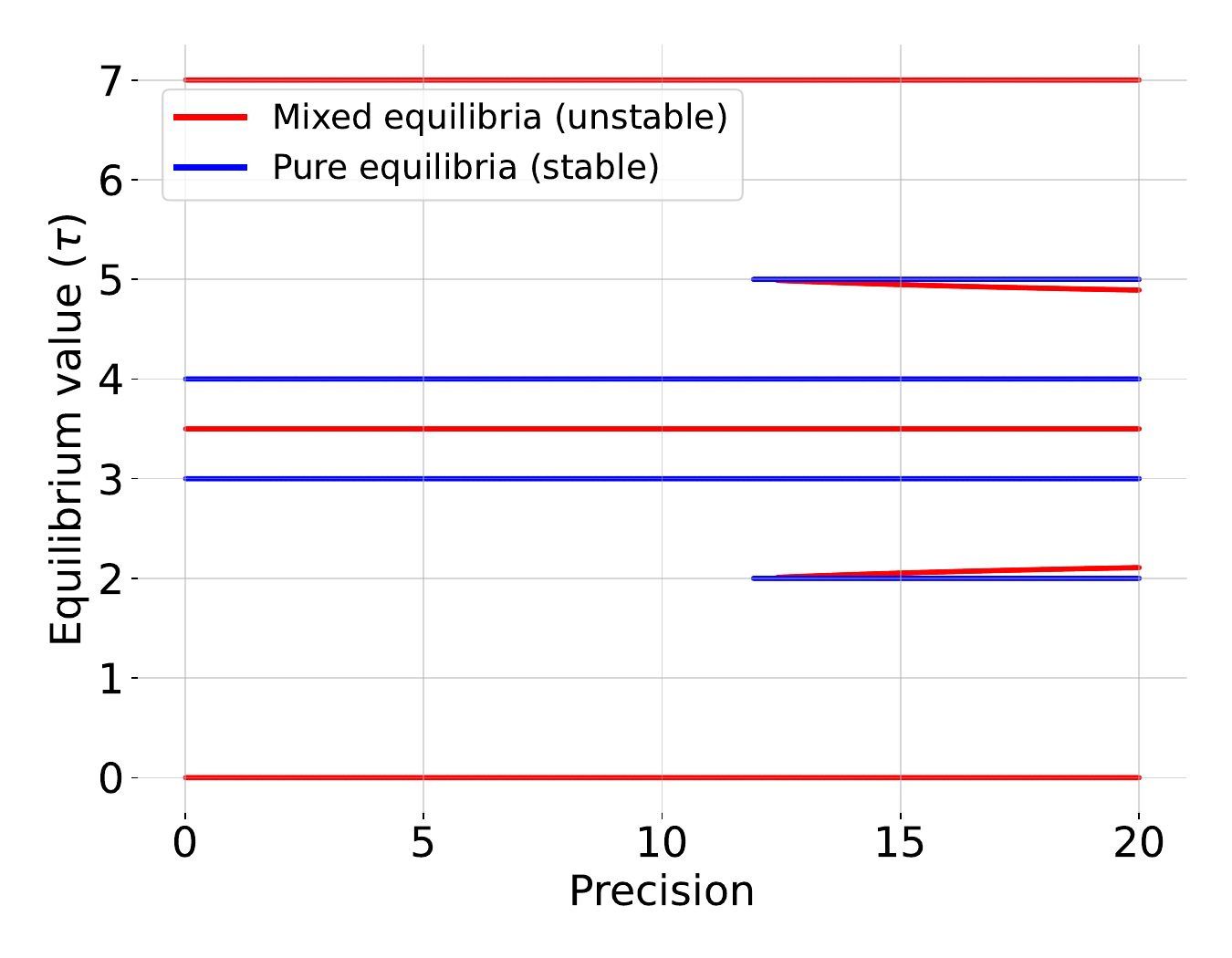}
    \end{subfigure}

    \caption{Bifurcation diagram of equilibria under DG for (left to right) $m=5,6$ and $7$ finite ordered signals. 
    As our theory shows, the red lines indicate unstable mixed equilibria, while the blue indicate stable pure equilibria. 
    We note that within reasonable precision values, the point at which there are $m-1$ possible stable equilibria becomes increasingly large as $m$ grows. 
    Thus the space of regimes where total flexibility is achievable (i.e., there is a stable threshold between any choice of bins) decreases as the signal space becomes finer. 
    }
    \label{fig:bifurcation-finite-dg}
\end{figure}

\end{document}